\newcommand{\di}{\displaystyle}
\theoremstyle{plain}
\newtheorem{theorem}{Theorem}[section]
\newtheorem{lemma}[theorem]{Lemma}
\newtheorem{claim}[theorem]{Claim}
\newtheorem{prop}[theorem]{Proposition}
\newtheorem{cor}[theorem]{Corollary}
\newcommand{\norm}[1]{\left\lVert#1\right\rVert}
\newtheorem{remark}[theorem]{Remark}
\newcommand{\td}{\mbox{d}}
\newcommand{\bx}{\mathbf{x}}
\newcommand{\ti}{{\hat{i}}}
\newcommand{\tj}{{\hat{j}}}
\numberwithin{equation}{section}
\DeclarePairedDelimiter\floor{\lfloor}{\rfloor}
\begin{document}

\title[Slow decay of waves in gravitational solitons]
{Slow decay of waves in gravitational solitons}
\author[Sharmila Gunasekaran]{Sharmila Gunasekaran}
\address{Department of Mathematics and Statistics, Memorial University of Newfoundland,
	St. John's, NL A1C5S7, Canada}
\email{sdgg82@mun.ca}
\author[Hari K. Kunduri]{Hari K. Kunduri}		
\address{Department of Mathematics and Statistics, Memorial University of Newfoundland,
	St. John's, NL A1C5S7, Canada}
\email{hkkunduri@mun.ca}

\begin{abstract} We consider a family of globally stationary (horizonless), asymptotically flat solutions of five-dimensional supergravity. We prove that massless linear scalar waves in such soliton spacetimes cannot have a uniform decay rate faster than inverse logarithmically in time. This slow decay can be attributed to the stable trapping of null geodesics. Our proof uses the construction of quasimodes which are time periodic approximate solutions to the wave equation. The proof is based on previous work to prove an analogous result in $\mbox{Kerr-AdS}_4$ black holes \cite{holzegel:2013kna}. We remark that this slow decay is suggestive of an instability at the nonlinear level. 
\end{abstract}

\maketitle
%\tableofcontents

\section{Introduction}
Gravitational solitons are globally stationary, asymptotically flat spacetimes with positive energy.  A classic result of Lichnerowicz \cite{Lich} demonstrates that there are no such vacuum solutions in four dimensions.  The result can be obtained more directly from the modern viewpoint by an application of the positive mass theorem along with Stokes' theorem and identities related to the stationary Killing field. Intuitively, the result states that an isolated self gravitating system in equilibrium with positive energy must contain a black hole~\cite{Gibbons:1997cc}. The result extends to Einstein-Maxwell theory and vacuum general relativity in dimensions greater than four.  However, within the supergravity theories that govern the low-energy  dynamics in string theory,  gravitational solitons arise naturally (we note that \emph{static} solitons can be ruled out in pure Einstein-Maxwell theory in $D > 4$ \cite{Kunduri:2017htl}, and there are no known stationary examples).  In fact several large families of such supergravity solutions have been obtained explicitly (e.g. see the review \cite{Bena:2007kg}). The solitons obtained in these constructions are typically characterized by their mass, angular momenta, global electric charges, and non-trivial spacetime topology.  They have received considerable interest, as it has been suggested that they represent classical `microstate geometries'  corresponding to black holes carrying the same conserved charges, thus providing a resolution to the information paradox~\cite{Mathur:2005zp}.

Quite independently of these considerations, gravitational solitons possess a number of novel features that distinguish them from black holes.  In particular, certain supersymmetric examples contain `evanescent ergosurfaces', which are timelike hypersurfaces upon which the stationary Killing field can become null \cite{Gibbons:2013tqa}. It has been proved that such spacetimes suffer from nonlinear instabilities~\cite{Eperon:2016cdd, Keir:2016azt} and exhibit a certain kind of linear instability \cite{Keir:2018hnv}.  On the other hand, soliton spacetimes satisfy a mass variation formula which is analogous to the familiar first law of black hole mechanics~\cite{Kunduri:2013vka}. Moreover, solutions have been explicitly constructed that physically correspond to bound state configurations of black holes and solitons (i.e. they have 2-cycles in the domain of outer communication)~\cite{Kunduri:2014iga, Breunholder:2017ubu}. Somewhat surprisingly, these solutions have been shown to lead to a continuous failure of black hole uniqueness in higher dimensions even in the supersymmetric setting~\cite{Horowitz:2017fyg}.

A natural question to consider is whether these globally stationary solutions are actually stable in some precise sense.  There is, of course, presently a rich body of results concerning the analogous problem for stationary black holes. This stability problem can be posed at increasing levels of complexity.  As is well known, the  Einstein equations in a suitable gauge reduce to the following schematic form, 
\begin{align}
\Box_g g_{\mu \nu} = Q_{\mu \nu}(g,\partial g) + T_{\mu \nu}
\label{ee1}
\end{align}
where $Q$ is quadratic in $\partial g$. One of the important questions concerning explicit solutions to \eqref{ee1} is the analysis of their nonlinear stability in a similar vein as the groundbreaking work of Christodoulou-Klainerman \cite{Christodoulou:1993uv} \footnote{Alternate proofs for this nonlinear stability
	result have been obtained in \cite{lindblad:2004ue} and \cite{hintz:2017xxu}.}. In this work, it was made clear that perturbations  propagate as waves. A natural associated problem to consider is the coupled set of equations governing gravitational perturbations, namely those obtained by linearizing \eqref{ee1} about an explicit solution.  Hence the equation for a single massless scalar field $\Phi$, in a fixed background $(M,g)$, is a good starting point:
\begin{equation}
\Box_g \Phi = 0.
\label{lwave_eq}
\end{equation} 
Though \eqref{lwave_eq} is the simplest version of  the gravitational perturbation equations, it still preserves many geometric features of the spacetime through the metric, $g$. Hence understanding the properties of solutions to \eqref{lwave_eq} is a useful precursor to the problem of nonlinear stability in the spacetime.
The study of linear scalar waves in spacetimes has a well established history; \cite{Dafermos:2008en,dafermos:2010hd,Andersson:2016hmv,Finster:2008bg,Finster:2018zfe} are essential reviews on the subject. The study of linear wave equations on explicit stationary solutions has also seen remarkable advancements for spacetimes with other asymptotics and dimensions greater than four. We present a non-exhaustive review below with a marked focus on the methods and results most pertinent to the present work. Our work falls under the domain of stability results in stationary asymptotically flat backgrounds in five spacetime dimensions. 

In the realm of stationary asymptotically flat black hole spacetimes, two central unresolved problems are to confirm or disprove the nonlinear stability of the Schwarzschild and Kerr solutions. The initial investigations into stability were focused on mode analysis which confirms the absence of  certain exponentially growing modes (in the subextremal case for Kerr)~\cite{Whiting:1988vc,Regge:1957td}. However these results do not address any boundedness or decay of perturbations. The first step in this direction was the proof of boundedness of scalar waves on Schwarzschild spacetime by Kay--Wald \cite{Kaywald87,Wald79} with stronger results subsequently obtained using more universal and robust techniques \cite{Dafermos:2008en,Blue:2003pj,Blue:2003si,Blue:2006nj,2000math.ph2030L,Dafermos:2003yw,2006CMaPh.268..481B}. The black hole case presents a number of challenges, most notably the  degeneracy of energy at the horizon, the trapping of null geodesics \cite{Sbierski:2013mva} and superradiance. These problems have been addressed with significant progress in quantitative decay rates \cite{Dafermos:2007ab,Dafermos:2009uq,dafermos:2010hb,Shlapentokh-Rothman:2013hza}. These efforts culminated in the proof for decay of linear waves in sub-extremal Kerr spacetime by Dafermos--Rodnianski--Shlapentokh-Rothman \cite{Dafermos:2014cua}  (see also \cite{2008arXiv0810.5766T,Andersson:2009jr,Finster:2005dg}). In  contrast, extremal black holes are affected by an instability discovered by Aretakis (non-decay along the horizon) which also affects long-time decay as discussed in \cite{Aretakis:2018dzy, Angelopoulos:2018uwb, Aretakis:2012ei}. This also implies that the extremal Kerr solution is unstable to linearized gravitational perturbations as shown by Lucietti--Reall in \cite{Lucietti:2012sf}. For the Schwarzschild case, linear stability under the full set of gravitational perturbations (i.e. the linearization of \eqref{ee1})  was proved by Dafermos--Holzegel--Rodnianski \cite{Dafermos:2016uzj} (see also \cite{andersson2019stability}). It is now known due to Klainerman--Szeftel that Schwarzschild is \emph{nonlinearly} stable to the class of polarized perturbations~\cite{Klainerman:2017nrb}. See  \cite{Dafermostalk2019,Holzegeltalk2020} for the recent annoucement of the full finite-codimension non-linear asymptotic stability of the Schwarzschild family. The authors of~\cite{Dafermos:2016uzj} have further established boundedness and polynomial decay for the spin-2 Teukolsky equation on the Kerr spacetime, which is required to prove the full linearized stability of Kerr to gravitational perturbations~\cite{Dafermos:2017yrz}. Hafner--Hintz--Vasy in \cite{Hafner:2019kov} proved linear stability for slowly rotating Kerr black holes using spectral methods. 

One may consider stability problems that are asymptotically Anti-de Sitter (AdS) or de Sitter (dS) which are the two other maximally symmetric constant curvature backgrounds. In particular, vacuum  AdS, which has a timelike boundary, has been conjectured to be unstable under perturbations of its initial data leading to the formation of a black hole. Numerical work strongly supporting this claim was given in the seminal work of Bizon--Rostworowski \cite{bizon:2011gg}. The rigorous results by Moschidis \cite{moschidis:2017llu,moschidis:2018ruk} give further strong evidence for the instability. Recent progress in this problem was announced in \cite{Moschidistalk2020}. The decay of Klein Gordon fields in AdS was investigated in \cite{Holzegel:2012wt,Warnick:2012fi,Warnick:2013hba} and the global dynamics of solutions to the massive wave equation in AdS black hole spacetimes have been investigated in  \cite{Holzegel:2012wt,holzegel:2011uu}. In particular as we discuss below, they exhibit a slow decay rate.  Finally, on general asymptotically dS spacetimes, waves decay exponentially fast, in contrast with the asymptotically flat case where the decay is at most polynomial. For results on the nonlinear stability of the dS spacetime see \cite{Anderson:2004ir,friedrich1986} with extensions in \cite{Ringstrm2008FutureSO,Rodnianski:2009de}.  Remarkably,  the nonlinear stability of slowly rotating Kerr-dS spacetime has been proved by Hintz--Vasy in \cite{hintz:2016gwb} and extended by Hintz in \cite{Hintz:2016jak}.

The investigation of  the stability for higher-dimensional black holes has also received much recent attention. The problem is motivated both for intrinsic mathematical reasons and by connections to high energy physics (see the review \cite{Emparan:2008eg}).  Unsurprisingly, the presence of extra spatial dimensions allows for various novel geometric and topological features, such as the gravitational solitons discussed here and black holes with non-spherical topology. An important rigorous result is that of Schlue, who proved robust quantitative energy decay estimates for solutions of~\eqref{lwave_eq} in the Schwarzschild family in $D>4$ spacetime dimensions (see also \cite{Dafermos:2005nh, Holzegel:2008ve, laul2010localized,laul:2014gla}).  There is a rather vast literature on mode instabilities associated to rotating Myers-Perry black holes (the natural generalization of the Kerr solution) that arise at sufficiently high angular momenta, as well as numerical analyses on the dynamical evolution \cite{Figueras:2017zwa,Dias:2010eu,Bantilan:2019bvf}.  Like the Kerr solution, in the Myers-Perry background, \eqref{lwave_eq} admits separable solutions, which is  particularly useful in the above studies.  The black ring family of solutions  that describe rotating, asymptotically flat black holes with $S^1 \times S^2$ topology \cite{Emparan:2001wn,Pomeransky:2006bd}, in contrast, are not presented in coordinates which admit a similar separation of variables.  This has impeded progress on stability outside of robust numerical strategies \cite{Figueras:2015hkb}. Nonetheless, Benomio \cite{benomio:2018ivy} has recently proved that the uniform decay rate is slow for generic solutions to \eqref{lwave_eq}.  This provides strong evidence that black rings must be nonlinearly unstable. Quite recently, the nonlinear stability of higher dimensional spacetimes that arise in supersymmetric compactifications of string theory was investigated in \cite{Andersson:2020fuz,Wyatt:2017tow}.

One of the main geometric obstructions to proving a strong decay statement (i.e., fast decay) for solutions to \eqref{lwave_eq} is the phenomena of \emph{trapping} - the confinement of null geodesics in a bounded region of space.  The rates of decay of solutions to \eqref{lwave_eq} are characterized as \emph{fast} or \emph{slow} depending on their applicability in nonlinear problems. Polynomial decay is robust enough to give hope for nonlinear stability whereas logarithmic decay is not and is hence considered slow. A well-known example of trapping occurs at the photon sphere ($r = 3M$) of Schwarzschild spacetime. Here, initially trapped geodesics are not trapped when perturbed and this structure is characterized as \emph{unstable trapping}. The trapping in Kerr black holes is another such example.  Since the propagation of high-frequency waves can be approximated by null geodesics, intuitively one expects energy to clump in a trapped region, leading to slower decay.  When trapping is the only obstruction, how strongly the geodesics are trapped is a factor that ultimately dictates whether there is slow or fast decay. The unstable trapping in the Schwarzschild solution roughly leads to sufficiently fast decay.  In contrast, the structure of trapping in the soliton geometry to be considered here is \emph{stable}. 

The question of whether waves decay at all was answered in the affirmative for a general class of stationary asymptotically flat spacetimes due to a powerful result of Moschidis \cite{Moschidis:2015wya}. The general decay result he established is restated here : 
\begin{theorem}[Moschidis, 2015] \label{Moschidis}	
	\label{logupb}
	Let
	$({\mathcal{M}}^{d+1}, g) $, $d \geq 3$, be a globally hyperbolic spacetime, which is stationary and asymptotically flat, and
	which can possibly contain black holes with a non-degenerate 	horizon and a small ergoregion. Moreover, suppose that an energy boundedness statement is true for solutions $\Phi$ of the linear wave equation \eqref{lwave_eq} on the domain of outer communications ${\mathcal{D}}$ of the spacetime. Then the local energy of $\Phi$ on ${\mathcal{D}}$ decays at least with a logarithmic rate :
	\begin{equation}\label{upperbound}
	E_{loc}[\Phi](t) \leq C_m \frac{1}{\{\log(2 + t) \}^{2m}} E^m_w[\Phi](0)
	\end{equation}
	where $t$ is a suitable time function on ${\mathcal{D}}$ and $E^m_w[\Phi](0)$ is an initial energy based on the first $m$ derivatives of $\Phi $.
\end{theorem}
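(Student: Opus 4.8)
The plan is to follow the strategy of Moschidis \cite{Moschidis:2015wya}: convert the decay statement into a frequency-domain \emph{resolvent estimate with exponential loss}, establish that estimate by Carleman-type arguments in the asymptotically flat end and near the horizon, and then trade the exponential loss for logarithmic-in-time decay by a contour-deformation (Tauberian) argument. The crucial feature is that \emph{no} non-trapping assumption is made anywhere: the exponential loss in the resolvent bound is crude enough to tolerate arbitrarily bad (in particular stably) trapped null geodesics, which is exactly why the conclusion is only logarithmic. Concretely, I would first use stationarity: fix the Killing time function $t$ and a spatial slice $\Sigma$, and for a solution $\Phi$ of \eqref{lwave_eq} introduce a temporal cutoff $\chi$ and set $\Phi_\chi = \chi\Phi$, so that $\Box_g\Phi_\chi = F_\chi$ with $F_\chi$ supported where $\chi'\neq 0$ and controlled pointwise by the energy of $\Phi$ there. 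Fourier transforming in $t$, the component $\psi_\omega := \widehat{\Phi_\chi}(\omega,\cdot)$ solves a stationary equation $P(\omega)\psi_\omega = \widehat{F_\chi}(\omega,\cdot)$ on $\Sigma$, with $P(\omega)$ depending polynomially on $\omega$.

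The heart of the argument is to prove that for every real $\omega$
\[
\norm{\psi_\omega}_{H^1_{\mathrm{loc}}(\Sigma)} \;\le\; A(\omega)\Big(\norm{\widehat{F_\chi}(\omega,\cdot)}_{L^2} + (\text{horizon data})\Big),\qquad A(\omega)\lesssim e^{C(1+|\omega|)}.
\]
In the asymptotically flat region this is a quantitative unique-continuation / limiting-absorption estimate: one conjugates $P(\omega)$ by a Carleman weight that degenerates suitably as $r\to\infty$ and uses the decay of the metric to flatness to absorb the error terms of the conjugated operator, thereby controlling $\psi_\omega$ on a large ball from its behaviour near infinity, with constants exponential in $|\omega|$. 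Near a non-degenerate horizon, a red-shift-type Carleman estimate plays the analogous role, the smallness of the ergoregion being used both to retain the assumed energy boundedness and to absorb the superradiant boundary term. The bulk equation then propagates these estimates inward, and the exponential factor is dictated by the size of the Carleman weights, which scale linearly with $|\omega|$. Additional care is needed at low frequency, $\omega\to 0$, where $P(0)$ is handled by the same machinery combined with elliptic estimates.

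Finally, I would convert the exponential loss into logarithmic decay. The hypothesized energy boundedness controls $F_\chi$ and the horizon data by the initial energy and, applied to the commuted fields $\partial_t^k\Phi$, shows their Fourier data gain a factor $|\omega|^{-k}$ relative to $E^k_w[\Phi](0)$; moreover boundedness/causality makes the resolvent holomorphic in the upper half-plane. Combining holomorphy with the bound $A(\omega)\lesssim e^{C|\omega|}$ on the real axis and with polynomial bounds for $|\mathrm{Im}\,\omega|$ large, a Phragm\'en--Lindel\"of argument permits shifting the Fourier-inversion contour for $\Phi(t,\cdot)$ downward by an amount $\sim 1/\log(2+t)$, producing a gain $e^{-c\,t/\log(2+t)}$; balancing this against the polynomial gain $|\omega|^{-2m}$ from $m$ commutations yields precisely the rate $(\log(2+t))^{-2m}$ of \eqref{upperbound}, and restriction to a compact region gives $E_{\mathrm{loc}}[\Phi](t)$.

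The step I expect to be the main obstacle is the resolvent estimate of the second paragraph: obtaining exponential (rather than worse) loss, uniformly in $\omega$ including the limit $\omega\to0$, \emph{without any non-trapping hypothesis}, while simultaneously accommodating a non-degenerate horizon with a small ergoregion. The delicate technical point is the Carleman estimate in the asymptotically flat end — the choice of weight, the precise dependence of the error terms on the rate of approach to flatness, and verifying that the resulting constants are genuinely of size $e^{C|\omega|}$ — since this is what ultimately pins down the logarithmic rate.
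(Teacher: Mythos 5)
First, a point of orientation: the paper does not prove this statement at all --- it is Moschidis' theorem, restated verbatim from \cite{Moschidis:2015wya} purely as background, and it is used later only as a black box (to conclude sharpness of the logarithmic rate once the lower bound of Theorem \ref{key_theorem} is established). So there is no internal proof to compare yours against; the relevant comparison is with Moschidis' own argument.

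Measured against that, your outline correctly identifies the Burq--Moschidis philosophy (frequency-localized estimates with exponential loss in frequency, no non-trapping hypothesis, traded for a logarithmic rate), but as written it is a program rather than a proof, and the program as stated has a genuine gap at its center. The entire content of the theorem is the claimed estimate $A(\omega)\lesssim e^{C(1+|\omega|)}$, uniformly in $\omega$ including $\omega\to 0$, in the presence of a non-degenerate horizon and a (small) ergoregion; you assert it and flag it as the main obstacle, but give no argument for why the Carleman weights close in the black-hole/ergoregion setting, which is precisely the several-chapters-long core of \cite{Moschidis:2015wya}. Moreover, the final step of your outline is not available in this generality: shifting the Fourier-inversion contour via Phragm\'en--Lindel\"of presupposes holomorphy (or at least a quantitative resonance-free region) of the resolvent below the real axis, and with superradiance and a horizon one does not have this for free --- indeed the assumed energy boundedness only gives control on the real axis. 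Moschidis avoids meromorphic continuation altogether: he decomposes the solution into real frequency ranges, proves integrated local energy decay estimates with frequency-dependent (exponential) loss in the bounded-frequency regime and standard estimates at high and very low frequencies, and then combines these in physical space to extract the $(\log(2+t))^{-2m}$ rate after commuting with $\partial_t$. So your route would need either a substitute for the contour-deformation step or a proof of the requisite resonance-free strip, neither of which is sketched; with that replaced by the real-frequency ILED-type argument, your outline becomes an accurate summary of the cited proof, but not an independent one.
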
 
\noindent We note in particular that the above result establishes an upper bound on decay for solutions to \eqref{lwave_eq} for a wide class of spacetimes (that is, it is a statement asserting that waves must decay at least inverse logarithmically). 

The results in this paper, following closely the strategy of  \cite{holzegel:2013kna,Keir:2016azt,Keir:2014oka,benomio:2018ivy} follow from an investigation of slow decay rates for certain stationary spacetimes. In  these spacetimes, there are families of trapped null geodesics that have the property that perturbed null geodesics will still be trapped. Hence this strcuture of trapping is \emph{stable}. Examples of spacetimes exhibiting stable trapping are Kerr-AdS$_4$ black holes \cite{holzegel:2013kna}, ultracompact neutron stars \cite{Keir:2014oka}, black strings and black rings (mentioned above) \cite{benomio:2018ivy} and the supersymmetric \emph{microstate geometries} of \cite{Keir:2016azt}. We recall that microstates are stationary, asymptotically flat horizonless solutions of supergravity, and hence in our terminology above, are examples of gravitational solitons. In fact, in contrast to the non-supersymmetric solitons we consider, the solutions  investigated in \cite{Keir:2016azt}  possess either an ergoregion or evanescent ergosurface.  Physically, the mechanism of stable trapping at work in Kerr-AdS$_4$ black holes is the combined effect of lack of dispersion at the asymptotic end and the usual unstable trapping outside the horizon \cite{holzegel:2011uu}, whereas in the case of ultracompact neutron stars and microstates,  stable trapping is a result of the coupling between the lack of horizon and trapping.  The mechanism behind stable trapping for black rings appears related to the topology of the domain of outer communication. The slow decay result pertaining to stable trapping in supersymmetric solitons  proved in \cite{Keir:2016azt} is clearly most relevant to our problem, and is restated here:
\begin{theorem}[Keir, 2017]
	Let $\Phi$ be a solution to the wave equation \eqref{lwave_eq} on a two-charge geometry. Let $\Omega$ be an open set containing the trapped region. Then for all $k\geq 1$, there exist positive constants $C_k$ such that, 
	\begin{equation}
	\limsup_{t \to \infty} \sup_{\Phi \neq 0} \left( \frac{\log (2 + t)}{\log \log( (2 + t))} \right)^{2k} \frac{E_\Omega[\Phi](t)} {E_{k+1} [ \Phi](0)} \geq C_k. 
	\end{equation}
\end{theorem}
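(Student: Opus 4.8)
The plan is to follow the quasimode strategy of \cite{holzegel:2013kna}: construct a sequence of regular, time-periodic approximate solutions of \eqref{lwave_eq} that are sharply localised near the stably trapped null geodesics, use them as Cauchy data, show by a Duhamel/energy argument that the resulting solution retains an $O(1)$ fraction of its energy inside $\Omega$ for a time exponentially long in the quasimode frequency, and then optimise the frequency against $t$.

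First I would exploit the symmetries of the two-charge geometry --- stationarity, rotational Killing fields, and a hidden symmetry (a Killing tensor rendering the geodesic flow integrable) --- to separate \eqref{lwave_eq}, or at least to reduce it to a radial ODE after Fourier analysis in the manifest Killing directions. Writing schematically $\Phi = e^{-i\omega t}\,e^{i m\cdot\psi}\,\Theta_{\omega,\lambda,m}\,u_{\omega,\lambda,m}(r)$, with $\lambda$ the angular separation constant, the radial equation can be put, via a Liouville transformation, in Schr\"odinger form
\begin{equation}
-\frac{d^{2}u}{dx^{2}} + V_{\omega,\lambda,m}(x)\,u = 0 ,
\end{equation}
where $x$ is a Regge--Wheeler-type coordinate on the domain of outer communication, with $x\to\infty$ at the asymptotically flat end and $x$ bounded, subject to a regular endpoint condition, where the geometry caps off. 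The structural fact to establish is that in the semiclassical regime $\omega\to\infty$ with $m/\omega$ and $\lambda/\omega^{2}$ frozen near the data of a stably trapped null geodesic, $V_{\omega,\lambda,m}$ possesses a genuine potential well: a classically allowed interval $\{V<0\}$ flanked on both sides by classically forbidden regions $\{V>0\}$ --- a true potential barrier separating the well from infinity, and a centrifugal/geometric wall at the regular centre. That there is a barrier on \emph{each} side, in contrast with the single photon-sphere barrier of Schwarzschild or Kerr, is precisely the analytic signature of \emph{stable} trapping, and this is the step where the structure of the two-charge geometry --- absence of a horizon together with the topology of the domain of outer communication --- is used.

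Next, a WKB / Bohr--Sommerfeld construction inside the well produces, along a discrete increasing sequence $\omega_{n}\to\infty$, a compactly supported (or exponentially decaying, then cut off) profile $u_{n}$ solving the radial equation up to an error supported in the barriers, of size $\varepsilon_{n}\lesssim e^{-cA(\omega_{n})}$, with $A(\omega_{n})$ the Agmon (tunnelling) length of the barriers. A crude bound $A(\omega_{n})\gtrsim\omega_{n}$ already recovers the inverse-logarithmic rate of Theorem~\ref{Moschidis}; the sharper rate asserted here comes from a careful analysis of how slowly the barriers degenerate as $\omega_{n}\to\infty$, yielding $A(\omega_{n})\gtrsim\omega_{n}\log\omega_{n}$ and hence $\varepsilon_{n}\lesssim e^{-c\omega_{n}\log\omega_{n}}$. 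Assembling $\Psi_{n}=e^{-i\omega_{n}t}\,e^{i m\cdot\psi}\,\Theta\,u_{n}(r)$ and invoking an a priori energy estimate for \eqref{lwave_eq} on the background (valid, with at most polynomial growth in $t$, on these spacetimes), the solution $\Phi$ with data $\Psi_{n}|_{t=0}$ obeys $\big\|\Phi(t)-\Psi_{n}(t)\big\|_{\mathrm{energy}}\lesssim t\,\varepsilon_{n}$, so for $t\le\delta\,\varepsilon_{n}^{-1}$ one has $E_{\Omega}[\Phi](t)\gtrsim E_{\Omega}[\Phi](0)$, while frequency localisation at $\omega_{n}$ gives $E_{k+1}[\Phi](0)\lesssim\omega_{n}^{2k}\,E_{\Omega}[\Phi](0)$. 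Choosing $t=t_{n}\sim\varepsilon_{n}^{-1}$, which forces $\omega_{n}\sim\log(2+t_{n})/\log\log(2+t_{n})$, and combining the two estimates gives
\begin{equation}
\Big(\tfrac{\log(2+t_{n})}{\log\log(2+t_{n})}\Big)^{2k}\,\frac{E_{\Omega}[\Phi](t_{n})}{E_{k+1}[\Phi](0)} \;\gtrsim\; 1
\end{equation}
along $t_{n}\to\infty$, which is the assertion.

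The hard part will be the interaction of the potential-well structure with the sharp tunnelling estimate: one must (i) confirm from the integrable geodesic flow that stably trapped null geodesics exist and persist under small perturbations; (ii) control $V_{\omega,\lambda,m}$ uniformly in the joint semiclassical limit, including the behaviour of the forbidden regions near infinity and near the regular centre; and (iii) pin down the precise degeneration rate of the Agmon length --- the distinction between $e^{-c\omega_{n}}$ and $e^{-c\omega_{n}\log\omega_{n}}$ --- which is exactly what upgrades the bare logarithmic rate to the stated $\log t/\log\log t$ rate. The remaining ingredients --- regularity of $\Psi_{n}$ across the centre, smallness of the cutoff error, and the energy estimate needed both for the Duhamel step and for consistency with Theorem~\ref{Moschidis} --- are technical but follow established patterns.
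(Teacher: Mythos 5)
First, a point of orientation: the paper does not prove this statement at all. It is Keir's theorem, restated verbatim from \cite{Keir:2016azt} as background; what the paper actually proves (in \S\ref{sepvar}--\S\ref{sec:lowerbound}) is the analogous Theorem \ref{key_theorem} for its non-supersymmetric solitons, with the weaker factor $(\log(2+t))^{2k}$. Your overall architecture --- separation of variables, a radial Schr\"odinger operator whose well encodes the stable trapping, Dirichlet quasimodes, Agmon estimates in the forbidden region, a cutoff, and a Duhamel step optimising frequency against time --- is exactly that blueprint, and your closing arithmetic is right: an error of size $e^{-c\,\omega_n\log\omega_n}$ forces $\omega_n\sim\log(2+t_n)/\log\log(2+t_n)$ and hence the stated bound, just as an error $e^{-c\,\omega_n}$ gives only the plain-log statement of Theorem \ref{key_theorem}.

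The genuine gap is the step that is supposed to produce the exponent $\omega_n\log\omega_n$. In the regime you fix --- $\omega\to\infty$ with $m/\omega$ and $\lambda/\omega^2$ frozen at the data of a stably trapped geodesic --- the rescaled potential is frequency-independent with a \emph{compact} barrier, the Agmon length is linear in $\omega_n$, and the construction yields only $e^{-c\,\omega_n}$; your claim that ``a careful analysis of how slowly the barriers degenerate'' upgrades this to $\omega_n\log\omega_n$ is asserted rather than derived, and in the frozen-ratio regime it is simply not true (this is precisely why the present paper and \cite{holzegel:2013kna} obtain only $(\log t)^{2k}$). The extra logarithm in the two-charge case is tied to a feature your sketch omits: the stably trapped null geodesics lie on the evanescent ergosurface and have \emph{zero} energy with respect to $\partial_t$, so the relevant quasimodes have time frequency small compared with the angular frequency $m$ (the ratio $m/\omega$ is not frozen at a finite value --- it degenerates). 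The classically forbidden region is then not a fixed barrier but stretches out to a turning point receding like $m/\omega_m$, where the potential is dominated by the centrifugal term of order $m^2/r^2$, and the Agmon integral over this growing region is of order $m\log m$; indeed the form of the stated rate forces a tunnelling exponent of this size, since derivatives pull down factors of $m$ in $E_{k+1}[\Phi](0)$. Two further points need care on this specific background: the a priori energy bound your Duhamel step invokes is itself nontrivial in \cite{Keir:2016azt}, because the $\partial_t$-energy degenerates on the evanescent ergosurface (unlike the soliton of \S\ref{sec:boundedness}, where $\partial_t$ is globally timelike and boundedness is immediate); and, as a minor slip, the crude bound $A(\omega_n)\gtrsim\omega_n$ does not ``recover the rate of Theorem \ref{Moschidis}'' --- that theorem is the upper bound --- it recovers the plain-log \emph{lower} bound, i.e.\ sharpness.
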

\noindent We remark here that these solutions are the `closest  analogue' to extremal black holes for horizonless solutions bearing in mind that there is no notion of surface gravity here. There are some similarities to extremal black holes with regards to the kind of instability these solutions exhibit as noted by Keir in \cite{Keir:2018hnv} - namely, that the solutions to the linear wave equation in these backgrounds have a quantity that is non-decaying on a particular surface, 
though in this case the surface is an evanescent hypersurface which is timelike.

In contrast to the family of solutions studied in~\cite{Keir:2016azt}, the soliton spacetimes we examine are non-supersymmetric, devoid of an ergoregion or evanescent ergosurface,  and the energy of solutions to the massless wave equation i.e., \eqref{lwave_eq} is easily seen to be  uniformly bounded. Hence unlike \cite{Keir:2016azt} and the solutions discussed in \cite{Cardoso:2005gj}, the spacetime  we investigate satisfies the conditions for the application of the upper bound stated in Theorem \ref{logupb}. In this paper we prove a \emph{lower bound} for the decay rate.  More precisely, our main result is 
\begin{theorem}
	\label{key_theorem}
	Let $\Phi$ be a solution to the wave equation \eqref{lwave_eq} on a soliton spacetime. Let $\Omega$ be an open set containing the trapped region. Then for all $k\geq 1$, there exist positive constants $C_k$ such that,
	\begin{equation}
	\limsup_{t \to \infty} \sup_{\Phi \neq 0} (\log (2 + t))^{2k} \frac{E_\Omega[\Phi](t)} {E_{k+1} [ \Phi](0)} \geq C_k 
	\end{equation}
\end{theorem}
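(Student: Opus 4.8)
The plan is to construct a family of \emph{quasimodes} --- time-periodic, sharply localized approximate solutions of \eqref{lwave_eq} concentrated on the stably trapped region --- whose error can be made exponentially small in a large angular-momentum parameter, and then to convert their slow leaking into the asserted lower bound via a Duhamel estimate. This follows the strategy of \cite{holzegel:2013kna}, as adapted in \cite{Keir:2016azt,benomio:2018ivy}. First I would separate variables: using the stationarity and $U(1)$ symmetries together with the (Carter-type) separability of the soliton's wave operator, look schematically for $\Phi = e^{-i\omega t + i(m_1\phi_1 + m_2\phi_2)}\Theta(\theta)R(r)$, where $\Theta$ is an eigenfunction of a self-adjoint elliptic operator on the compact orbit space with discrete eigenvalues $\lambda \to \infty$, and $R$ solves a second-order radial ODE with separation constant $\lambda$, subject to regularity at the smooth ``centre'' of the soliton and to an outgoing condition at the asymptotically flat end (where the centrifugal potential decays to zero). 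After passing to a tortoise coordinate $x$ and rescaling $R$, the radial equation takes the form $h^2 u'' + (h^2\omega^2 - \mathcal{V})u = 0$ with semiclassical parameter $h = \lambda^{-1/2}$, $\omega$ of size $h^{-1}$, and $\mathcal{V} = h^2 V_{\lambda,m_1,m_2}$. Choosing the azimuthal numbers to grow proportionally to $\sqrt\lambda$, with proportionality constants tending to those of a stably trapped null geodesic, $\mathcal{V}$ converges as $h\to 0$ to a fixed effective potential $W(x)$.

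The crux is to establish that $W$ has the shape characteristic of \emph{stable} trapping: a non-degenerate local minimum $W(x_0) = E_0$, confined on the side of the radiation zone by a barrier --- a local maximum $W(x_1) > E_0$ for some $x_1$ between $x_0$ and infinity --- beyond which $W$ returns below $E_0$ and decays to $0$. This is the analytic content of the stable trapping of null geodesics in the soliton, and is where the concrete geometry enters; the forbidden region $\{W>E_0\}$ separating the well from the radiation zone then has \emph{positive Agmon width} $S_0 = \int_{\{W>E_0\}}\sqrt{W - E_0}\,dx > 0$. I expect verifying this well-and-barrier structure --- equivalently, exhibiting the stably trapped geodesics and checking non-degeneracy --- to be the main obstacle; once it is in place, the remaining steps are essentially those of \cite{holzegel:2013kna}. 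Concretely, consider the \emph{self-adjoint} Dirichlet realization of $-h^2\partial_x^2 + \mathcal{V}$ on $(-\infty, b]$ with $b$ chosen strictly inside the barrier region. The harmonic approximation at $x_0$ produces real eigenvalues $E_n(h) = E_0 + (2n+1)h\sqrt{W''(x_0)/2} + O(h^2)$ and eigenfunctions $u_{n,h}$ Gaussian-concentrated at $x_0$ at scale $h^{1/2}$, and Agmon estimates give $|u_{n,h}(x)| + |\partial_x u_{n,h}(x)| \lesssim e^{-c/h}\norm{u_{n,h}}_{L^2}$ for $x$ near $b$. Setting $h^2\omega_{n,h}^2 = E_n(h)$ yields a real frequency and a radial function solving the radial ODE exactly on $x \le b$ with the correct regularity at the centre.

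Finally I would transfer this to the spacetime and run the contradiction argument. Let $\chi(x)$ equal $1$ near $x_0$ and be supported in $\{x < b\}$ inside the forbidden region, and set $\Phi_{n,h} = \chi\, e^{-i\omega_{n,h}t + i(m_1\phi_1 + m_2\phi_2)}\Theta(\theta)u_{n,h}(x)$, normalized so that $E_1[\Phi_{n,h}](0) = 1$, which forces $\norm{\Phi_{n,h}}_{L^2}^2 \sim h^2$. Then $\Box_g\Phi_{n,h} = [\Box_g,\chi](\cdots)$ is supported where $u_{n,h}$ is exponentially small, while each derivative costs at most $h^{-1}$, so $\norm{\Box_g\Phi_{n,h}}_{L^2(\Sigma_\tau)} = \varepsilon_h \lesssim e^{-c/h}$ uniformly in $\tau$. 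Let $\Psi_{n,h}$ solve \eqref{lwave_eq} with the same Cauchy data as $\Phi_{n,h}$; Duhamel together with the energy boundedness available on these solitons (no ergoregion) gives $E[\Phi_{n,h} - \Psi_{n,h}](\tau)^{1/2} \le C\tau\varepsilon_h$. Since $\omega_{n,h}$ is real the energy density of $\Phi_{n,h}$ is stationary and $\Phi_{n,h}$ concentrates in the well $\subset \Omega$, so $E_\Omega[\Phi_{n,h}](\tau) = E_\Omega[\Phi_{n,h}](0) \gtrsim 1$; hence at $\tau = T_h := (2C\varepsilon_h)^{-1}$ one gets $E_\Omega[\Psi_{n,h}](T_h) \gtrsim 1$ (working with real and imaginary parts separately produces genuine real solutions with the same conclusion).

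On the other hand the $k$ extra derivatives in $E_{k+1}$ contribute a factor at most $h^{-2k}$, so $E_{k+1}[\Psi_{n,h}](0) = E_{k+1}[\Phi_{n,h}](0) \lesssim h^{-2k}$, while $\log(2+T_h) \gtrsim h^{-1}$. Therefore
\[
(\log(2+T_h))^{2k}\,\frac{E_\Omega[\Psi_{n,h}](T_h)}{E_{k+1}[\Psi_{n,h}](0)} \;\gtrsim\; h^{-2k}\cdot h^{2k} \;=\; C_k > 0 ,
\]
and letting $h \to 0$ (so $T_h \to \infty$) along the chosen sequence of quantum numbers yields the theorem. It is essential here that $\varepsilon_h$ is exponentially, rather than merely polynomially, small in $h^{-1}$: a polynomial error would give only $\log(2+T_h) \sim \log(1/h)$ and the estimate would collapse. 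This is precisely the distinction between the \emph{stable} trapping present in the soliton and the unstable trapping of, e.g., Schwarzschild.
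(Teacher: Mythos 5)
Your overall architecture — separation of variables, a semiclassical radial problem with a well protected by a barrier of positive Agmon width, Dirichlet eigenfunctions exponentially small at the cut, quasimodes, and Duhamel plus the $(\log(2+t))^{2k}$ bookkeeping with $E_{k+1}\lesssim |n|^{2k}E_1$ — is exactly the strategy used here. However, there is a genuine gap at the central analytic step: you treat the radial problem as a \emph{linear} eigenvalue problem, writing $h^2u''+(h^2\omega^2-\mathcal{V})u=0$ with $\mathcal{V}=h^2V_{\lambda,m_1,m_2}$ independent of $\omega$, and then ``setting $h^2\omega_{n,h}^2=E_n(h)$'' from the harmonic approximation. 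In this soliton the stationary Killing field is not hypersurface-orthogonal: the $g^{t\widetilde\psi}$ cross term (the function $f(r)$ in \eqref{solmetric1}) produces, after separation with $e^{-i\hat\omega t+in\widetilde\psi}$, a contribution $\frac{\hat b^2}{r^2W}\bigl(\hat\omega+\tfrac{nf}{2}\bigr)^2$, so the effective potential contains a term $-n\widetilde f\,\omega$ \emph{linear in the frequency}. Since your construction requires the azimuthal number to grow like $\sqrt\lambda$ and $\omega\sim h^{-1}\sim |n|$, this term is of the same order ($\sim n^2$) as the rest of the potential and cannot be dropped or frozen: the frequency you are trying to construct appears inside the potential whose spectrum you are using to define it. Consequently the claim that $\mathcal V$ converges to a fixed $\omega$-independent $W(x)$ is false for the modes you need, the harmonic-approximation formula for $E_n(h)$ does not directly apply, and the existence of \emph{real} frequencies $\omega_n$ with $\omega_n^2\sim n^2$ solving the Dirichlet problem \eqref{shreq} is precisely the nontrivial point.

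The paper closes this gap by a homotopy in the nonlinearity: it introduces the family $\mathcal{P}_\beta$ in \eqref{evfamily} interpolating between the $\omega$-independent problem $\mathcal{P}_0$ (whose eigenvalues in a window around the energy level $E$ are produced by a Dirichlet--Neumann bracketing version of Weyl's law, Theorems \ref{weyls_law}--\ref{weyl-appl}, rather than by harmonic approximation — either would do for the linear step) and the true problem $\mathcal{P}_1$; it proves a priori bounds $\omega_{\beta,n}^2\gtrsim n^2$ (Lemma \ref{bound_omega}), shows via the implicit function theorem that the zero-eigenvalue condition can be continued in $\beta$ with $\mathrm{d}\omega/\mathrm{d}\beta$ controlled (Lemma \ref{implicit}), and restricts to $n\in\mathbb{Z}^-$ so that the sign of $\widetilde f$ keeps $\partial\Lambda_j/\partial\omega$ bounded away from zero and the trapping well persists for the $\omega$-dependent potential (Proposition \ref{propnonlin}, Lemma \ref{npcty}), following \cite{benomio:2018ivy}. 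Your proposal would need an argument of this kind (or some other fixed-point scheme handling the $-n\widetilde f\omega$ term, including verifying that the well-plus-barrier structure survives for the $\omega$-dependent potential and that the resulting $\omega$ is real with $\omega\sim|n|$); as written, the step producing the quasimode frequencies does not go through. The remainder of your argument — Agmon estimate at the cut, exponentially small $\Box_g\Psi_n$, Duhamel, energy boundedness, and the derivative counting giving $C_k$ — matches the paper.
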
 \noindent where the supremum is taken over all functions $\Phi$ in the completion of the set of smooth, compactly supported functions with respect to the norm defined by the higher order energy, $E_{k+1}$. See \eqref{energy} and \eqref{higherenergy} for the definition of energy.  

An immediate consequence of this result, in conjunction with Moschidis' Theorem \ref{Moschidis}, is that the bound given by \eqref{upperbound} is sharp for this class of spacetimes. Furthermore, our result strongly suggests that decay in the fully nonlinear regime is unlikely. As mentioned in \cite{Keir:2014oka}, one expects the end point of such a nonlinear instability to be gravitational collapse, intuitively caused by the trapping of waves. In light of this result, it would be interesting to study the stability of more general families of nonsupersymmetric solitons that were constructed in \cite{Jejjala:2005yu}.  Furthermore, it would be natural to extend the investigations here to investigate the stability of spacetimes containing both a black hole and soliton \cite{Kunduri:2014iga}, or a black lens \cite{Kunduri:2014kja} (an asymptotically flat black hole with horizon topology $S^3 / \mathbb{Z}_2$), which contains both a horizon and an evanescent ergosurface in the domain of outer communcations. One might expect that the presence of a horizon might influence the stability. 

This document is organized as follows. We introduce solitons and review the properties of the spacetime in \S \ref{sec:metric_prop}. We understand trapping by studying null geodesics. More specifically, we prove that, from the geodesic point of view there is a region of phase space exhibiting stable trapping. The uniform boundedness argument in this spacetime is quite straightforward and we give this in  \S \ref{sec:boundedness} to present a complete discussion on stability. In \S \ref{sepvar}, after a separation of the wave equation into a one variable Shr\"{o}dinger type equation, we  see how geodesic trapping manifests in high frequency waves. This Shr\"{o}dinger type equation is a nonlinear eigenvalue problem and establishing the existence of eigenavalues to this problem is central to proving the lower bound on the uniform decay rate. Here, we also state the nonlinear eigenvalue problem ($\mathcal{P}_\beta$) and the corresponding linear eigenvalue problem ($\mathcal{P}_0$) that will be studied first. In \S \ref{sec:linear}, $\mathcal{P}_0$ is examined and the existence of eigenvalues to this problem is proved using a version of Weyl's law. In \S \ref{sec:nonlinear}, we move to the actual nonlinear problem of interest $\mathcal{P}_\beta$. We start by examining the properties of the `nonlinear potential' and restore the setting of $\mathcal{P}_0$ for $\mathcal{P}_\beta $. Using the bounds on the eigenvalues and the implicit function theorem, we will establish the existence of eigenvalues to $\mathcal{P}_\beta$. The remaining part of the paper contains the details of how these eigenfunctions prove a logarithmic lower bound on the uniform decay rate. In \S \ref{sec:lowerbound}, we use Agmon estimates to quantitatively measure the solution (eigenfunctions) in the cut-off region. This estimate decays exponentially in a certain parameter $n$. Quasimodes are constructed by smoothly cutting off the solution near the boundary of a set containing the trapped region. The corresponding wave function $\Psi$ is shown to be an approximate solution to the wave equation \eqref{lwave_eq} with an exponentially small error in $n$. This in conjunction with Duhamel's formula will give the logarithmic lower bound. Our work is heavily indebted to the clear exposition given by Benomio  \cite{benomio:2018ivy}.

\subsection*{Acknowledgements}
 This work was supported by NSERC Discovery Grant RGPIN-2018-04887. We are thankful to Stefanos Aretakis for helpful comments on a preliminary version of this paper and for discussions during the 2017 Atlantic General Relativity workshop at Memorial University. We also acknowledge helpful communications with Joe Keir on the derivation of certain estimates in \S \ref{sec:lowerbound}.  We thank Graham Cox for clarifying various aspects of PDEs and Ivan Booth and James Lucietti for useful discussions. 

\section{A Class of Nonsupersymmetric Gravitational Solitons}
\label{sec:metric_prop}

\subsection{Metric and properties of the solution}
We consider an asymptotically flat, globally stationary family of non-supersymmetric soliton spacetimes.  The underlying manifold has topology $\mathbb{R} \times \Sigma$ with the spatial slices $ \Sigma \cong \mathbb{R}^4 \# \mathbb{CP}^2  $. It is analyzed in detail in  \cite{Gibbons:2013tqa} and \cite{Gunasekaran:2016nep}.  The spacetime  is a solution to five-dimensional minimal supergravity whose action is
\begin{equation}\label{minsugra}
S = \frac{1}{16\pi} \int_\mathcal{M} \left( \star R -2 F \wedge \star F -\frac{8}{3\sqrt{3}} F \wedge F \wedge A\right) \,.
\end{equation} Here $F = \td A$ is a smooth 2-form on the spacetime describing the Maxwell field and $A$ is a locally defined gauge potential. The local solution $(g,F)$ is
\begin{align}
\begin{aligned}
\td s^2 &= \frac{-r^2 W(r)}{ 4 b(r)^2} \td t^2 + \frac{\td r^2}{W(r)} + \frac{r^2}{4} ( \sigma_1^2 + \sigma_2^2 ) + b(r)^2 (\sigma_3 + f(r) \td t )^2 \, , \\
F & = \frac{\sqrt{3} q}{2}  \td \left[ r^{-2} \left( \frac{j}{2} \sigma_3 - \td t\right)\right]\, .
\label{solmetric}
\end{aligned}
\end{align}
The functions appearing in the metric are given below :
\begin{align}
\begin{aligned}
W(r) = 1 - \frac{2}{r^2} \left( p - q  \right) &+ \frac{1}{r^4} \left( q^2 + 2 p j^2 \right) \,,\quad
b(r)^2  =  \frac{r^2}{4} \left( 1 - \frac{j^2 q^2}{r^6} + \frac{2 j^2 p}{r^4} \right) \\
& f(r)  = \frac{-j}{2 b(r)^2} \left( \frac{2p -q}{r^2} - \frac{q^2}{r^4} \right) \, ,
\end{aligned}
\end{align} 
and the $\sigma_i$ are left-invariant one-forms on $SU(2)$ given by
\begin{align}
\begin{aligned}
{\sigma}_1 = - \sin \psi \td \theta + \cos \psi & \sin \theta \td \phi ,\,\,   {\sigma}_2 = \cos \psi \td \theta + \sin \psi \sin \theta \td \phi \\
{\sigma}_3  &= \td \psi + \cos \theta \td \phi
\end{aligned}
\end{align}
which satisfy $ \td \sigma_i = \tfrac{1}{2}\epsilon_{ijk} \sigma_j \wedge \sigma_k $.  In order to describe an asymptotically flat metric in the region $r \to \infty$, we must periodically identify $\psi \sim \psi + 4\pi, \phi \sim \phi + 2\pi$ and $\theta \in (0,\pi)$. $t \in \mathbb{R}$ is the time coordinate. The range of the radial coordinate is $0 < r_0 < r < \infty$ where $r_0$ is a parameter that characterizes the size of an $S^2$ `bolt' as described below. The paramaters $p,q,r_0$ and $j$ are related by,
\begin{align}
p &= \frac{r_0^4(r_0^2 - j^2)}{2 j^4}\,, \quad
q = \frac{-r_0^4}{j^2}
\end{align}
The spacetime is invariant under an $\mathbb{R} \times SU(2)\times U(1)$ isometry generated by $\partial_t$, $\partial_\psi$ and the vector fields $R_i$ that leave the $\sigma_i$ invariant. The above solutions are a subfamily of a more general set of $\mathbb{R} \times \times U(1)^2$ invariant nonsupersymmetric solitons (see \cite{Jejjala:2005yu,Compere:2009iy}). Surfaces of constant $r>r_0$ are timelike hypersurfaces with spatial geometry of $S^3$ with a homogeneously squashed metric.   An analysis of the metric shows that it is smooth everywhere (apart from standard coordinate singularities at $\theta = 0, \pi$ corresponding to fixed points of $U(1)$ isometries on $S^3$). However, the parameters $p$ and $q$ have been chosen above so that the functions $W(r),\, b(r)$ have simple zeroes at $r =r_0$. In particular the Killing field $\partial_\psi$ degenerates at $r_0$. The degeneration is smooth i.e., there are no conical singularities at $r = r_0$, provided we require that $ W(r)^{'}(r_0) \, b^2(r_0)^{'} = 1 $ or  
\begin{equation}
(1-\alpha^2)(2+\alpha^2)^2 = 1
\end{equation}
where, $\alpha = r_0/j$. This  cubic  has  a  unique  positive  solution  at $ \alpha^2 \approx 0.870385 $,  and in particular $r_0^2 < j^2$. With these relationships between the parameters, it can be checked that, $W(r),\, b(r)^2>0$ for $r>r_0$ and the spacetime metric is globally regular. Further
\begin{equation}
g^{tt} = - \frac{4 b(r)^2}{r^2 W(r)} < 0
\end{equation}  so the spacetime is stably causal, and in particular the $t=$constant hypersurfaces are Cauchy surfaces. Using the relationships between the parameters, it can also be checked that  
\begin{equation}
g_{tt} = -\frac{r^2 W(r)}{4b(r)^2} + b(r)^2 f(r)^2 <0 \quad \mbox{everywhere}.
\end{equation}
Hence, $\partial/ \partial t$ is globally timelike and there are no ergoregions. Hence the solutions to the wave equation do not suffer from Friedman's ergosphere instability recently proved in  \cite{Moschidis:2016zjy}.  In summary the above metric extends globally to a complete, asymptotically flat metric. Near $r = r_0$, the geometry of the manifold is that of $\mathbb{R} \times \mathbb{R}^2 \times S^2$ ($\partial_\psi$ degenerates at the origin of the $\mathbb{R}^2$ in the $(r,\psi)$ coordinates) and the $S^2$ has radius $r_0$ and is parameterized by $(\theta, \phi)$. 

The ADM mass and angular momenta of the soliton are
\begin{equation}
M = \frac{3\pi}{8} \left(\frac{r_0}{j}\right)^4 (j^2 + r_0^2), \qquad J_\psi = \frac{\pi r_0^6} {4 j^3}, \qquad J_\phi =0\, . 
\end{equation} In terms of angular momenta $(J_1, J_2)$ measured with respect to two orthogonal independent planes of rotation at infinity, this class of solitons is `self-dual' i.e., $J_1 = J_2$.  We note that more general solutions exist with $J_1 \neq J_2$, in which case the isometry group is broken to $\mathbb{R} \times U(1) \times U(1)$. 
Physically, the 2-cycle $[C]$ is prevented from collapse by a `dipole' flux
\begin{equation}
\mathcal{Q}: = \frac{1}{4\pi} \int_{S^2} F = \frac{\sqrt{3} r_0^2}{4j} \, ,
\end{equation} and these variables satisfy a `first law' of soliton mechanics $\td M = \Psi[C] \td \mathcal{Q}$ where $\Psi[C]$ is a certain intensive thermodynamical variable conjugate to $\mathcal{Q}$~\cite{Gunasekaran:2016nep}.

\subsection{Trapping of null geodesics}
\label{sec:nullgeo}
Let us now consider the properties of null geodesics in this spacetime. We will prove here that there is a region in the phase space of parameters for which null geodesics are stably trapped. A similar analysis was carried out for supersymmetric microstate goemetries in \cite{Eperon:2017bwq}. 

We start with the fact that the Hamilton-Jacobi function for null geodesics in \eqref{solmetric} is separable due to the existence of a reducible Killing tensor. In other words, the equations describing null geodesics are integrable. We write the Hamilton-Jacobi function $S$ in the separable form
\begin{equation}
S = -Et + R(r) + \Theta(\theta) + \psi p_\psi + \phi p_\phi
\label{hjeq}
\end{equation}
where $E,p_\psi \mbox{ and } p_\phi $ are conserved quantities associated to the three commuting Killing vector fields $ \partial_t,\, \partial_\psi,\, \partial_\phi $. We have another conserved quantity $C$ which is a separation constant arising from a reducible Killing tensor. Altogether, we have four constants of motion from the isometries of the solution. The conserved momenta can be obtained from the Hamiltonian $H = g^{ab} p_a p_b$ :
\begin{equation}
\begin{aligned}
& p_t  = -E = \left( \frac{-r^2 W(r)}{ 2 b(r)^2} + 2 b(r)^2 f(r)^2  \right) \dot{t} + 2 b(r)^2 f(r) \left( \dot{\psi} + \cos \theta \dot{\phi} \right) \,, \quad 
%\label{pt}  
p_\psi  = 2 b(r)^2 (\dot{\psi} + f(r) \dot{t} + \cos \theta \dot{\phi}) 
%\label{ppsi} 
\nonumber \\
& p_\phi  =\frac{r^2}{2} \sin^2 \theta \dot{\phi} + \cos \theta p_\psi \,, \quad
%\label{pphi}
C = \left( \cot \theta p_\psi - \frac{1}{\sin \theta} p_\phi \right)^2 + \frac{ r^4 \dot{\theta}^2}{4}  \, .\label{constC}
\end{aligned}
\end{equation} The Hamilton-Jacobi function satisfies 
$$ \frac{\partial S}{\partial x^\mu }\frac{\partial S}{\partial x^\nu } g^{\mu \nu} =0 \, , $$
which gives
\begin{equation}
\label{nullgeo}
\frac{-4 b(r)^2}{r^2 W(r)} \left( E + f(r) p_\psi \right)^2 + R^{'}(r)^2 W(r) + \frac{4 C}{r^2} + \frac{p_\psi^2}{b(r)^2} = 0 \, .
\end{equation}
We can relate $R^{'}(r) $ and $ \Theta^{'}(\theta)$ to $\dot{r}$ and $ \dot{\theta}$ by,
\begin{equation}
\dot{x^{\mu}} = g^{\mu \nu} \frac{\partial S}{\partial x^{\nu} }
\end{equation}
which gives,
\begin{align}
R^{'}(r) = \frac{ \dot{r}}{2 W(r)} \,, \quad
\Theta^{'}(\theta) = \frac{r^2 \dot{\theta}}{8}.
\end{align}
This allows \eqref{nullgeo} to be rewritten as, 
\begin{align}
\frac{-4 b(r)^2}{r^2 W(r)} \left( E + f(r) p_\psi \right)^2 + \frac{ \dot{r}^2}{4 W(r)} + \frac{4 C}{r^2} + \frac{p_\psi^2}{b(r)^2} = 0  \, .
\end{align}
In summary, the equations for null geodesic $x^\alpha(\lambda)$ are given by 
\begin{align}
\dot{r}^2  &= -\frac{4 W(r) p_\psi^2}{ b(r)^2} + \frac{16 b(r)^2}{r^2} \left( E + f(r) p_\psi \right)^2  -\frac{16 W(r) C}{r^2}
\label{rdoteq} \\ 
\dot{\theta}^2 &= \frac{64}{r^4} \left[ C - \left( \cot \theta p_\psi - \csc \theta p_\phi \right)^2  \right] 
\label{thdot}\,,\,   \dot{t} = \frac{8 b(r)^2}{r^2 W(r)} \left( E + f(r) p_\psi \right) \\   
\dot{\phi} &= \frac{8 \csc \theta}{r^2}(\csc \theta p_\phi - \cot \theta \, p_\psi) \,, \quad
\dot{\psi} = -f(r)\dot{t} + 
\frac{2 p_\psi}{ b(r)^2} + \frac{8 \cot \theta}{r^2} \left( \cot \theta p_\psi - \csc \theta \, p_\phi \right) \, .
\end{align} 
From \eqref{rdoteq}, we can see that close to $r=r_0$ the first term dominates over the others  making $\dot{r}^2$ negative. This means null geodesics with non-zero $p_\psi$ approaching the `origin' must turn around at some $r> r_0$.  To simplify the analysis it is sufficient to restrict to motion in a plane with constant $\theta$.  Such null geodesics confined to a plane are solutions to $\ddot{\theta} =0$ with $\dot{\theta} =0$. For example, from the equation for $\dot{\theta}^2$ i.e., \eqref{thdot}, we can see that $C=0$ corresponds to geodesics confined in the $\theta=\pi/2$ equatorial plane. 

%We get the equation for $ \ddot{\theta}$ by differentiating  the $\dot{\theta} $ equation  
%\begin{align}
%\dot{r} r^3 \dot{\theta}^2 + \frac{r^4 \dot{\theta} \ddot{\theta}}{2} &= -2 \dot{\theta} \left( \cot \theta p_\psi - \frac{1}{\sin \theta} p_\phi \right) \left(-\cosec^2 \theta \, p_\psi + \cosec \theta \cot \theta p_\phi \right) 
%\label{thetaddot} 
%\end{align}
% The first integral of theta which gives the \dot{theta} equation above comes from integrating the geodesic equation after multiplying by \dot{theta}. So, the geodesic equation can seem to be trivially satisfied. But one needs to cancel the \dot{theta} when differentiating the first integral to see the actual geodesic equation for \theta (because we multiplied and integrated the \dot{theta} equation to get to first integral)

Stable trapping occurs when there is a region $[r_1,r_2]$ in which $ \dot{r}^2 >0 $ in the interior and vanishes at $r_2$ with $\dot{r}^2<0$ immediately outside the closed interval. Hence $r_1, r_2$ are turning points. Hence, stable trapping occurs when \eqref{rdoteq} has more than one turning point as depicted in Figs.\ref{stable1} and \ref{stable2}.

%\begin{figure}
%	\subfloat[caption]{\includegraphics[width= 2in]{something}}\\
%	\subfloat[caption]{\includegraphics[width= 2in]{something}}
%\end{figure}

\begin{figure}[!htb]
	\centering
	\begin{minipage}{.34\textwidth}
		\centering
		\includegraphics[scale=0.39]{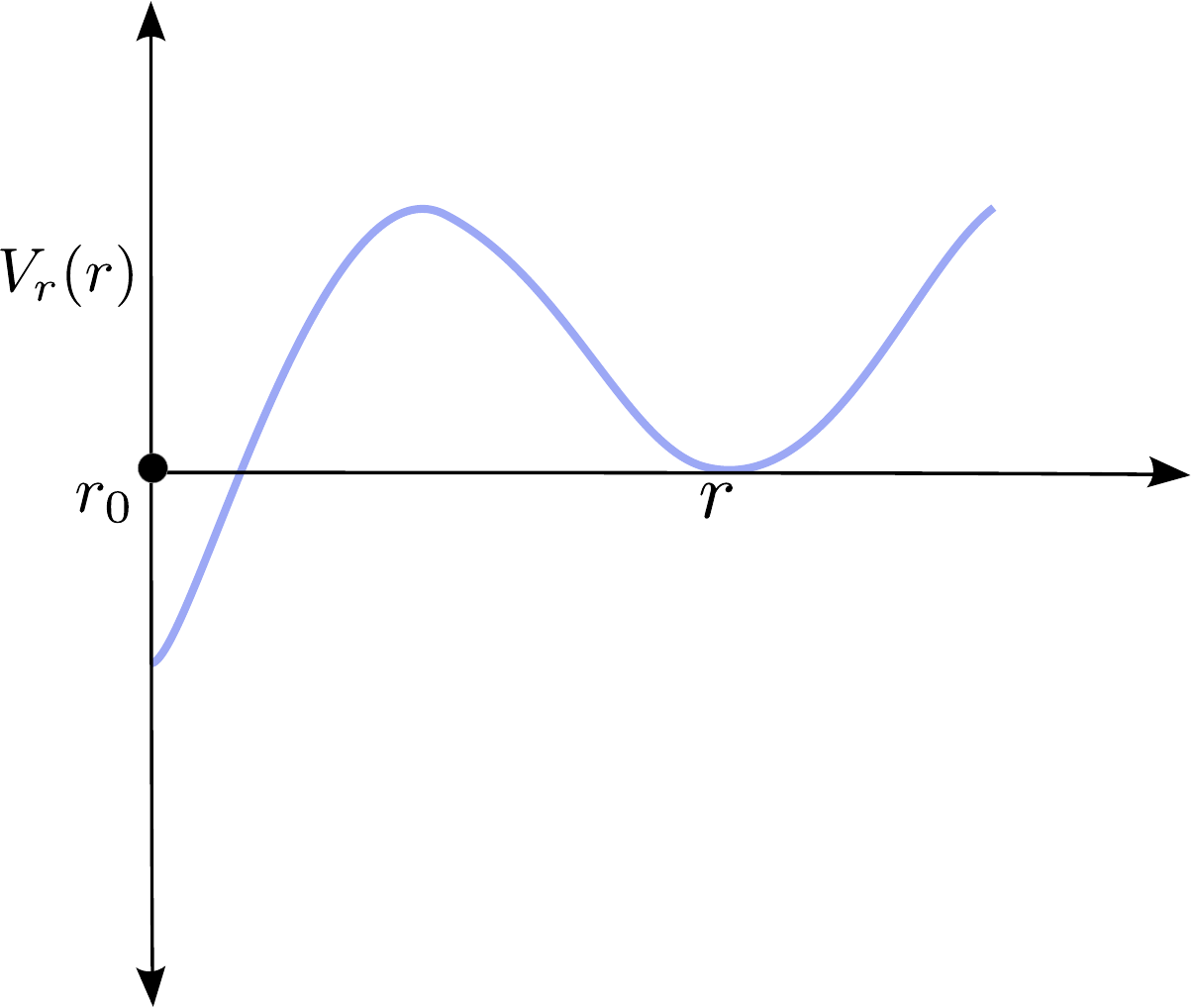}
		\caption{Unstable trapping}
		\label{unstable_trapping}
	\end{minipage}%
	\begin{minipage}{0.68\textwidth}
		\begin{figure}[H]
			%\centering     %%% not \center
			\subfloat[ Case 1 (for $C=0$) \label{stable1} ]{\includegraphics[scale=0.4]{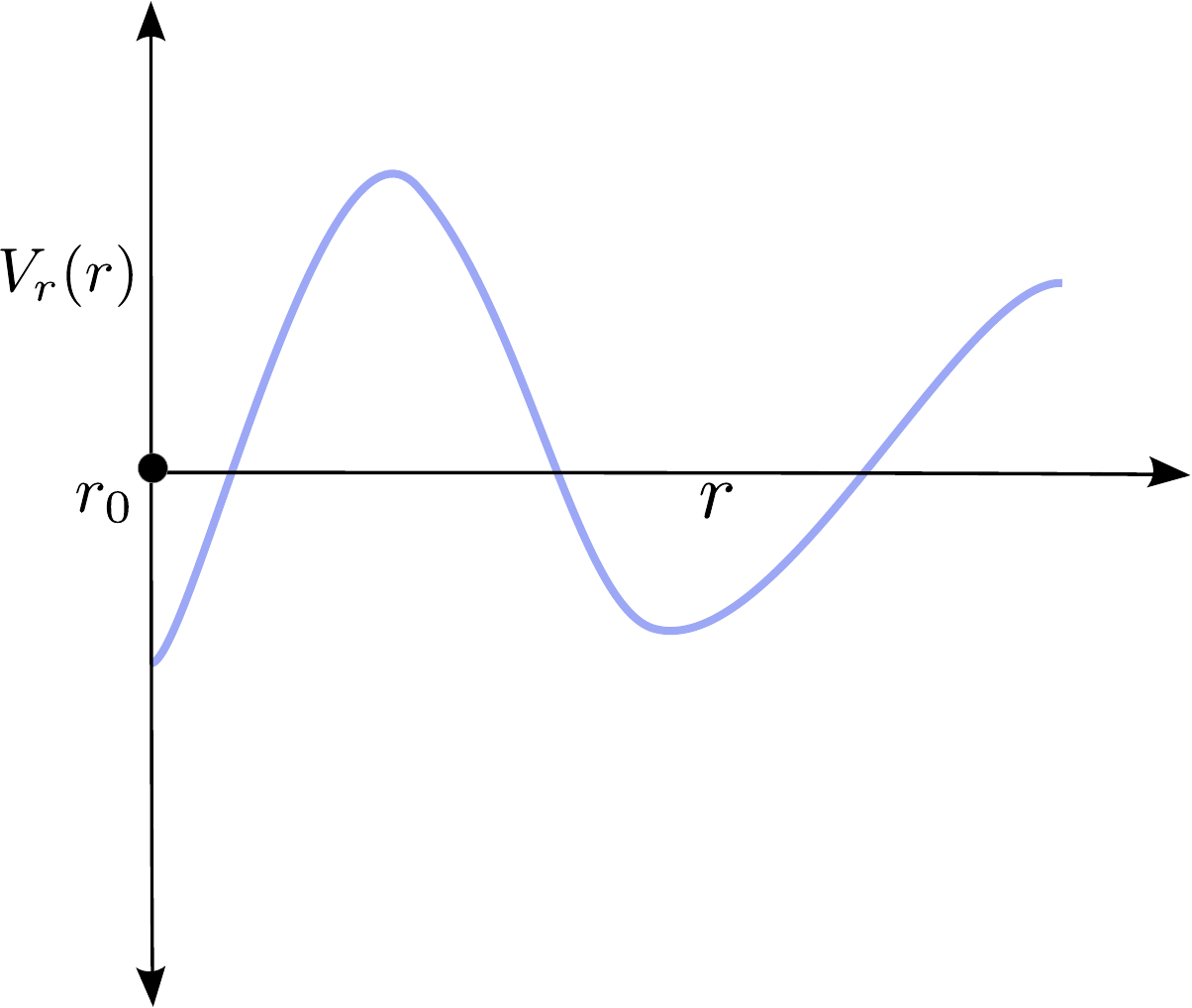}}
			\subfloat[Case 2 (for $C\neq0$) \label{stable2}]{\includegraphics[scale=0.4]{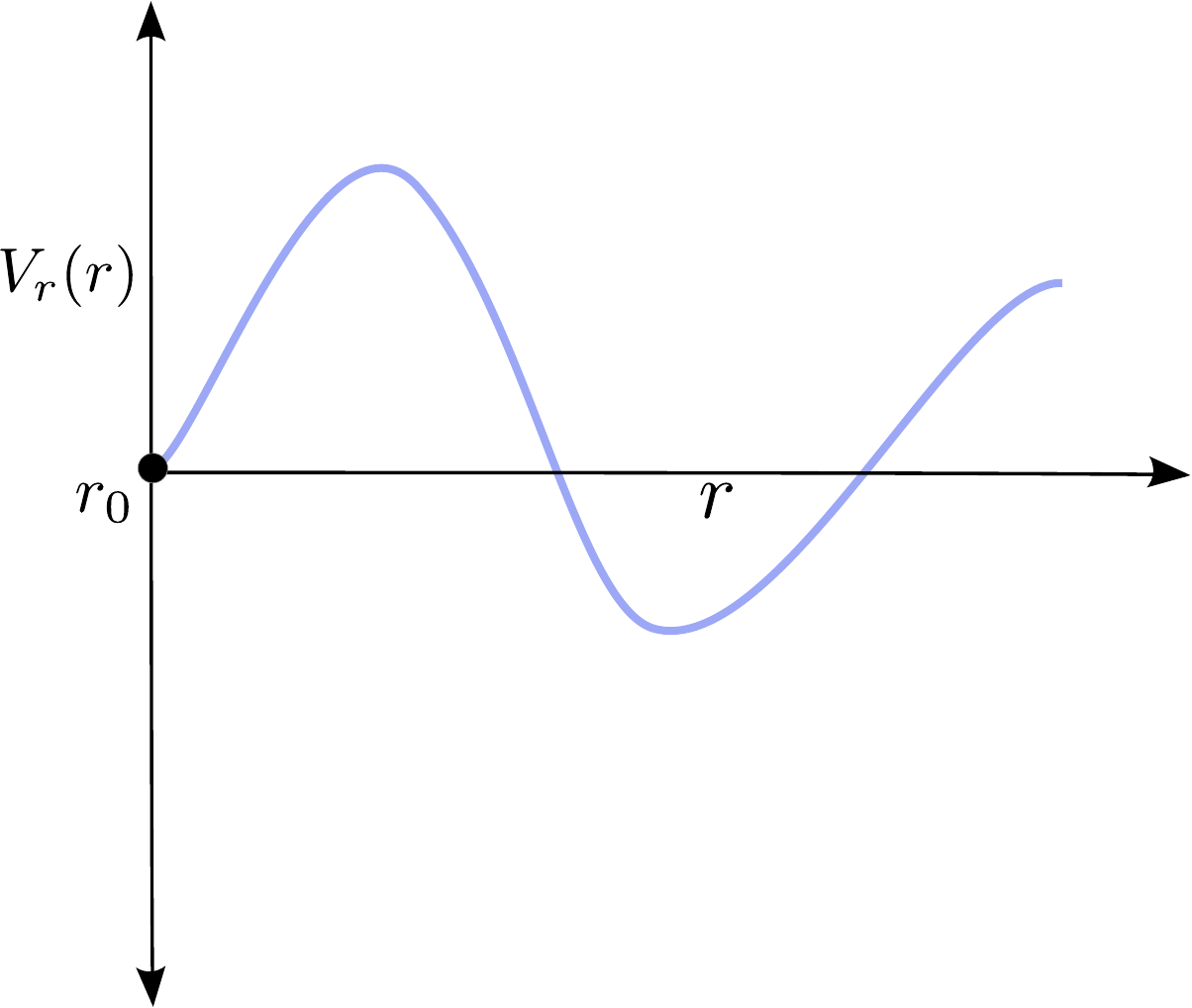}}
			\caption{Stable trapping}
			%	\label{stable_trapping}
		\end{figure}
	\end{minipage}
\end{figure}

\begin{claim}[Existence of stably trapped null geodesics]
	There exists a region in the phase space of parameters (of geodesic motion) for which the 1-parameter family of spacetimes given by \eqref{solmetric} exhibits stable trapping of null geodesics.  
\end{claim}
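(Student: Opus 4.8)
\emph{Strategy and reduction.} The plan is to exhibit an open subset of the phase space of conserved charges $(E,p_\psi,p_\phi,C)$ whose null geodesics are confined to a compact range of $r$ (all the other coordinates living on compact factors), and then to check that this confinement is stable under perturbations of the data. Since the regularity cubic $(1-\alpha^2)(2+\alpha^2)^2=1$ fixes $\alpha=r_0/j$, the only remaining parameter in the family \eqref{solmetric} is the overall length scale $j$, and null geodesics are unaffected by a constant conformal rescaling; so it suffices to treat one representative, say $j=1$, the conclusion transferring to the whole family by scaling. First I would set $p_\psi=0$ and restrict to the equatorial plane $\theta=\pi/2$. This is a consistent motion: with $p_\psi=0$ the $\theta$-equation of motion gives $\ddot\theta=0$ at $\theta=\pi/2$, so $\theta\equiv\pi/2$, $\dot\theta\equiv 0$, and \eqref{thdot} then forces $C=p_\phi^2$; taking $p_\phi\neq 0$ makes $C>0$. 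With these choices \eqref{rdoteq} collapses to
\begin{equation}
\dot r^2=\frac{16}{r^2}\bigl(b(r)^2E^2-C\,W(r)\bigr),
\end{equation}
and (choosing $E>0$, so that $\dot t=8b^2E/(r^2W)>0$ and the geodesic is future-directed) the classically allowed region $\{\dot r^2\ge 0\}$ is exactly $\{\,r>r_0:\ g(r)\le E^2/C\,\}$, where $g(r):=W(r)/b(r)^2$.

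\emph{The effective potential $g$.} The key step is to understand the graph of $g$ on $[r_0,\infty)$. Because $W$ and $b^2$ each have a simple zero at $r_0$, the quotient $g$ extends smoothly to $r=r_0$ with $g(r_0)=W'(r_0)/(b^2)'(r_0)>0$ (explicitly $g(r_0)=4(1-\alpha^2)/\alpha^2$, using the regularity identity), while $g>0$ on $(r_0,\infty)$ and $g(r)\sim 4/r^2\to 0$ as $r\to\infty$. The essential point is that $g$ is \emph{not} monotone: it rises strictly above its endpoint value $g(r_0)$ before decaying. It is enough to verify a single strict inequality $g(r_\star)>g(r_0)$ at one convenient radius $r_\star>r_0$; with the explicit $W,b^2$ one finds e.g. $g(1)>g(r_0)$ (numerically $g$ attains a maximum $g_{\max}\approx 0.9$ near $r\approx 1.3$, comfortably above $g(r_0)\approx 0.6$). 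Together with $0=\lim_{r\to\infty}g(r)<g(r_0)<g(r_\star)$ and continuity, this forces $g$ to attain an interior maximum $g_{\max}:=\max_{[r_0,\infty)}g>g(r_0)$ at some $r_m\in(r_0,\infty)$.

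\emph{Trapped geodesics and stability.} Now fix any regular value $L\in(g(r_0),g_{\max})$ and pick $E,C>0$ with $E^2/C=L$. Put $r_1:=\inf\{r>r_0:g(r)=L\}$; then $r_0<r_1<r_m$, one has $g<L$ on $[r_0,r_1)$, and $g>L$ on an interval immediately to the right of $r_1$ (transversality, since $L$ is a regular value). Hence the corresponding equatorial null geodesic is confined to $r\in[r_0,r_1]$: it oscillates between the turning points $r_0$ and $r_1$ and cannot cross the potential barrier $\{g>L\}$ to reach the asymptotically flat end, where $\dot r^2\to 4E^2>0$. This is precisely the configuration of Figure \ref{stable2}. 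For stability, note that the constraints ``$E^2/C\in(g(r_0),g_{\max})$'' and ``$L$ a regular value of $g$'' are open, and that switching on a small $p_\psi\neq 0$ merely changes $b^2E^2-CW$ to $b^2(E+fp_\psi)^2-CW-\tfrac14 r^2Wp_\psi^2/b^2$, whose effect is to push the inner turning point from $r_0$ to some $r_1'(p_\psi)>r_0$ close to $r_0$ while shifting the outer turning point only slightly. Thus the perturbed geodesic remains trapped in a compact subinterval of $(r_0,\infty)$, and an open region of phase space consists of stably trapped null geodesics, as claimed.

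\emph{Main obstacle.} I expect the only genuine difficulty to be the non-monotonicity of $g$ in the middle paragraph: there is no shortcut around examining the explicit functions $W$ and $b^2$ at the value of $\alpha$ pinned down by the regularity condition. Everything else is soft — the smoothness of $W/b^2$ at $r_0$ (so that $g(r_0)$ is finite and positive), the intermediate-value argument producing the potential well, the transversality at regular values, and the openness of the defining inequalities — so in the end the proof reduces to verifying a single concrete inequality, such as $g(1)>g(r_0)$.
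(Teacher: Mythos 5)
Your proposal is correct, and its core scenario is exactly the paper's second case: equatorial motion with $p_\psi=0$, $C=p_\phi^2>0$, and confinement read off from the sign of $\dot r^2$ in \eqref{rdoteq}, giving a trapped band with turning points at $r_0$ and a nearby barrier crossing (the configuration of Fig.~\ref{stable2}). Where you differ is in how the barrier is certified: the paper nondimensionalizes, factors the resulting cubic in $y=x^2$, writes the two nontrivial roots $y_{1,2}$ explicitly, and extracts an explicit window $\eta^2<(\alpha^4+2\alpha^2)/(8-4\alpha^4-4\alpha^2)$ for three distinct roots above $\alpha^2$; you instead package everything into the effective potential $g=W/b^2$, use its smooth positive limit $g(r_0)=4(1-\alpha^2)/\alpha^2$ at the bolt (via the regularity identity), its $4/r^2$ falloff, and a single evaluation such as $g(1)>g(r_0)$ to force an interior maximum, then conclude by the intermediate value theorem and openness. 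Your route buys a cleaner logical structure (one concrete inequality plus soft arguments, and an explicit argument that switching on small $p_\psi$ only moves the inner turning point slightly off $r_0$, a stability point the paper leaves implicit in its remark after \eqref{rdoteq}); the paper's route buys explicit parameter ranges and also covers a second, independent family of trapped orbits (the $C=0$, $p_\psi\neq 0$ case of Fig.~\ref{stable1}) that your argument omits, though one family suffices for the claim. In both treatments the decisive quantitative input is a numerical/explicit evaluation at the root $\alpha^2\approx 0.870385$ of the regularity cubic, so the level of rigor is comparable.
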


\begin{proof}
	There are two possible cases to consider :  $C=0$ and $C\neq 0$. We examine each of the cases below.
	
	\begin{enumerate}[label=(\alph*)]
		\item For $C=0$, we rewrite  \eqref{rdoteq} as 
		\begin{align}
		\dot{r}^2 = V_{r}(r) := -\frac{4 W(r) p_\psi^2}{ b(r)^2} + \frac{16 b(r)^2}{r^2} \left( E + f(r) p_\psi \right)^2.
		\label{rdotnull}
		\end{align}
		%\label{radp}
		Stable trapping corresponds to $V_r(r)$ having at least two zeros. 
		It is useful to work with dimensionless quantities and scale out the dependence of $j$, so we use the following scaling for coordinates and parameters, 
		\begin{equation}
		r = j \cdot x ,\,\, \quad r_0 = \alpha \cdot j \,\, \quad \text{and} \quad E = \frac{\widetilde{E}}{j}
		\label{dimensionless}
		\end{equation}
		We only need positive roots greater than $\alpha^2$ to the equation. We recall that $\alpha^2 \approx 0.870385$. 
		With the following definitions, 
		\begin{align}
		\di \eta := \di \frac{p_\psi}{\widetilde{E}} \quad \mbox{ and } y: = x^2
		\label{scaleEx}
		\end{align} 
		\eqref{rdotnull} becomes, 
		\begin{align}
		-\frac{y^3 j^2}{4 \widetilde{E}^2} V_r(y) = -{y}^{3}+4\,{\eta}^{2}{y}^{2}+ \left( 4\,{\alpha}^{6}\eta-{\alpha}^{6}-4
		\,{\alpha}^{6}{\eta}^{2}+{\alpha}^{4}-4\,{\alpha}^{4}{\eta}^{2} \right) 
		{y}-4\,{\alpha}^{8}\eta+{\alpha}^{8}+4\,{\alpha}^{8}{\eta}^{2}
		\label{potequat}
		\underline{}\end{align}
		For $\eta \in (-1.33, -1.24)$, we have three turning points all bigger than $r_0$ indicating the existence of stably trapped null geodesics. This case is pictorially depicted in Fig.\ref{stable1}.
		
		\item For $C\neq 0$, fix $\theta = \pi/2$. From $\dot{\theta} =0$ we have $C=p_\phi^2$ and $\ddot{\theta} =0$ gives $p_\psi=0$. \eqref{rdotnull} becomes
		\begin{align}
		\dot{r}^2  = \frac{16 b(r)^2}{r^2}E ^2 - \frac{16 W(r) p_\phi^2}{r^2}
		\label{radp1}
		\end{align}
		Rewriting using \eqref{dimensionless} and \eqref{scaleEx}, \eqref{radp1} becomes, 
		\begin{align}
		\nonumber \frac{ j^4 y^3 V_r(y)}{\widetilde{E}^2} = \, & {y}^{3}-4\,{\eta}^{2}{y}^{2}+ \alpha^4\left( {\alpha}^{2}-1+4{\eta}
		^{2}{\alpha}^{2}+4{\eta}^{2} \right) y  + \alpha^4 \left( 4{\eta}^{2} - 4{\eta}^{2}{
			\alpha}^{2}-{\alpha}^{4}-4{\eta}^{2}{\alpha}^{4} \right)
		\end{align} 
		We know that $ V_r (\alpha^2) = 0$ and $ V_r(y) \rightarrow 1 $ as $ y \rightarrow \infty $. So, $V_r(y)$ is positive for large values of $y$. The two roots of $V(y)$ are
		\begin{align}
		y_1 &= -\frac{1}{2}\,{\alpha}^{2}+2\,{\eta}^{2}+\frac{1}{2}\,\sqrt {{\alpha}^{4}+8\,{\alpha}^{2}{\eta}^{2}+16\,{\eta}^{4}-4\,{\alpha}^{6}-16\,{\eta}^{2}{\alpha}^{6}-16 \,{\eta}^{2}{\alpha}^{4}} \\
		y_2 &= -\frac{1}{2}\,{\alpha}^{2}+2\,{\eta}^{2}-\frac{1}{2}\,\sqrt {{\alpha}^{4}+8\,{\alpha}^{2}{\eta}^{2}+16\,{\eta}^{4}-4\,{\alpha}^{6}-16\,{\eta}^{2}{\alpha}^{6}-16 \,{\eta}^{2}{\alpha}^{4}}
		\end{align}
		
		There is a double root (unstable trapping) if $ (\alpha^2 + 4 \eta^2)^2 - 4 \alpha^4 (\alpha^2 + 4 \eta^2 \alpha^2 + 4 \eta^2) = 0$. The real values of $\eta$ that solve this equation are
		\begin{align}
		\eta = 
		\pm \sqrt{{\frac {\alpha}{4} \left( 2\,{\alpha}^{2}+2\,\sqrt {{\alpha}^{3}
					\left( \alpha+2 \right) }+2\,\alpha-1 \right) }}
		\end{align}
		Hence there is a region of phase space corresponding to unstable trapping. With the solved value of $\eta^2$, we can also verify that $y_1 $ and $ y_2$
		(in this case $y_1=y_2$) are bigger than $\alpha^2$. This trapping structure is depicted in Fig.\ref{unstable_trapping}.
		%${\alpha}^{4}+8\,{\alpha}^{2}{\mu}^{2}+16\,{\mu}^{4}-4\,{\alpha}^{6}-16\,{\mu}^{2}{\alpha}^{6}-16 \,{\mu}^{2}{\alpha}^{4} =0$. This can also be written as, . 
		For stable trapping, $V_r(y)$ should have three distinct positive roots. Clearly, $V_r(\alpha^2)=0$. We require that $y_2$ is real and $y_2-\alpha^2 >0$. This will hold provided $\eta$ satisfies
		\begin{equation}
		\eta^2 < \frac{{\alpha}^{4} + 2 \alpha^2}{8  -  4\,{\alpha}^{4} - 4 \,{\alpha}^{2}} \, .
		\end{equation}
		Also $y_1 > y_2 >  \alpha^2$ automatically. Hence there is a range of $\eta$ in phase space for which null geodesics are stably trapped. This is depicted in Fig.\ref{stable2}.
	\end{enumerate}
\end{proof} As discussed in the introduction, the above result suggests that waves with sufficiently high frequency will not decay rapidly enough to guarantee inverse polynomial decay for nonlinear applications. 

\section{Uniform boundedness}
\label{sec:boundedness} In this section we collect some basic results on solutions to the wave equation in this spacetime. Consider a solution $\Phi$ to the linear wave equation \eqref{lwave_eq}.  The energy momentum tensor associated with the field $\Phi$ is
\begin{align}\label{T_{ab}}
Q_{\mu \nu} & =  \nabla_\mu \Phi \nabla_\nu \Phi - \frac{1}{2} g_{\mu \nu} \nabla^\alpha \Phi \nabla_\alpha \Phi, 
\end{align}  which satisfies the conservation equation $\nabla^\mu Q_{\mu \nu} =0$. 
%Given a vector field $T$, the current $J^T$ is defined by,
%\begin{equation}
%J^T_\mu (\chi) = Q_{\mu \nu} T^{\nu}.
%\end{equation}
We introduce an orthonormal frame of one forms so that  the spacetime metric  \eqref{solmetric} can be expanded  as $g = \eta_{ab} \omega^a \omega^b$ where $\eta = \text{diag}(-1,1,1,1,1)$: 
\begin{equation}
\omega^0 = \frac{r \sqrt{W}}{2b} \td t, \qquad \omega^r = \frac{\td r}{\sqrt{W}}, \qquad \omega^1 = \frac{r}{2}\sigma^1, \qquad \omega^2 = \frac{r}{2} \sigma^2, \qquad \omega^3 = b (\sigma^3 + f \td t) \, .
\end{equation} The dual orthonormal frame of vector fields satisfying $g^{-1} = \eta^{ab} e_a e_b$ is
\begin{equation}
e_0 = \frac{2b}{r\sqrt{W}} (\partial_t - f L_3), \qquad e_r = \sqrt{W} \partial_r, \qquad e_1 = \frac{2}{r} L_1, \qquad e_2 = \frac{2}{r} L_2, \qquad e_3 = \frac{L_3}{b}
\end{equation} where $L_i$ are the vector fields dual to the left-invariant one-forms $\sigma^i$, i.e. $\sigma^i (L_j) = \delta^i_j$, $i = 1,2,3$.  The unit normal to a $t=$ constant surface is $n = -\omega^0$. As a vector field the unit future-pointing normal is $N = e_0$.  Note that $n \propto -\td t$.   The timelike Killing vector field $T = \partial_t$ is, in this frame, 
\begin{equation}
T = \frac{r\sqrt{W}}{2b} e_0 + f b e_3 \, . 
\end{equation} The current $J^T[\Phi]_a = Q_{ab}T^b$ associated to this vector field is
\begin{equation}
J^T[\Phi] =\left (\frac{r\sqrt{W}}{2b} e_0 (\Phi)  + f b e_3(\Phi)\right)d\Phi + \frac{1}{2}\left( \frac{r \sqrt{W}}{2b} \omega^0 - f b \omega^3 \right) |\td \Phi|^2
\end{equation} where 
\begin{equation}
|\td \Phi|^2 = -(e_0 (\Phi))^2 + \sum_{i=1}^4 (e_i (\Phi))^2
\end{equation}  Since $T, N$ are future directed, timelike vector fields,  the scalar $Q(T, N)$ must be positive definite.  We can observe this quite explicitly by computing
\begin{equation}
J^T(N)[\Phi] := Q(T,N) = \frac{r \sqrt{W}}{4b} (e_0(\Phi))^2 + f b e_3(\Phi) e_0(\Phi) + \frac{r \sqrt{W}}{4b} \sum_{i=1}^4 (e_i(\Phi))^2
\end{equation} and then using Young's inequality
\begin{equation}
J^T(N)[\Phi] \geq   \frac{r \sqrt{W}}{4b} (e_0(\Phi))^2 + \frac{r \sqrt{W}}{4b} \sum_{i=1}^4 (e_i(\Phi))^2 - \frac{f b}{2} (e_0(\Phi))^2 - \frac{ fb}{2} (e_3(\Phi))^2 \geq C \sum_{\alpha=0}^4 (e_\alpha (\Phi))^2
\end{equation} where we have noted that $g_{tt} < 0$ implies that 
\begin{equation}
\frac{ r\sqrt{W}}{ 2 b} >| f b|.
\end{equation}  Let $\Sigma_t$ denote a spatial hypersurfface defined by $t =$ constant with induced metric $h$. From the above,  the following first-order energy associated to $\Sigma_t$ is non-negative: 
\begin{equation}
\label{energy}
E[\Phi](t) : = \int_{\Sigma_t} J^T(N)[\Phi] \; \td\text{Vol}_h \sim \int_{\Sigma_t}  \sum_{\alpha=0}^4 (e_\alpha(\Phi))^2 \; \td \text{Vol}_h
\end{equation} and in the following we show that it is controlled by the energy of the initial data. We will use the symbol $E_\Omega[\Phi](t)$ to represent the same integral as above with the region of integration replaced with $\Omega \cap \Sigma_t $ where $\Omega$ is a spacetime region. %For any two future directed timelike vector fields, $X$ and $Y$, $Q(X,Y)$ is positive definite and 
%\begin{equation}
%Q(X,Y) \sim \sum_{\alpha} (\partial_\alpha \Phi)^2.
%\end{equation} 
If $T$ is a timelike Killing vector field, one finds that the current is conserved. From the divergenceless of $Q_{\mu \nu}$ it is easy to see
\begin{equation}
\nabla^\mu J^T_\mu (\Phi) = 0. 
\end{equation}
Let $\Sigma_0 $ and $\Sigma_t$ be two homologous surfaces with a common boundary. 
Integrating $J^T_\mu(\Phi)$ over the region enclosed by $\Sigma_0 $ and $\Sigma_t$, whose normals are $n^\mu_0$ and $n^\mu_t$ respectively, and using the divergence theorem, we get, 
\begin{align}
\int_{\Sigma_t} J^T_\mu(\Phi) n_t^\mu = \int_{\Sigma_0} J^T_\mu(\Phi) n_0^\mu  
\end{align}
This holds as long as $T$ is timelike. For the soliton spacetime \eqref{solmetric}, we have a global timelike Killing vector field. No part of $\Sigma_t \mbox{ or } \Sigma_0$ is null and hence the control on $\Phi$ and its derivatives does not degenerate anywhere. We thus quite straightforwardly obtain the following uniform energy bound.
\begin{equation}
E[\Phi](t) = E[\Phi](0).
\end{equation} Finally, we define higher-order energies
\begin{equation}
\label{higherenergy}
E_k[\Phi](t) : = \sum_{0 \leq |\alpha| \leq k -1 }\int_{\Sigma_t} J^T(N)[\partial_\alpha \Phi] \; \td \text{Vol}_h =  \sum_{0 \leq |\alpha| \leq k -1 } E_t[\partial_\alpha \Phi]
\end{equation} These energies are roughly equivalent to the sum of the homogeneous seminorms $\dot{H}^k$ on $\Sigma_t$ with $s \in [1,k]$.  

\section{Separation of variables and eigenvalue problems}
\label{sepvar}

\subsection{Separation of variables} A preliminary step towards the construction of quasimodes is the separation of variables of the wave equation to reduce the problem  to a one-dimensional Schr\"{o}dinger type equation. The advantage to the class of geometries we are considering is that, due to the $\mathbb{R} \times SU(2) \times U(1)$ isometries, apart from a single radial equation, the remaining parts of the wave equation can be solved explicitly, and in particular the spectrum is completely understood. This simplification also allows us to observe how trapping manifests at the wave equation level by studying an effective potential in the radial equation. In the metric given by \eqref{solmetric} we set 
\begin{equation}\label{psidef}
\di \widetilde{\psi} = \di \frac{\psi}{2} \implies \frac{\partial}{\partial \psi} = \frac{1}{2} \frac{\partial }{\partial \widetilde{\psi}}\,, \qquad \hat{b}(r)^2 = 4 b(r)^2, 
\end{equation} so that $\widetilde{\psi} \sim \di \widetilde{\psi} + 2 \pi$. This normalization is consistent with the conventions used in \cite{kunduri:2006qa}.  We can rewrite \eqref{solmetric} as
\begin{align}
\td s^2 = \frac{-r^2 W(r)}{  \hat{b}(r)^2} \td t^2 + \frac{\td r^2}{W(r)} + \frac{r^2}{4} ( \sigma_1^2 + \sigma_2^2 ) + \hat{b}(r)^2 \left( \td \widetilde{\psi} + \frac{\cos \theta}{2} \td \phi +  f(r) \td t \right)^2
\label{solmetric1}
\end{align}
For later reference we record the inverse metric: 
\begin{align}
\left( \frac{\partial}{\partial s} \right)^2  = & -\frac{\hat{b}(r)^2}{r^2 W(r)}  \left( \frac{\partial }{\partial t} - \frac{f(r)}{2} \frac{\partial}{\partial \widetilde{\psi} } \right)^2 + W(r) \left( \frac{\partial}{\partial r} \right)^2 + \frac{4}{r^2} \left( \frac{\partial }{\partial \theta} \right)^2  \nonumber \\ & + \frac{4}{r^2} \left( \frac{\cot \theta}{2} \frac{\partial}{\partial \widetilde{\psi}} - \frac{1}{\sin \theta} \frac{\partial}{\partial \phi}  \right)^2 + \frac{1}{\hat{b}(r)^2} \left( \frac{\partial}{\partial \widetilde{\psi}} \right)^2 \,  ,  
\end{align} and volume form  is given by,
%\begin{align}
%\nonumber d \textrm{Vol}_g &= \left( \frac{r \sqrt{W(r)} }{ \hat{b}(r)}  \td t \right) \wedge \frac{\td r}{\sqrt{W(r)}} \wedge \left(\frac{r }{2}\td \theta \right) \wedge \left(\frac{r \sin \theta }{2} \td \phi \right) \wedge \frac{\hat{b}(r)}{2} ( \td \psi + \cos \theta \td \phi + f \td t) \\
% & = \frac{r^3 \sin \theta}{8}  \td t \wedge \td r \wedge \td \theta \wedge \td \phi \wedge \td \psi
%\end{align}
%Using \eqref{psidef}, we get,
\begin{align}
\td \textrm{Vol}_g = \frac{r^3}{4} \sin \theta \td t \wedge \td r \wedge \td \widetilde{\psi} \wedge \td \theta \wedge \td \phi
\end{align}
%So, $\sqrt{-g} = \di \frac{r^3}{4}$. 
%\begin{align}
%\frac{1}{\sqrt{-g}} \frac{\partial}{\partial x^\mu} \left(g^{\mu \nu} \frac{\partial \Phi}{\partial x^\nu}  \right) = 0
%\end{align}
The wave equation can be explicitly written out as,
%\begin{align}
%\nonumber 
%\frac{1}{\sqrt{-g}} \frac{\partial}{\partial r} \left( g^{rr} \sqrt{-g} \frac{\partial \Phi}{\partial r}  \right) + \frac{1}{\sqrt{-g}} \frac{\partial }{\partial \theta} \left( g^{\theta \theta} \sqrt{-g} \frac{\partial \Phi}{\partial \theta} \right) + \frac{1}{\sqrt{-g}} \frac{\partial }{\partial x^A} \left( g^{AB} \sqrt{-g} \frac{\partial \Phi}{\partial x^B} \right) = 0
%\end{align}
\begin{align}
\Box_g \Phi = 
\frac{1}{r^3} \frac{\partial}{\partial r} \left( r^3 W(r)  \frac{\partial \Phi}{\partial r}  \right) + \frac{4}{r^2 \sin \theta} \frac{\partial }{\partial \theta} \left( \sin \theta  \frac{\partial \Phi}{\partial \theta} \right) + g^{AB} \frac{\partial^2 \Phi }{\partial x^A \partial x^B} 
\end{align} where $A,B = t,\phi,\widetilde{\psi}$ run over the ignorable coordinates and
\begin{align}
\nonumber g^{AB} \frac{\partial^2 }{\partial x^A \partial x^B} = - \frac{\hat{b}(r)^2}{r^2 W(r)} \left( \frac{\partial }{\partial t} - \frac{f(r)}{2} \frac{\partial }{\partial \widetilde{\psi}} \right)^2 + \frac{4}{r^2} \left( \frac{\cot \theta}{2} \frac{\partial }{\partial \widetilde{\psi}} - \frac{1}{\sin \theta} \frac{\partial}{\partial \phi} \right)^2  + \frac{1}{\hat{b}(r)^2} \frac{\partial^2 }{\partial \widetilde{\psi}^2}
\end{align}
The isometry group suggests we seek separable solutions of the form 
\begin{equation}
\Phi(t,r,\theta,\widetilde{\psi},\phi) = e^{-i \hat{\omega} t} e^{i n \widetilde{\psi}} R(r) Y(\theta,\phi) . 
\label{antsatz}
\end{equation}
With the separation ansatz \eqref{antsatz},
\begin{align}
g^{AB} \frac{\partial^2 }{\partial x^A \partial x^B} = \frac{\hat{b}}{r^2 W} \left( \hat{\omega} + \frac{fn}{2}\right)^2 \Phi - \frac{n^2}{\hat{b}^2} \Phi + \frac{4}{r^2} \left( \frac{\cot \theta}{2} \frac{\partial }{\partial \widetilde{\psi}} - \frac{1}{\sin \theta} \frac{\partial}{\partial \phi} \right)^2 \Phi
\end{align}
Consider the following round metric on $S^2$:
\begin{align}
\hat{g}_{\ti \tj} \td x^{\ti} \td x^{\tj} = \frac{1}{4} \left( \td \theta^2 + \sin^2 \theta  \td \phi^2 \right)
\end{align} 
normalized so that $ \mbox{Ric}(\hat{g}) = 4 \hat{g}$. Define the 1-form $A = \di \frac{\cos \theta}{2} \td \phi$ which is locally defined on $S^2$ which is easily seen to be a potential for the K\"ahler form on $\mathbb{CP}^1 \cong S^2$. Let
$$ D  := \nabla_{S^2} - i n A $$
We have
\begin{equation}
\begin{aligned}
D^2 & = \hat{g}^{\ti \tj} D_{\ti} D_{\tj} = g^{\ti \tj}\left( (\nabla_{S^2})_{\ti} - i n A_{\ti} \right) \left((\nabla_{S^2})_{\tj} - i n A_{\tj} \right) \nonumber \\ &= \Delta_{S^2} - 2 i n A_{\ti} \hat{g}^{\ti \tj} (\nabla_{S^2})_{\tj} - in \mbox{div}_{\hat{g}} A - n^2\hat{g}^{\ti \tj}
A_{\ti} A_{\tj} 
\end{aligned}
\end{equation} Since $\mbox{div}_{\hat{g}} A = \hat{g}^{\ti \tj} (\nabla_{S^2})_{\ti} A_{\tj} = 0 $,
\begin{align}
D^2 &= \Delta_{S^2} - 2 i n A_{\ti} \hat{g}^{\ti \tj} (\nabla_{S^2})_{\tj} - n^2\hat{g}^{\ti \tj}
A_{\ti} A_{\tj} 
\end{align}
We now compute $D^2$ explicitly. The Laplacian on $S^2$ is
\begin{align}
\Delta_{S^2} & = \frac{4}{\sin \theta} \partial_\theta \left(\sin \theta \partial_\theta  \right) + \frac{4}{\sin^2 \theta} \partial^2_\phi \end{align}
and the remaining terms are
\begin{align}
- 2 i n A_{\ti} \hat{g}^{\ti \tj} (\nabla_{S^2})_{\tj}  &= -4 in \frac{\cos \theta}{\sin^2 \theta} \partial_\phi \nonumber \\ 
n^2\hat{g}^{\ti \tj} A_{\ti} A_{\tj} &= n^2 \frac{4}{\sin^2 \theta} \frac{\cos^2 \theta}{4} = n^2 \cot^2 \theta 
\end{align} which gives
\begin{align}
D^2 = \frac{4}{\sin \theta} \partial_\theta (\sin \theta \partial_ \theta) + \frac{4}{\sin^2 \theta} \partial_\phi^2 - n^2 \cot^2 \theta - 4 in \frac{\cos \theta}{\sin^2 \theta} \partial_\phi
\end{align} The operator $D^2$ is the charged Laplacian on $S^2$ and its spectrum has been analyzed in detail in the context of $U(1)$ monopoles.  Its eigenfunctions $Y(\theta, \phi)$ (suppressing the eigenvalue labels) are similar to the standard spherical harmonics.
\begin{equation}
D^{2} Y(\theta,\phi) = -\mu Y(\theta,\phi) 
\end{equation}\label{spectrum}
$\mu \geq 0$ are a discrete family of eigenvalues with corresponding eigenfunctions $Y(\theta,\phi)$ \cite{Wu:1976ge,kunduri:2006qa}. The values taken by $\mu$ are,
\begin{align}
\mu = \ell(\ell+2) - n^2\,, \mbox{ where } \, \ell = 2 K + |n| \,\, \mbox{ with } K=0,1,2,3...
\end{align} For simplicity throughout this work we will suppress the eigenvalue labels that characterize the eigenfunctions; generally we will work with individual modes with eigenvalue $\mu$. 
%We can rewrite terms in the wave operator as,
%\begin{align}
%\nonumber \frac{4}{r^2 \sin \theta} &\frac{\partial }{\partial \theta} \left( \sin \theta  \frac{\partial \Phi}{\partial \theta} \right)  + g^{AB} \frac{\partial^2 \Phi }{\partial x^A \partial x^B}   \\ \nonumber & = \frac{4}{r^2 \sin^2 \theta} \partial_\theta (\sin \theta \partial_\theta \Phi) +  \frac{\hat{b}}{r^2 W} \left( \hat{\omega} + \frac{fn}{2}\right)^2 \Phi - \frac{n^2}{\hat{b}^2} \Phi   + \frac{4}{r^2} \left( \frac{\cot \theta}{2} \frac{\partial }{\partial \widetilde{\psi}} - \frac{1}{\sin \theta} \frac{\partial}{\partial \phi} \right)^2 \Phi  \\ \nonumber  &=  \frac{4}{r^2 \sin^2 \theta} \partial_\theta (\sin \theta \partial_\theta \Phi) +  \frac{\hat{b}}{r^2 W} \left( \hat{\omega} + \frac{fn}{2}\right)^2 \Phi - \frac{n^2}{\hat{b}^2} \Phi - \frac{n^2 \cot^2 \theta}{r^2} + \frac{4}{r^2 \sin^2 \theta} \partial_\phi^2 - \frac{4 i n \cos \theta}{r^2 \sin^2 \theta} \partial_\phi \Phi \\ \nonumber &= \frac{D^2 \Phi }{r^2} + \frac{\hat{b}^2}{r^2 W} \left( \hat{\omega} + 
%\frac{nf}{2} \right)^2 \Phi - \frac{n^2}{\hat{b}^2} \Phi
%\end{align}
We can concisely write the wave operator on $g$ as 
\begin{align}
\Box_g \Phi = \frac{1}{r^3} \partial_r \left( r^3 W \partial_r \Phi \right) + \frac{D^2}{r^2} \Phi + \frac{\hat{b}}{r^2 W} \left( \hat{\omega} + \frac{nf}{2} \right)^2 \Phi - \frac{n^2}{\hat{b}^2} \Phi
\end{align}
The wave equation with the separation ansatz \eqref{antsatz} becomes
\begin{align}
&\frac{1}{r^3} e^{-i \hat{\omega} t} e^{i n \widetilde{\psi}} Y(\theta,\phi) \frac{\td R(r)}{\td r} \left( r^3 W(r) \frac{\td R}{\td r} \right) + e^{-i \hat{\omega} t} e^{i n \widetilde{\psi}} \frac{R(r)}{r^2} D^2 Y(\theta,\phi) \nonumber \\ & \quad + e^{-i \hat{\omega} t} e^{i n \widetilde{\psi}} \frac{\hat{b}(r)^2}{r^2 W(r)} \left( \hat{\omega} + \frac{n f(r)}{2}  \right)^2  R(r) Y(\theta,\phi) - \frac{n^2}{\hat{b}(r)^2} e^{-i \hat{\omega} t} e^{i n \widetilde{\psi}} R(r) Y(\theta,\phi) = 0
\end{align}
which finally reduces to,
\begin{align}
\label{rad_eq}
\frac{1}{r^3} \frac{\td}{\td r} \left( r^3 W(r) \frac{\td  R(r)}{\td r } \right) + \left\{ -\frac{\mu}{r^2} + \frac{\hat{b}(r)^2}{r^2 W(r)} \left( \hat{\omega} + \frac{n f(r)}{2} \right)^2 - \frac{n^2}{\hat{b}(r)^2}  \right\} R(r) =0
\end{align}
To recast this into a Sch\"odinger-like form, we make the following transformations. 
\begin{align}
R(r) = \frac{u}{r \sqrt{\hat{b}(r)} },\, \quad w = \int_{r_0}^{w} \frac{\hat{b}(s)}{s W(s)} \td s 
\label{rcoordtrans}
\end{align}
% \frac{\td }{\td r} = \frac{\hat{b}(r)}{r W(r)} \frac{\td }{\td w}
after which \eqref{rad_eq} becomes,
\begin{align}
\frac{\td}{\td r} \left( r^3 W(r) \frac{\td  R(r)}{\td r } \right) = \frac{\hat{b}(r)^{3/2}}{W(r)} \frac{\td^2 u}{\td w^2} - \left\{ \frac{r W(r)}{\sqrt{\hat{b}(r)}} + \frac{1}{2} \frac{r^2 W(r)}{\hat{b}(r)^{3/2 }} \frac{\td \hat{b}(r)}{\td r} \right\} u
\end{align}
which can be rewritten as,
\begin{align} \nonumber
-\frac{\td^2 u}{\td w^2} + \left\{ \frac{W(r)}{\hat{b}(r)^{3/2}} \left( \frac{r W(r)}{\sqrt{\hat{b}(r)}} \right.  \left. + \frac{1}{2} \frac{r^2 W(r)}{\hat{b}(r)^{3/2 }} \frac{\td \hat{b}(r)}{\td r} \right)     + \frac{W(r)}{\hat{b}(r)} \left(\mu - \frac{\hat{b}(r)^2}{ W(r)} \left( \hat{\omega} + \frac{n f(r)}{2} \right)^2 - \frac{n^2 r^2}{\hat{b}(r)^2}  \right) \right\} u = 0
\label{shrform}
\end{align}
Comparing with a Schr\"{o}dinger equation of the form
\begin{equation}
- \frac{\td^2 u}{ \td w^2 } + \widetilde{V} u = 0
\label{ode_main}
\end{equation}
we can read off the potential as $\widetilde{V}$,
\begin{align}
\widetilde{V} = \frac{W}{{\hat{b}}^{\frac{3}{2}}} \partial_r \left[ \frac{rW}{ {\hat{b}}^{\frac{1}{2}}} + \frac{1}{2} \frac{r^2 W \partial_r {\hat{b}}}{ {\hat{b}}^{\frac{3}{2}} } + \right] + \frac{W}{{\hat{b}}^2} \left[ \mu + \frac{n^2 r^2}{{\hat{b}}^2} - \frac{{\hat{b}}^2}{W} \left( \hat{\omega} + \frac{nf}{2}\right)^2  \right]
\label{potential}
\end{align} In summary  we have shown that not only can the wave equation be separated, but we can obtain explicit, analytic solutions for the separated solution apart from a single radial Schr\"{o}dinger equation.  This is in contrast with other stationary non-static solutions for which the angular part of the wave equation cannot be solved explicitly (e.g. Kerr or generic Myers-Perry black holes).  This nice property characteristic of cohomogeneity-one rotating black holes has been used in the study of linearized gravitational perturbations (see, e.g. \cite{kunduri:2006qa})

\subsection{Trapping of high frequency waves}
Consider solutions to  the wave equation 
$\Box_g \Phi =0$ which are of the form, $\Phi(\mathbf{y},t) = e^{- i  \hat{\omega} t} U(\mathbf{y}) $ where $\mathbf{y}$ refers to a spatial variable. Let $U(\mathbf{y})$ solve the following Schr\"{o}dinger type equation, 
\begin{align}
- \frac{\td^2 U}{ \td \mathbf{y}^2 } + (\mathbf{V} - \hat{\omega}^2) U = 0
\end{align}
where $\mathbf{V} := \mathbf{V}(\mathbf{y})$. Consider a structure of the potential $\mathbf{V}$ as depicted in Fig.\ref{potential_soliton}.
\begin{figure}[H]
	\includegraphics[scale=0.7]{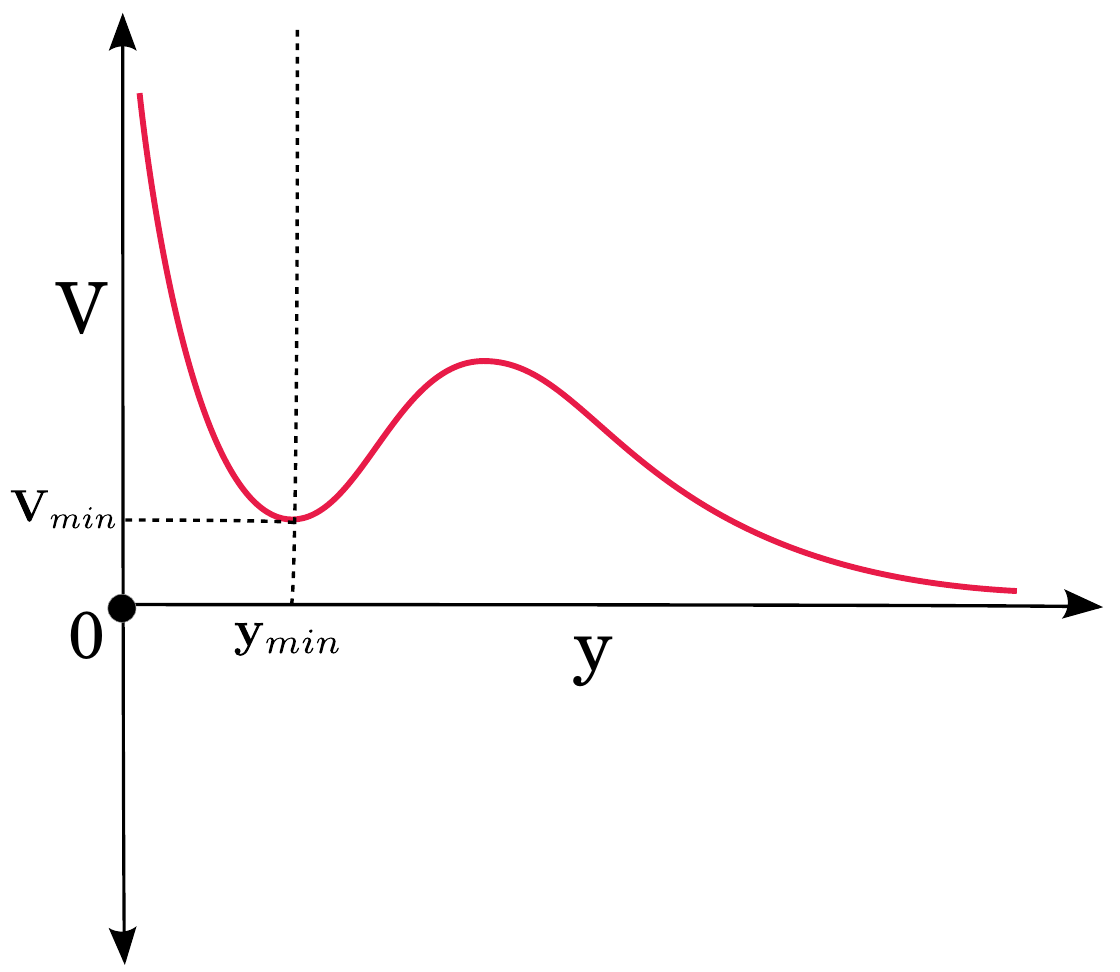}
	\caption{Structure of potential in stable trapping}
	\label{potential_soliton}
\end{figure}
A minimum in the potential $\mathbf{V}_{min}$ indicates that high frequency waves with \emph{suitable energy} ($\omega^2$) which roughly travel along null geodesics remain localized in the the region about $\mathbf{y}_{min}$. In other words, we say that high frequency waves are \emph{trapped}. One can intuitively see that this trapping ultimately leads to a slow decay of waves. The purpose of this work is  to prove this rigorously. 

In comparison to the discussion above, $ \widetilde{V} $ has a term dependent on $\hat{\omega}$ viz., $-nf\hat{\omega}$. Here, we will understand how to analyze the structure of $\widetilde{V}$. The expression for  $\widetilde{V}$, \eqref{potential} has two kinds of terms involving $\hat{\omega}$,  namely (a) $\hat{\omega}^2$ which is the eigenvalue and (b) $-nf\hat{\omega}$ which is a nonlinear term in the potential $\widetilde{V}$. We define $\widehat{V}$ as the part of the potential independent of $\hat{\omega}$ which is the analogue of $\mathbf{V}$ above. 
\begin{align}
\widehat{V} = \widetilde{V} + \hat{\omega}^2 + nf\hat{\omega} = & \frac{W^2}{{\hat{b}}^2} + \frac{rW (\partial_r W)}{{\hat{b}}^2} - \frac{r W^2 }{2} {\left(\frac{\partial_r{\hat{b}}}{{\hat{b}}^3}\right)} + \frac{r^2 W(\partial_r W)}{2} {\left(\frac{\partial_r{\hat{b}}}{{\hat{b}}^3}\right)}  + r W^2 \left(\frac{\partial_r {\hat{b}}}{{\hat{b}}^3}\right) \nonumber \\  & + \frac{W^2 r^2}{2} \left(\frac{{\partial^2_r} {\hat{b}}}{{\hat{b}}^3}\right) - \frac{3 W^2 r^2 {\hat{b}}^2}{4} \left(\frac{\partial_r {\hat{b}}}{{\hat{b}}^3}\right)^2  + \frac{W}{{\hat{b}}^2} \left[ \mu + \frac{n^2 r^2}{{\hat{b}}^2} -  \left(  \frac{nf}{2}\right)^2  \right] 
\end{align}
We use the following definitions to simplify the expressions:
\begin{align}
Y_1(r) &= \frac{\partial_r {\hat{b}} }{{\hat{b}}^3} ,\, \quad Y_2(r) = \frac{\partial^2_r b }{{\hat{b}}^3} = \partial_r(Y_1(r)) - 3{\hat{b}}^2 (Y_1(r))^2   .
\end{align}
Rewriting the potential in terms of $Y_1(r)$ and $Y_2(r)$ (chiefly to avoid the appearance of odd powers of $\hat{b}(r)$), we get the following expression, 
\begin{align}
\widehat{V} = & \frac{W^2}{{\hat{b}}^2} + \frac{rW (\partial_r W)}{{\hat{b}}^2} - \frac{r W^2 }{2 } Y_1(r) + \frac{r^2 W(\partial_r W)}{2} Y_1(r)  + r W^2 Y_1(r)  \nonumber \\  & + \frac{W^2 r^2}{2} Y_2(r) - \frac{3 W^2 r^2 {\hat{b}}^2}{4} Y_1(r)^2    \frac{\mu W}{{\hat{b}}^2}  + \frac{W n^2 r^2}{{\hat{b}}^4} -  \frac{n^2f^2}{4} .
\end{align}
$\widehat{V}$ has terms which depend on $n$ and $\mu$ and terms indepenent of these charged Laplacian eigenvalues. Hence, we decompose $\widehat{V} = \widehat{V}_{dom} + \widehat{V}_{j}$ where $\widehat{V}_{dom} $  is the dominant part of the potential. This reflects the fact that for large $n$ or $\mu$, $\widehat{V}_{dom}$ would be the term dictating the behaviour of the potential i.e., for sufficiently large $n$ and $\mu$, $ \widehat{V} \lesssim \widehat{V}_{dom}$.  
\begin{align}
\widehat{V}_{dom} &=   \frac{\mu W}{{\hat{b}}^2}  + \frac{W n^2 r^2}{{\hat{b}}^4} -  \frac{n^2f^2}{4}  \\ 
\widehat{V}_{j} = \widehat{V} - \widehat{V}_{dom} &=  \frac{W^2}{{\hat{b}}^2} + \frac{rW (\partial_r W)}{{\hat{b}}^2} - \frac{r W^2 }{2 } Y_1(r) + \frac{r^2 W(\partial_r W)}{2} Y_1(r) \nonumber \\  & + r W^2 Y_1(r) + \frac{W^2 r^2}{2} Y_2(r) - \frac{3 W^2 r^2 {\hat{b}}^2}{4} Y_1(r)^2 
\end{align}
We recall that the eigenvalues $n$ and $\mu $ are related as in~\eqref{spectrum} and 
$K$ can be independently chosen and here we choose it to vary as $n$. With this, the dependence of $\mu$ on $n$ is :
\begin{align}
\mu = 8n^2 + 6n
\end{align}
We can see that only the terms proportional to $n^2$ in $\widehat{V}_{dom}$  matter when $n$ is large. This happens to be the regime of $n$ we are interested in for reasons which will be given in the next section. $V_{dom}$ can be split up as, 
\begin{align}
\widehat{V}_{dom} = n^2 \widehat{V}_{\sigma_1} + n \widehat{V}_{\sigma_2} 
\end{align}
Explicitly, 
\begin{align}
\widehat{V}_{\sigma_1} = \frac{8 W}{{\hat{b}}^2} + \frac{W r^2}{{\hat{b}}^4} - \frac{f^2}{4} \quad \mbox{and} \quad \widehat{V}_{\sigma_2} = \frac{6W}{{\hat{b}}^2}
\end{align}
Hence, the ODE of interest is
\begin{align}
- \frac{\td^2 u}{ \td w^2 } + (n^2 \widehat{V}_{\sigma_1} - nf \hat{\omega} - \hat{\omega}^2) u = 0
\label{main_eq}
\end{align}
The results for \eqref{main_eq} will carry over for $(\widehat{V}_{\sigma_1} - nf \hat{\omega} - \hat{\omega}^2)$ replaced by the effective potential $\widetilde{V}$.
As described in Sec.\ref{minsugra}, the solution is a 1-paramter family. Here, we rewrite the differential equation in terms of dimensionless variables as we did while analyzing trapping of null geodesics. With the following rescalings, $ 
w = j x \,, r_0 = \alpha j \,\, \text{ and }  \di \omega = \hat{\omega}/j  $, (noting that $w$ scales the same way as $r$) \eqref{main_eq} becomes,
\begin{align}
\begin{aligned}
- \frac{1}{j^2}\frac{\td^2 u}{ \td x^2 } + \frac{1}{j^2}\left( n^2 V_{\sigma_1} - n \widetilde{f} \omega - \omega^2 \right) u &= 0 \\ \implies - \frac{\td^2 u}{ \td x^2 } + \left( V_{\sigma_1} - n \widetilde{f} \omega - \omega^2 \right) u &= 0  \label{main_eq1}
\end{aligned}
\end{align}
where, $V_{\sigma_1} = j^2 \widehat{V}_{\sigma_1}  \mbox{ and } \tilde{f} = j f$. $V_{\sigma_1}$ is explicitly given below,
\begin{align}
\begin{aligned}
V_{\sigma_1} = \frac{\left( {x}^{2} -{\alpha}^{2} \right)^{-1}}{16 \left( {\alpha}^{6}+{\alpha}^{2}{x}^{2}+{x}^{4} \right) ^{2}
	 } \di
\left\{ \right. & \left. 129 {\alpha}^{14} - \left( 129 {x}^{2}-128 \right) {\alpha}^
		{12}-128 {\alpha}^{10}+128\,{\alpha}^{8}{x}^{2} \right. \\ & \left. - \left(144\,{x}^{6}
		-128\,{x}^{4}+128\,{x}^{2} \right) {\alpha}^{6}- \left( 144 {x}^{6}+
		128\,{x}^{4} \right) {\alpha}^{4} \right. \\ & \left. +144 {x}^{6}{\alpha}^{2}+144 {x}^{8
	} \right\} 
\end{aligned}
\end{align}

As the first step, we confirm that the spacetime exhibits the structure for stable trapaping with a plot of $V_{\sigma_1}$ in Fig.\ref{sol_potential}. 
The minimum characterizes the stably trapped region and the region in the neighbourhood of the minimum, which is devoid of any local maxima will be denoted by $[x_{-},x_{+}]$. 
\begin{figure}[htb!]
	\includegraphics[scale=2.7]{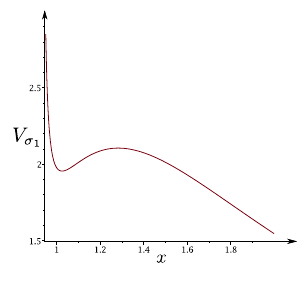}
	\caption{Plot of $ V_{\sigma_1}$ against $x $}
	\label{sol_potential}
\end{figure}
The aim of studying \eqref{main_eq1} is the construction of eigenfunctions in $[x_{-},x_{+}]$ with Dirichlet conditions which will be seen in subsequent sections. To give more relevance to this construction here, we give an informal introduction to quasimodes (see \cite{benomio:2018ivy}).

Consider time periodic functions of the form $\Psi_n(t,x) = e^{-i \omega_n t} u_n(x) $ where $\omega_n$ are real.  Quasimodes are approximate solutions of the form $\Psi_n$ to the wave equation satisfying the following properties:
\begin{enumerate}
	\item $\Psi_n$ belongs to an appropriate energy space. 
	\item They are localized in frequency and space i.e.,
	\begin{equation}
	\nonumber \norm{\partial^2 \Psi_n} \approx \omega_n^2 \norm{\Psi_n}
	\end{equation}
	and $\Psi_n$ are compactly supported. 
	\item They are an approximate solution to the wave equation i.e.,
	\begin{equation}
	\nonumber \Box_g \Psi_n = F_n(\Psi_n)
	\label{inhom}
	\end{equation}
	where $F_n(\Psi_n) \to 0$ as $n \to \infty$. Intuitively,  the error can be made small in an appropriate limit. 
\end{enumerate} In particular, consider the case where $ F_n(\Psi_n) \sim e^{-C n} $ where $C$ is any constant. By constructing an appropriate sequence of the approximate solutions  $\Psi_n$ one can establish that there are slow decaying solutions to the wave equation which contradicts any uniform fast decay statement.    

\subsection{Linear and nonlinear eigenvalue problems} The main eigenvalue problem that we study is the 
Schr\"{o}dinger type wave equation along with Dirichlet boundary conditions at $x_{-}$ and $x_{+}$. The problem is stated below
\begin{align}
\begin{aligned} 
- \frac{\td^2 u}{ \td x^2 } + & \left( n^2 V_{\sigma_1} - n \widetilde{f} \omega - \omega^2  \right) u = 0  \\ & u(x_{-}) = u(x_{+}) = 0  \label{shreq}
\end{aligned}
\end{align}
As mentioned previously the ``potential term" appearing here has a nonlinear dependence on $\omega$, which constitutes a nonlinear eigenvalue problem. This makes a straightforward analysis of \eqref{shreq} difficult. We define ${\mathcal{P}}_\beta$ to be the following family of eigenvalue problems labeled by $\beta \in [0,1]$. 
\begin{align} 
\label{evfamily}
{\mathcal{P}}_\beta \quad  : \begin{aligned} 
- \frac{\td^2 u}{ \td x^2 } + & \left( n^2 V_{\sigma_1} - \beta n \widetilde{f} \omega - \omega^2  \right) u = 0  \\ & u(x_{-}) = u(x_{+}) = 0 
\end{aligned}
\end{align}
We can identify ${\mathcal{P}}_0$ as the linear eingevalue problem (since the potential does not depend on $\omega$). 
\begin{align} 
\label{shr_lin}
{\mathcal{P}}_0 \quad  : \begin{aligned} 
- \frac{\td^2 u}{ \td x^2 }  +   n^2 V_{\sigma_1} u = \omega^2 u \\  u(x_{-}) = u(x_{+}) = 0 
\end{aligned}
\end{align} and ${\mathcal{P}}_1$ is the nonlinear eigenvalue problem \eqref{shreq} that we want to solve. Hence $\beta$ is a nonlinear parameter that represents a transition from the linear eigenvalue problem ${\mathcal{P}}_0$ to the nonlinear eigenvalue problem ${\mathcal{P}}_1$.

Before continuing with the analysis of these problems, we pause to note similarities in the soliton and $\mbox{Kerr-AdS}_4$ case for the construction of quasimodes, the most fundamental being the phenomenon of stable trapping occuring in both. A key difference arises in the extension of results from $\mathcal{P}_0$ to $\mathcal{P}_1$. In the $\mbox{Kerr-AdS}_4$ case \cite{holzegel:2013kna}, the potential has a nonlinear term which is proportional to $\omega^2$, so the whole eigenvalue equation is quadratic in $\omega^2$.  In our case the nonlinearity is $ \sim \omega n$. The difficulty arises from the presence of the eigenvalue $n$ with $\omega$ and the fact that we have terms proportional to both $\omega^2$ and $\omega$ in the equation.  Such problems were encountered in the analysis of quasimodes and stable trapping in black ring spacetimes~\cite{benomio:2018ivy}, and we will follow the strategy developed there. 

\section{Eigenvalues for the linear problem }
\label{sec:linear}
In this section we use  a suitable version of Weyl's law to establish the existence of eigenfunctions for the linear problem ${\mathcal{P}}_0$ defined by \eqref{shr_lin}.  We start by defining a semi classical parameter $ h^{2} = n^{-2} $
%\begin{align} 
%{\mathcal{P}}_0 \quad  : \begin{aligned} 
%- h^2 \frac{\td^2 u}{ \td x^2 }  +   V_{\sigma_1} u = h^2 \omega^2 u \\  u(x_{-}) = u(x_{+}) = 0 
%\end{aligned}
%\end{align}
and express the problem in the fform  
\begin{align}
\begin{aligned} 
- h^2 \frac{\td^2 u}{ \td x^2 }  +   V_{\sigma_1} u = \kappa u \\  u(x_{-}) = u(x_{+}) = 0 
\end{aligned}
\end{align} where we have defined $\kappa$ to be the eigenvalue i.e., $\kappa := h^2 \omega^2$.
We identify the region $\Omega := [x_{-},x_{+}]$ for the eigenvalue problem through the following lemma.  

\begin{lemma}
	\label{cty_potential}
	Let $V^{min}_{\sigma_1}$ be the local minimum of the potential and let $x_{min}\in (\alpha,\infty) $ be the point where this minimum is attained i.e., $V_{\sigma_1}({x_{min}}) = V^{min}_{\sigma_1}$. Let $c>0$ be sufficiently small so that there exist $x_{-}$ and $x_{+}$ with $x_{-} < x_{min} < x_{+}$ for which, $V^{min}_{\sigma_1} + c = V_{\sigma_1}(x_{-}) = V_{\sigma_1}(x_{+})$ and there are no local maxima of $V_{\sigma_1}$ in $[x_{-},x_{+}]$. Let $E>V_{\sigma_1}^{min}$ such that $E-V_{\sigma_1}^{min} < c$. Then for any sufficiently small constants $\delta, \delta^{'} > 0$ there exists some constant $c^{'}>0$ such that 
	\begin{align}
	|x_{\pm}- x| < \delta^{'} \implies V_{\sigma_1}(x) - \kappa > c^{'}
	\end{align}    
	for all $\kappa \in [E-\delta,E+\delta]$.
\end{lemma}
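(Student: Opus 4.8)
The statement is a quantitative continuity/coercivity fact about the smooth function $V_{\sigma_1}$ near the pair of points $x_\pm$ where it crosses the level $V_{\sigma_1}^{min}+c$. The plan is to extract a uniform lower bound on $V_{\sigma_1}-\kappa$ on two small neighbourhoods of $x_\pm$ by combining (i) a strict lower bound on $V_{\sigma_1}$ itself near $x_\pm$, coming from the fact that $V_{\sigma_1}(x_\pm)=V_{\sigma_1}^{min}+c$ is strictly above the energy window, with (ii) continuity of $V_{\sigma_1}$ to absorb the $\delta'$-perturbation in $x$.

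First I would fix the energy window. Since $E>V_{\sigma_1}^{min}$ and $E-V_{\sigma_1}^{min}<c$, and since $\kappa\in[E-\delta,E+\delta]$, for $\delta$ small enough (say $\delta<\tfrac12\big(c-(E-V_{\sigma_1}^{min})\big)$) every admissible $\kappa$ satisfies $\kappa\le E+\delta < V_{\sigma_1}^{min}+c$. Set $2\varepsilon := V_{\sigma_1}^{min}+c-(E+\delta)>0$, so that $V_{\sigma_1}(x_\pm)-\kappa \ge 2\varepsilon$ uniformly over the window. Next I would use that $V_{\sigma_1}$ is continuous (indeed real-analytic, being an explicit rational function on $(\alpha,\infty)$, with no pole since $x_\pm>\alpha$) at the two isolated points $x_-$ and $x_+$: there is $\delta'>0$ such that $|x-x_\pm|<\delta' \implies |V_{\sigma_1}(x)-V_{\sigma_1}(x_\pm)|<\varepsilon$. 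Combining, for any $x$ with $|x-x_\pm|<\delta'$ and any $\kappa$ in the window,
\begin{align}
V_{\sigma_1}(x)-\kappa \;=\; \big(V_{\sigma_1}(x)-V_{\sigma_1}(x_\pm)\big) + \big(V_{\sigma_1}(x_\pm)-\kappa\big) \;>\; -\varepsilon + 2\varepsilon \;=\; \varepsilon.
\end{align}
Thus $c' := \varepsilon$ works, and one may of course shrink $\delta'$ further if a smaller neighbourhood is wanted; the conclusion holds for all $\kappa\in[E-\delta,E+\delta]$ as required.

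There is essentially no serious obstacle here: the only thing to be careful about is the order of quantifiers. The constant $\delta$ must be chosen first (small relative to the gap $c-(E-V_{\sigma_1}^{min})$) so that the whole window $[E-\delta,E+\delta]$ stays strictly below the crossing level; only then is $\delta'$ chosen via continuity of $V_{\sigma_1}$ at $x_\pm$, and $c'$ is the resulting margin. One should also record that $x_\pm$ themselves exist and are well-defined: this is part of the hypothesis ("let $c>0$ be sufficiently small so that there exist $x_-$ and $x_+$ ..."), guaranteed because $x\mapsto V_{\sigma_1}(x)$ is continuous, has a strict local minimum $V_{\sigma_1}^{min}$ at $x_{min}$, and tends to a larger value on either side within the trapping well (as visible in Fig.~\ref{sol_potential}), so every sufficiently small $c>0$ is attained on both sides with no intervening local maximum. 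With these bookkeeping points settled the estimate above is immediate, so I expect the "hard part" to amount to nothing more than stating the choices of $\delta$, $\delta'$, $c'$ in the correct dependency order.
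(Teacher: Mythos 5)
Your proposal is correct and follows essentially the same route as the paper's proof: fix $\delta$ so that $E+\delta<V_{\sigma_1}^{min}+c$, use continuity of $V_{\sigma_1}$ at $x_{\pm}$ to control $|V_{\sigma_1}(x)-V_{\sigma_1}(x_{\pm})|$ on a $\delta'$-neighbourhood, and take $c'$ to be the leftover margin. The only difference is cosmetic (you split the gap in halves where the paper uses thirds, getting $c'=\varepsilon$ versus $c'=2\tilde{\epsilon}$).
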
 

\begin{proof}
	The idea behind the above Lemma is illustrated in Fig.\ref{continuity_potential}.
	\begin{figure}[!htb]
		\centering
		\begin{figure}[H]
			%\centering     %%% not \center
			\includegraphics[scale=1]{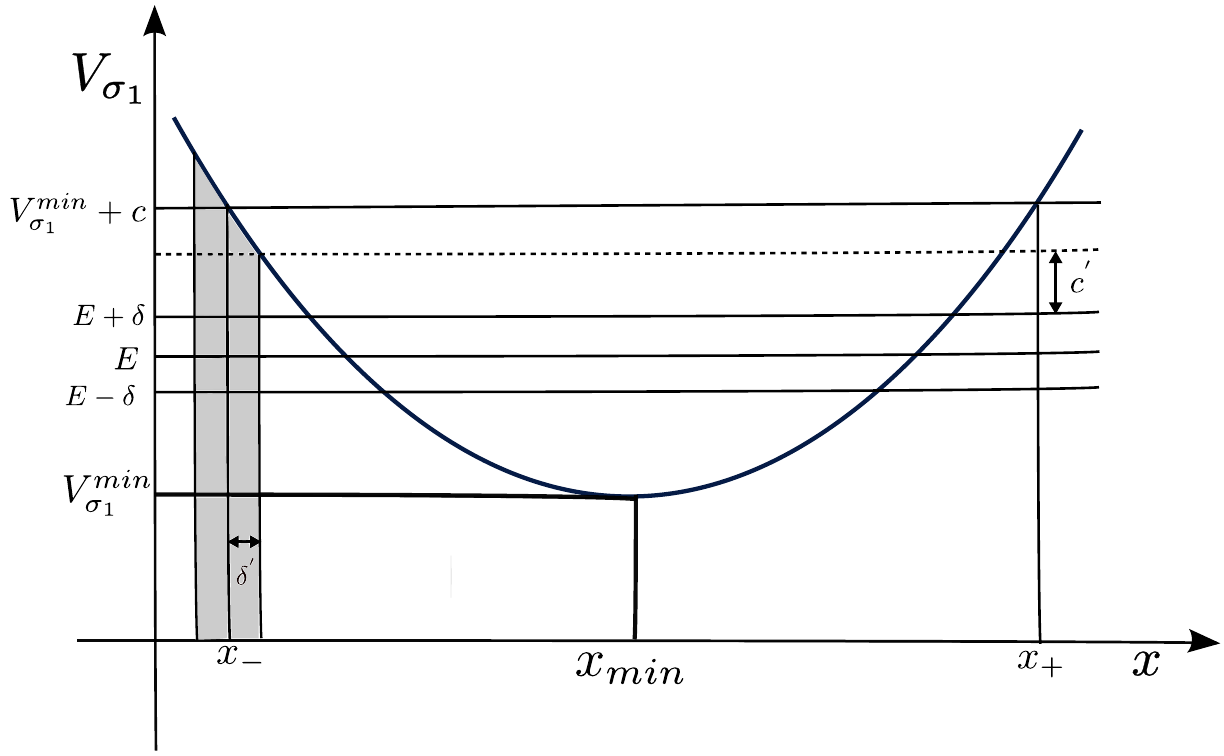}
			\caption{Domain for the eigenvalue problem}
			\label{continuity_potential}
		\end{figure}
	\end{figure}
	We can fix a sufficiently small constant $\delta$ such that $E + \delta < V_{\sigma_1}^{min} + c $. $V_{\sigma_1}(x)$ is continuous at $x_{-}$.  In the following we will establish the result  for $x_{-}$ and the proof for $x_{-}$ replaced by $x_{+}$ follows by a similar argument. For a given $\tilde{\epsilon}$, one can find a $\delta^{'}$ such that, $ |V_{\sigma_1}(x) - V_{\sigma_1}(x_{-}) | < \tilde{\epsilon} $ whenever $|x - x_{-}| < \delta^{'}$. Choose 
	\begin{align}
	\tilde{\epsilon} = \frac{V_{\sigma_1}(x_{-}) - (E+\delta)}{3} \nonumber
	\end{align}
	This means, whenever $|x_{-} - x | < \delta^{'}$, $|V_{\sigma_1}(x_{-}) - V_{\sigma_1}(x)|  < \tilde{\epsilon} $ 
	\begin{align}
	V_{\sigma_1}(x_{-}) - \tilde{\epsilon} <  V_{\sigma_1}(x) < V_{\sigma_1}(x_{-}) + \tilde{\epsilon} \nonumber
	\end{align}
	For $\kappa \in [ E-\delta,E+\delta ]$ we have 
	\begin{align}
	V_{\sigma_1}(x) - \kappa & > V_{\sigma_1}(x_{-}) - \tilde{\epsilon} - \kappa \nonumber \\ 
	& > V_{\sigma_1}(x_{-}) - \tilde{\epsilon} - (E+\delta) \nonumber \\ &= 3 \tilde{\epsilon} - \tilde{\epsilon} = 2 \tilde{\epsilon} \nonumber
	\end{align}
	Setting $c^{'} = 2 \tilde{\epsilon} $ completes the proof.	
\end{proof}

We now state and prove Weyl's law. This allows us to establish the existence of eigenfunctions for the Dirichlet problem in the domain $[x_{-},x_{+}]$. More precisely, this statement proves that the number of eigenvalues $\kappa$ in some small neighborhood scale as $h^{-1}$. The eigenvalue problem with Dirichlet conditions will be denoted by $\mathcal{P}_{D}(x_{-},x_{+})$ and  $N_{\leq E} (\widetilde{\mathcal{P}})$ denotes the number of eigenvalues of the problem $\widetilde{\mathcal{P}}$ which are less than or equal to $E$.

\begin{theorem}[\textbf{Weyl's law}]
	\label{weyls_law}
	Consider the eigenvalue problem $\mathcal{P}_{D}(x_{-},x_{+})$. Let $E$ be an energy level such that $E - V_{\sigma_1}^{min}$ is sufficiently small and $E - V_{\sigma_1}^{min} > \delta$ for some fixed positive constant $\delta$ such that $E + \delta < V_{\sigma_1}^{min} + c$ with the constant $c>0$ introduced in Lemma \ref{cty_potential}. Then the number of eigenvalues of the problem $\mathcal{P}_{D}(x_{-},x_{+})$ less than $E$, denoted by $N_{\leq E} (\mathcal{P}_{D}(x_{-},x_{+}) )$, satisfies the following estimate called Weyl's law.		
	\begin{align}
	N_{\leq E} (\mathcal{P}_{D}(x_{-},x_{+}) ) \sim \mathcal{Q}_{E,h}  
	\end{align} 
	where
	$$ \mathcal{Q}_{E,h} :=  \di \frac{1}{h \pi}\int_{x_{-}}^{x_{+}} \di \sqrt{ E - \di V_{\sigma_1}(x^*)} \chi_{\{ V_{\sigma_1} \leq E \}} \mathrm{d} x^{*} .$$
\end{theorem}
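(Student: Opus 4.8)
The plan is to establish these semiclassical Weyl asymptotics by Dirichlet--Neumann bracketing: combine the min--max characterization of the eigenvalues of the self-adjoint operator $A_{h} := -h^{2}\,\td^{2}/\td x^{2} + V_{\sigma_1}$ on $L^{2}([x_{-},x_{+}])$ with Dirichlet conditions with the exactly solvable eigenvalue problem for a \emph{constant} potential on an interval. Since the well $[x_{-},x_{+}]$ lies strictly to the right of the pole of $V_{\sigma_1}$ at $x=\alpha$ (by the construction of $x_{\pm}$ in Lemma~\ref{cty_potential}), the restriction $V_{\sigma_1}|_{[x_{-},x_{+}]}$ is smooth and bounded, so $A_{h}$ has compact resolvent and discrete spectrum $\kappa_{1}(h)\le\kappa_{2}(h)\le\cdots\to\infty$, and $N_{\le E}(\mathcal{P}_{D}(x_{-},x_{+}))=\#\{k:\kappa_{k}(h)\le E\}$ is finite. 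By Courant--Fischer, each $\kappa_{k}$ is monotone increasing under shrinking the form domain (in particular under replacing an interior Neumann condition by a Dirichlet one) and under enlarging the potential, and decreasing under the opposite operations.

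First I would fix an integer $M$, partition $[x_{-},x_{+}]$ into $M$ equal subintervals $I_{j}$ of length $\ell=(x_{+}-x_{-})/M$, and set $v_{j}^{+}:=\max_{\overline{I_{j}}}V_{\sigma_1}$ and $v_{j}^{-}:=\min_{\overline{I_{j}}}V_{\sigma_1}$. Imposing extra Dirichlet conditions at the interior nodes decouples $A_{h}$ as a direct sum over the $I_{j}$, and replacing $V_{\sigma_1}$ on $I_{j}$ by the constant $v_{j}^{+}$ only raises the operator, so $\kappa_{k}(A_{h})\le\kappa_{k}(\bigoplus_{j}B_{j}^{+})$ with $B_{j}^{+}=-h^{2}\td^{2}/\td x^{2}+v_{j}^{+}$ (Dirichlet on $I_{j}$). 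Symmetrically, imposing Neumann conditions at the interior nodes and at $x_{\pm}$ and replacing $V_{\sigma_1}$ on $I_{j}$ by $v_{j}^{-}$ only lowers the operator, so $\kappa_{k}(A_{h})\ge\kappa_{k}(\bigoplus_{j}C_{j}^{-})$ with $C_{j}^{-}=-h^{2}\td^{2}/\td x^{2}+v_{j}^{-}$ (Neumann on $I_{j}$). Passing to counting functions (where these inequalities reverse and $N_{\le E}$ of a direct sum is the sum),
\begin{equation}
\sum_{j=1}^{M}N_{\le E}(B_{j}^{+}) \ \le\ N_{\le E}(\mathcal{P}_{D}(x_{-},x_{+})) \ \le\ \sum_{j=1}^{M}N_{\le E}(C_{j}^{-}). \nonumber
\end{equation}
For a constant potential $v$ on an interval of length $\ell$ the Dirichlet eigenvalues are $v+h^{2}\pi^{2}m^{2}/\ell^{2}$, $m\ge 1$, and the Neumann ones include also $m=0$; hence $N_{\le E}(B_{j}^{+})=\lfloor\tfrac{\ell}{\pi h}\sqrt{(E-v_{j}^{+})_{+}}\rfloor$ and $N_{\le E}(C_{j}^{-})=\lfloor\tfrac{\ell}{\pi h}\sqrt{(E-v_{j}^{-})_{+}}\rfloor+1$. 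Writing $g(x):=\sqrt{(E-V_{\sigma_1}(x))_{+}}$, which is continuous on $[x_{-},x_{+}]$ and, by Lemma~\ref{cty_potential}, vanishes near $x_{\pm}$, this yields
\begin{equation}
\frac{1}{\pi h}\sum_{j=1}^{M}\ell\sqrt{(E-v_{j}^{+})_{+}}-M \ \le\ N_{\le E}(\mathcal{P}_{D}(x_{-},x_{+})) \ \le\ \frac{1}{\pi h}\sum_{j=1}^{M}\ell\sqrt{(E-v_{j}^{-})_{+}}+M, \nonumber
\end{equation}
where the two sums are exactly the lower and upper Riemann sums of $g$ over $\{I_{j}\}$.

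The conclusion then comes from an interchange-of-limits argument after dividing by $\mathcal{Q}_{E,h}=(\pi h)^{-1}\int_{x_{-}}^{x_{+}}g$. First note $\int_{x_{-}}^{x_{+}}g>0$: the hypotheses give $E-V_{\sigma_1}^{min}>\delta>0$, so $g(x_{min})=\sqrt{E-V_{\sigma_1}^{min}}>0$ and hence $g>0$ on a nonempty open interval around $x_{min}$, whence $\mathcal{Q}_{E,h}\asymp h^{-1}$. Given $\varepsilon>0$, I would choose $M$ large enough that both Riemann sums lie within $\varepsilon\int_{x_{-}}^{x_{+}}g$ of $\int_{x_{-}}^{x_{+}}g$ (possible since $g$ is continuous, hence Riemann integrable). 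With this $M$ held fixed, the additive errors $\pm M$ are $O(h)=O(n^{-1})$ relative to $\mathcal{Q}_{E,h}\asymp h^{-1}$, so for all sufficiently large $n=h^{-1}$ one has $1-2\varepsilon\le N_{\le E}(\mathcal{P}_{D}(x_{-},x_{+}))/\mathcal{Q}_{E,h}\le 1+2\varepsilon$; since $\varepsilon$ is arbitrary, $N_{\le E}(\mathcal{P}_{D}(x_{-},x_{+}))\sim\mathcal{Q}_{E,h}$.

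The main obstacle is not any single estimate but the correct \emph{ordering} of the two limits: the partition must be refined \emph{first}, so that the oscillation error (the gap between the upper and lower Riemann sums) is small compared with $\int g$, and only \emph{then} may $h\to 0$, so that the $O(M)$ lattice-point error coming from the floor functions is dominated by the $h^{-1}$ growth of $\mathcal{Q}_{E,h}$. This is exactly where the quantitative hypotheses on $E$ enter: $E-V_{\sigma_1}^{min}>\delta$ guarantees $\int g>0$, while $E+\delta<V_{\sigma_1}^{min}+c$ together with Lemma~\ref{cty_potential} ensures $g$ vanishes near $x_{\pm}$, so that the choice of boundary condition at the endpoints is irrelevant at leading order and the bracketing is clean. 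An alternative route, via the Pr\"ufer transformation --- counting zeros in $(x_{-},x_{+})$ of the solution of the energy-$E$ initial value problem and estimating the accumulated phase by WKB --- would give the same leading term with a sharper (Bohr--Sommerfeld-type) remainder, but the bracketing argument above is the most economical.
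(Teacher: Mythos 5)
Your proof is correct and follows essentially the same route as the paper: Dirichlet--Neumann bracketing over a partition of $[x_{-},x_{+}]$ via the min--max principle, comparison with the exactly solvable constant-potential Dirichlet/Neumann problems (with the potential replaced by its max/min on each subinterval, giving the floor-function counts), and passage to Riemann sums of $\sqrt{(E-V_{\sigma_1})_{+}}$. The only, harmless, difference is the bookkeeping of the double limit --- you fix the partition size $M$ against a given $\varepsilon$ and then send $h\to 0$, whereas the paper couples the number of subintervals to $h$ with $k(h)=o(1/h)$; both yield the same leading asymptotics.
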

We will also establish the following result which estimates the number of eigenvalues for the problem for $\mathcal{P}_{D}(x_{-},x_{+})$ lying in a $\delta$ interval of $E$.  
\begin{theorem} \label{weyl-appl}
	Let $N[E-\delta,E+\delta]$ denote the number of eigenvalues of $\mathcal{P}_{D}(x_{-},x_{+})$ lying in the interval $[E-\delta,E+\delta]$. Then $N[E-\delta,E+\delta]$ satisfies Weyl's law i.e., 	\begin{equation}
	N[E-\delta,E+\delta] \sim \mathcal{Q}_{E+\delta,h} - \mathcal{Q}_{E-\delta,h}
	\end{equation}			
\end{theorem}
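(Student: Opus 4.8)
\textbf{Proof proposal for Theorem \ref{weyl-appl}.}

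The plan is to derive the counting estimate in the interval $[E-\delta,E+\delta]$ directly from Theorem \ref{weyls_law} by a subtraction argument, exploiting the monotonicity of the eigenvalue counting function. First I would observe that, since the Dirichlet problem $\mathcal{P}_{D}(x_{-},x_{+})$ is a self-adjoint semiclassical Schr\"odinger problem on a bounded interval, its spectrum is discrete and bounded below, so the counting function $N_{\leq \lambda}(\mathcal{P}_{D}(x_{-},x_{+}))$ is well defined, nondecreasing in $\lambda$, and satisfies the obvious identity
\begin{equation}
N[E-\delta,E+\delta] = N_{\leq E+\delta}(\mathcal{P}_{D}(x_{-},x_{+})) - N_{\leq (E-\delta)^{-}}(\mathcal{P}_{D}(x_{-},x_{+})),
\end{equation}
where the second term counts eigenvalues strictly below $E-\delta$. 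The point is that both energy levels $E+\delta$ and $E-\delta$ still lie in the window to which Lemma \ref{cty_potential} and Theorem \ref{weyls_law} apply: by hypothesis $E-V_{\sigma_1}^{min}>\delta$ guarantees $E-\delta>V_{\sigma_1}^{min}$, and $E+\delta<V_{\sigma_1}^{min}+c$ guarantees that $E+\delta$ is below the confining barrier, so that the turning-point geometry of $V_{\sigma_1}$ on $[x_{-},x_{+}]$ is the same (a single classically allowed well with no interior maxima) at both levels.

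Next I would apply Theorem \ref{weyls_law} at the level $E+\delta$ and at the level $E-\delta$ separately, obtaining
\begin{equation}
N_{\leq E+\delta}(\mathcal{P}_{D}(x_{-},x_{+})) \sim \mathcal{Q}_{E+\delta,h}, \qquad N_{\leq E-\delta}(\mathcal{P}_{D}(x_{-},x_{+})) \sim \mathcal{Q}_{E-\delta,h},
\end{equation}
and then subtract. The subtraction is legitimate for the leading-order asymptotics because both quantities $\mathcal{Q}_{E\pm\delta,h}$ are of order $h^{-1}$ with a strictly positive leading coefficient (the phase-space volume $\int_{x_{-}}^{x_{+}}\sqrt{E\pm\delta - V_{\sigma_1}}\,\chi_{\{V_{\sigma_1}\le E\pm\delta\}}\,\mathrm{d}x^{*}$ is positive and, crucially, $\mathcal{Q}_{E+\delta,h}-\mathcal{Q}_{E-\delta,h}$ is itself of order $h^{-1}$ with positive leading coefficient, since $\sqrt{E+\delta - V_{\sigma_1}}>\sqrt{E-\delta-V_{\sigma_1}}$ pointwise on the common classically allowed region). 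Hence the error terms, which are $o(h^{-1})$, do not swamp the difference, and one concludes $N[E-\delta,E+\delta]\sim \mathcal{Q}_{E+\delta,h}-\mathcal{Q}_{E-\delta,h}$. A minor technical point to dispose of is the distinction between counting eigenvalues $\leq E-\delta$ versus $< E-\delta$: if $E-\delta$ happens to coincide with an eigenvalue this shifts the count by at most a bounded number independent of $h$ (in fact by the multiplicity, which is $1$ for a one-dimensional Dirichlet problem), which is absorbed into the $o(h^{-1})$ error; alternatively one perturbs $\delta$ slightly to avoid the spectrum, which is harmless for the asymptotic statement.

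The main obstacle I anticipate is not in the subtraction itself, which is essentially bookkeeping, but in being careful that the hypotheses of Theorem \ref{weyls_law} are genuinely met at \emph{both} endpoints $E\pm\delta$ simultaneously and uniformly as $h\to 0$ — in particular that the constant $\delta$ can be chosen once and for all (independent of $h$) so that $V_{\sigma_1}^{min}<E-\delta$ and $E+\delta<V_{\sigma_1}^{min}+c$ both hold, and that Lemma \ref{cty_potential} then supplies a uniform lower bound $V_{\sigma_1}-\kappa>c'$ near $x_{\pm}$ valid for all $\kappa\in[E-\delta,E+\delta]$, which is exactly what is needed to control the boundary behaviour of eigenfunctions in the Dirichlet bracketing argument underlying Weyl's law. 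Once that uniformity is in hand, the difference of the two Weyl asymptotics gives the claim immediately.
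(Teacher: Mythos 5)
Your proposal is correct and follows essentially the same route as the paper, whose proof is precisely the subtraction argument: apply the Weyl asymptotic \eqref{weyl-proof} at the two levels $E+\delta$ and $E-\delta$ and take the difference. The extra care you supply — checking that both endpoints satisfy the hypotheses of Lemma \ref{cty_potential} and Theorem \ref{weyls_law}, that the difference $\mathcal{Q}_{E+\delta,h}-\mathcal{Q}_{E-\delta,h}$ is itself of order $h^{-1}$ with positive leading coefficient so the $o(h^{-1})$ errors do not swamp it, and that the $\leq$ versus $<$ endpoint ambiguity costs only a bounded number of eigenvalues — is left implicit in the paper but is exactly the right justification.
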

We will prove this with the following two Lemmas which give upper and lower bounds for $N_{\leq E} (\mathcal{P}_{D}(x_{-},x_{+}) )$. These bounds will be explicitly calculated. We first partition $[x_{-},x_{+}]$ into $k$ intervals $[x^{i}_{-}, x^{i}_{+}]$ where
\begin{align}
x^{i}_{-} = {x_{-} + (i-1) \gamma } & \, \mbox{  and  } \, x^{i}_{+} = {x_{-} + i \gamma } \quad \mbox{ where } \gamma  = \frac{x_{+}-x_{-}}{k}\,, 
\end{align}
and define $k$ Dirichlet problems in each $[x^i_{-},x^i_{+}]$. The Dirichlet problems  $\mathcal{P}^i_{D}$ for $i=1,2...k$ are stated below 
\begin{align}
\begin{aligned}
- h^2 \frac{\td^2 u}{ dx^2 } +  V_{\sigma_1}    u & = \kappa u  \\
u({x^i_{-} })  = u({x^i_{+}}) &= 0
\end{aligned}
\end{align}
We next define $k$ Neumann problems $\mathcal{P}^i_{N}$ analogously. $\mathcal{P}^i_{D}$ and $\mathcal{P}^i_{N}$ will serve as two comparison problems for estimating $N_{\leq E} ({\mathcal{P}_D}(x_{-},x_{+}))$ through lower and upper bounds respectively. We start with the following Lemma which gives a lower bound through the $k$ Dirichlet problems $\mathcal{P}^i_{D}$.

\begin{lemma}[\textbf{Lower bound}]
	\label{lowerb} 
	The number of eigenvalues of the problem ${\mathcal{P}_D}(x_{-},x_{+})$ less than $E$ i.e., $N_{\leq E} ({\mathcal{P}_D}(x_{-},x_{+}))$ satisfies 
	\begin{align}
	\sum_{i}^{k} N_{\leq E} (\mathcal{P}^i_{D}) \leq N_{\leq E} ({\mathcal{P}_D}(x_{-},x_{+})) \, .
	\end{align}
\end{lemma}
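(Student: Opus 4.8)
The plan is to use the variational (min-max) characterization of Dirichlet eigenvalues together with the classical Dirichlet bracketing principle: enlarging the form domain can only decrease eigenvalues, and decomposing a domain into subdomains with Dirichlet conditions on the new internal boundaries corresponds to a \emph{restriction} of the form domain, hence can only \emph{increase} the eigenvalues. More precisely, I would introduce the quadratic form $\mathcal{Q}[u] = \int_{x_-}^{x_+} \left( h^2 |u'|^2 + V_{\sigma_1}|u|^2 \right)\,\mathrm{d}x$ on $H_0^1(x_-,x_+)$, and note that the direct sum $\bigoplus_{i=1}^k H_0^1(x^i_-,x^i_+)$, viewed as the subspace of $H_0^1(x_-,x_+)$ consisting of functions that vanish at every partition point $x^i_\pm$, is a closed subspace of $H_0^1(x_-,x_+)$. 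The operator associated to $\mathcal{Q}$ restricted to this subspace is exactly the orthogonal sum $\bigoplus_i \mathcal{P}^i_D$, whose spectrum (counted with multiplicity) is the union of the spectra of the individual $\mathcal{P}^i_D$.

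The key steps, in order, are: first, recall the Courant--Fischer min-max formula, $\lambda_m(\widetilde{\mathcal{P}}) = \min_{\substack{S \subseteq \mathcal{H}\\ \dim S = m}} \ \max_{\substack{u \in S\\ u \neq 0}} \frac{\mathcal{Q}[u]}{\|u\|^2}$, where $\mathcal{H}$ is the relevant form domain. Second, observe that since $\bigoplus_i H_0^1(x^i_-,x^i_+) \subseteq H_0^1(x_-,x_+)$, every $m$-dimensional subspace of the smaller space is also an admissible competitor for the larger problem, so $\lambda_m\big(\bigoplus_i \mathcal{P}^i_D\big) \geq \lambda_m\big(\mathcal{P}_D(x_-,x_+)\big)$ for every $m$. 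Third, translate this eigenvalue inequality into the desired counting-function inequality: if the sum of the counting functions $\sum_i N_{\leq E}(\mathcal{P}^i_D)$ equals some integer $m$, then $\lambda_m\big(\bigoplus_i \mathcal{P}^i_D\big) \leq E$, hence $\lambda_m\big(\mathcal{P}_D(x_-,x_+)\big) \leq E$, which means $\mathcal{P}_D(x_-,x_+)$ has at least $m$ eigenvalues $\leq E$, i.e. $N_{\leq E}(\mathcal{P}_D(x_-,x_+)) \geq m = \sum_i N_{\leq E}(\mathcal{P}^i_D)$.

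I would also need to verify the routine self-adjointness and discreteness facts: the operator $-h^2 \frac{\td^2}{\td x^2} + V_{\sigma_1}$ on a bounded interval with Dirichlet conditions is self-adjoint with compact resolvent (since $V_{\sigma_1}$ is smooth, hence bounded, on the compact interval $[x_-,x_+]$ which stays away from $x = \alpha$ by the setup of Lemma \ref{cty_potential}), so the spectrum is a sequence of eigenvalues $\lambda_1 \leq \lambda_2 \leq \cdots \to \infty$ and the min-max formula applies verbatim. The identification of the spectrum of the orthogonal direct sum with the union of the individual spectra is immediate from the block-diagonal structure.

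I do not expect a genuine obstacle here — this is the classical Dirichlet--Neumann bracketing argument (as in Reed--Simon), and the matching Neumann bound for the upper estimate will be the mirror image. The only point requiring a little care is the bookkeeping between the counting function $N_{\leq E}$ and the ordered eigenvalue sequence $\lambda_m$ (strict versus non-strict inequalities at $E$, and the convention for multiplicities), and making sure that the hypothesis of Theorem \ref{weyls_law} — that $E$ is bounded away from the relevant thresholds and $V_{\sigma_1}$ is regular on the closed interval — guarantees $E$ is not itself an accumulation point issue; but since the spectrum is discrete this is automatic, and one can always perturb $E$ slightly if needed. The heart of the matter is simply the form-domain inclusion $\bigoplus_i H_0^1(x^i_-,x^i_+) \subseteq H_0^1(x_-,x_+)$ combined with min-max.
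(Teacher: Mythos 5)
Your proposal is correct and is essentially the paper's own argument: the paper likewise invokes the min-max characterization and extends the subinterval Dirichlet eigenfunctions by zero into $H^1_0([x_-,x_+])$, which is exactly your form-domain inclusion $\bigoplus_i H^1_0(x^i_-,x^i_+)\subseteq H^1_0(x_-,x_+)$, concluding $\kappa_m\leq\lambda_m$ for the merged non-decreasing sequence and hence the counting inequality. Your write-up is just a slightly more systematic rendering (explicit Courant--Fischer, self-adjointness/compact-resolvent remarks, and the bookkeeping from eigenvalue comparison to $N_{\leq E}$) of the same Dirichlet-bracketing step.
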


\begin{proof}
	The proof relies on the variational characterization of eigenvalues using the min-max principle. The smallest eigenvalue of the problem ${\mathcal{P}_D}(x_{-},x_{+})$ can be characterized by
	\begin{align}
	\kappa_1 = \inf_{ \substack{
			u \in H^1_{0}([x_{-},x_{+} ]) \\ 		\norm{u}_{L^2} \neq 0  \quad } }
	\frac{ \int_{x_{-}}^{x_{+} }  (h^2 |\partial_x u|^2 + V_{\sigma_1}(x) |u|^2) \td x }{ \norm{u}^2_{L^2} }
	\end{align}
	The $n$-th eigenvalue of ${\mathcal{P}_D}(x_{-},x_{+})$ (this is not to be confused with the integer $n$ appearing in the separation of variables \eqref{antsatz}) can be characterized by
	\begin{align}
	\kappa_n = \inf_{ \substack{
			\{ u_1, u_2, \ldots, u_n \},\, u_m \in H^1_{0}([x_{-},x_{+} ]) \\
			\norm{u_m}_{L^2} \neq 0 ,\,
			\langle u_m  ,  u_j \rangle = 0 \,\, \forall m \neq j }} \,\,
	\max_{m \leq n}
	\frac{ \int_{x_{-}}^{x_{+} } (h^2 |\partial_x u_m|^2 + V_{\sigma_1}(x) |u_m|^2) \td x }{ \norm{u_m}^2_{L^2} }
	\end{align}
	Similarly, we can characterize the eigenvalues for  $\mathcal{P}^i_{D}$, denoted by $\lambda^i_n$ as
	\begin{align}
	\lambda^i_n = \inf_{ \substack{
			\{u_1, u_2, \ldots, u_n\},\, u_m \in {H}^1_{0}([x^i_{-},x^i_{+} ]) \\
			\norm{u_m}_{L^2} \neq 0 ,\,
			\langle u_m  ,  u_j \rangle = 0 \,\, \forall m \neq j }}\,\,
	\max_{m \leq n}
	\frac{ \int_{x^i_{-}}^{x^i_{+} } (h^2 |\partial_x u_m|^2 + V_{\sigma_1}(x) |u_m|^2) \td x }{ \norm{u_m}^2_{L^2} }
	\end{align}
We can see from the variational characterization that $\kappa_n \leq \lambda^i_n  $. By arranging all the eigenvalues $\lambda^i_n$ into a single non-decreasing sequence $\lambda_n$, we can deduce the following : 
	\begin{equation}
	\kappa_n \leq \lambda_n.
	\end{equation} 
	To see this, let $f_n$ be the eigenfunctions corresponding to $\lambda_n$. $f_n$ can be extended to $[x_{-},x_{+}]$ by setting them to vanish outside the corresponding $[x^i_{-},x^i_{+}]$. These $n$ functions are orthogonal in $H^1_0[x_{-},x_{+}]$ either because they are eigenfunctions supported in different regions or because they are different eigenfunctions (with the same or different eigenvalues) to the same problem, which makes them orthogonal \cite{Keir:2014oka}. Hence we have $\kappa_n \leq \lambda_n$ which proves the inequality. 
\end{proof}  
From the $k$ Neumann problems $\mathcal{P}^i_{N}$ and their corresponding eigenvalues $\mu^i_n$, we have the following Lemma.
\begin{lemma}[\textbf{Upper bound}]
	\label{upperb}
	The number of eigenvalues of the problem ${\mathcal{P}_D}(x_{-},x_{+})$ less than $E$ i.e., $N_{\leq E} ({\mathcal{P}_D}(x_{-},x_{+}))$ satisfies 
	\begin{align}
	N_{\leq E} ({\mathcal{P}_D}(x_{-},x_{+})) \leq \sum_{i}^{k} N_{\leq E} (\mathcal{P}^i_{N}).  
	\end{align}
\end{lemma}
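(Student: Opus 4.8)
The plan is to prove this by \emph{Dirichlet--Neumann bracketing}: it is the mirror image of Lemma \ref{lowerb}, with the roles of the comparison spaces interchanged. First I would recall the variational characterisation of the eigenvalues $\kappa_1 \leq \kappa_2 \leq \cdots$ of $\mathcal{P}_{D}(x_{-},x_{+})$, written (equivalently to the form used in the proof of Lemma \ref{lowerb}) over $n$-dimensional subspaces:
\[
 \kappa_n \;=\; \inf_{\substack{ \mathcal{V} \subset H^1_0([x_{-},x_{+}]) \\ \dim \mathcal{V} = n }} \; \sup_{ \substack{ u \in \mathcal{V} \\ \norm{u}_{L^2} \neq 0 }} \; \frac{ \int_{x_{-}}^{x_{+}} \big( h^2 |\partial_x u|^2 + V_{\sigma_1}(x) |u|^2 \big)\, \td x }{ \norm{u}_{L^2}^2 } .
\]
Next I would assemble the $k$ Neumann problems $\mathcal{P}^i_{N}$ into a single \emph{decoupled} operator, the orthogonal direct sum $\bigoplus_{i=1}^{k} \mathcal{P}^i_{N}$ acting on $L^2([x_{-},x_{+}]) = \bigoplus_i L^2([x^i_{-},x^i_{+}])$. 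Its eigenvalues, listed with multiplicity as a single non-decreasing sequence $\nu_1 \leq \nu_2 \leq \cdots$, are exactly the merge of the $\mu^i_m$, and they obey the \emph{same} Rayleigh quotient minimised over the enlarged form domain $\widetilde{H} := \bigoplus_{i=1}^{k} H^1([x^i_{-},x^i_{+}])$, where $\partial_x$ is taken subinterval by subinterval and no matching is imposed at the interior cut points $x^2_{-},\dots,x^k_{-}$.

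The heart of the matter is the comparison $\nu_n \leq \kappa_n$ for all $n$. I would derive it from the observation that restriction, $u \mapsto \big(u|_{[x^i_{-},x^i_{+}]}\big)_{i}$, is an injective linear map $H^1_0([x_{-},x_{+}]) \hookrightarrow \widetilde{H}$ that leaves both numerator and denominator of the Rayleigh quotient unchanged: a function in $H^1_0([x_{-},x_{+}])$ is continuous across each interior cut point, so its distributional derivative carries no Dirac mass there and $\int_{x_{-}}^{x_{+}} h^2 |\partial_x u|^2 = \sum_i \int_{x^i_{-}}^{x^i_{+}} h^2 |\partial_x u|^2$, while the potential term and the $L^2$ norm split trivially. (Here I would note that $[x_{-},x_{+}] \subset (\alpha,\infty)$ — since $V_{\sigma_1} \to +\infty$ as $x \to \alpha^+$, for $c$ small $x_{-}$ stays above $\alpha$ — so $V_{\sigma_1}$ is continuous and bounded on $[x_{-},x_{+}]$ and all forms in play are well defined and bounded below.) Because the infimum defining $\nu_n$ ranges over \emph{all} $n$-dimensional subspaces of $\widetilde{H}$, in particular over those of the form $\{(u|_{[x^i_{-},x^i_{+}]})_i : u \in \mathcal{V}\}$ with $\mathcal{V} \subset H^1_0([x_{-},x_{+}])$, taking the infimum over this smaller family gives $\nu_n \leq \kappa_n$.

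Finally, $\nu_n \leq \kappa_n$ immediately yields the counting inequality: if $\kappa_n \leq E$ then $\nu_n \leq E$, so
\[
 N_{\leq E}\big( \mathcal{P}_{D}(x_{-},x_{+}) \big) \;=\; \#\{ n : \kappa_n \leq E \} \;\leq\; \#\{ n : \nu_n \leq E \} \;=\; \sum_{i=1}^{k} N_{\leq E}(\mathcal{P}^i_{N}),
\]
the last equality because the $\nu_n$ are just the $\mu^i_m$ relabelled. I expect the only genuine obstacle — the exact analogue of the orthogonality remark in the proof of Lemma \ref{lowerb} — to be the justification that passing to the decoupled Neumann problem is a true relaxation, i.e. that cutting the interval and imposing the natural (Neumann) boundary condition at the cuts enlarges the form domain without changing the quadratic form on globally $H^1_0$ functions; once that is in place, everything else is the min--max principle.
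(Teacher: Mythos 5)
Your proposal is correct and follows essentially the same route as the paper: the paper also characterises the Neumann eigenvalues by the min--max principle over the broken space $\widetilde{H}^1([x_{-},x_{+}]) = \bigoplus_i H^1([x^i_{-},x^i_{+}])$, notes the inclusion $H^1_0([x_{-},x_{+}]) \subset \widetilde{H}^1([x_{-},x_{+}])$ to get $\mu_n \leq \kappa_n$ for the merged non-decreasing sequence, and then counts eigenvalues below $E$. Your extra remarks (injectivity of the restriction map, absence of Dirac masses at the cut points, boundedness of $V_{\sigma_1}$ on $[x_{-},x_{+}]$) only make explicit what the paper's inclusion argument uses implicitly.
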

\begin{proof}
	The eigenvalues $\mu^i_n$ can be characterized as
	\begin{align}
	\mu^i_n =  \inf_{ \substack{
			\{u_1, u_2, \ldots, u_n\} ,\, 
			u_m \in \widetilde{H}^{1} \left([x_{-},x_{+}]\right) \\ \norm{u_m}_{L^2} \neq 0 ,\, \langle u_m  ,  u_j \rangle = 0 \,\, \forall m \neq j }} \,\,
	\max_{m \leq n} 
	\frac{ \di \sum_{i=1}^{k} \int_{x^i_{-}}^{x^i_{+} } (h^2 |\partial_x u_m|^2 + V_{\sigma_1}(x) |u_m|^2) \td x }{ \norm{u_m}^2_{L^2} }
	\end{align}
	where, $$ \widetilde{H}^{1} \left([x_{-},x_{+}]\right) = \left\{ u_m \in L^2([x_{-},x_{+}])  | u_m \in H^1 ([x^i_{-},x^i_{+}]) \mbox{ for all } i \right\} $$
	Similar to the previous case, we arrange them in a single non-decreasing sequence $\mu_n$. We observe that $ H^{1}_0([x_{-},x_{+}]) \subset \widetilde{H}^{1}([x_{-},x_{+}])  $ which implies that $ \mu^i_n \leq \kappa_n $. In particular this means, $\mu_n \leq \kappa_n$ which completes the proof. Since we are in one dimension, the $H^1$ spaces mentioned here in fact embed into Holder spaces $C^{0,1/2}$. 
\end{proof}

\begin{proof}[Proof of Theorem \ref{weyls_law} (Weyl's law)] 
	To compute explicit bounds for  $N_{\leq E} ({\mathcal{P}_D}(r_{-},r_{+})) $ we consider the following sets of problems. 
	\begin{itemize}
		\item $\widetilde{\mathcal{P}}^i_{D}$ : Problems  ${\mathcal{P}}^i_{D}$ where the potential $V_{\sigma_1}$ is replaced by its maximum value (say $V^{i}_{+}$) in the interval $[x^i_{-},x^i_{+}]$.
		\item  $\widetilde{\mathcal{P}}^i_{N}$ : Problems  ${\mathcal{P}}^i_{N}$ where the potential $V_{\sigma_1}$ is replaced by its minimum value (say $V^{i}_{-}$) in the interval $[x^i_{-},x^i_{+}]$.
	\end{itemize}
	The bounds for $N_{\leq E} ({\mathcal{P}_D}(x_{-},x_{+}))$ in Lemmas \ref{lowerb} and \ref{upperb} hold when ${\mathcal{P}}^i_{D}$ and  ${\mathcal{P}}^i_{N}$ are replaced by $\widetilde{\mathcal{P}}^i_{D}$ and $\widetilde{\mathcal{P}}^i_{N}$ respectively. These problems can be solved exactly as the potential is just a constant in the interval. The number of eigenvalues of $\widetilde{\mathcal{P}}^i_{D}$ with energy less than or equal to $E$ is given by, 
	
	\begin{align}
	\begin{aligned}
	N_{\leq E} (\widetilde{\mathcal{P}}^i_{D}) & =  \floor*{\frac{\gamma \sqrt{E - V^{i}_{+}}}{h \pi} \chi_{\{V^{i}_{+} \leq E\}} } \\ 
	\sum_{i=1}^{k} N_{\leq E} (\widetilde{\mathcal{P}}^i_{D}) & = \sum_{i=1}^{k} \floor*{\frac{\gamma \sqrt{E - V^{i}_{+}}}{h \pi} \chi_{\{V^{i}_{+} \leq E\}}}  \\ 
	& = \sum_{i=1}^{k} \left(\frac{\gamma \sqrt{E - V^{i}_{+}}}{h \pi} \chi_{\{V^{i}_{+} \leq E\}}\right) + \mathcal{O}(k) 
	\end{aligned}
	\end{align} 
	Similarly for $\widetilde{\mathcal{P}}^i_{N}$, we have, 
	\begin{align}
\begin{aligned}
	N_{\leq E} (\widetilde{\mathcal{P}}^i_{N}) & =  \floor*{\frac{\gamma \sqrt{E - V^{i}_{-}}}{h \pi} \chi_{\{V^{i}_{-} \leq E\}} } + 1 \\ 
	\sum_{i=1}^{k} N_{\leq E} (\widetilde{\mathcal{P}}^i_{N}) & = \sum_{i=1}^{k} \floor*{\frac{\gamma \sqrt{E - V^{i}_{-}}}{h \pi} \chi_{\{V^{i}_{-} \leq E\}}} + k \\ & 
	= \sum_{i=1}^{k} \left(\frac{\gamma \sqrt{E - V^{i}_{-}}}{h \pi} \chi_{\{V^{i}_{-} \leq E\}}\right) + \mathcal{O}(k)
\end{aligned}
	\end{align} 
	Based on Lemmas \ref{lowerb} and \ref{upperb}, $N_{\leq E}
	({\mathcal{P}_D}(x_{-},x_{+}))$ satisfies	
	\begin{align} 
	\sum_{i=1}^{k} N_{\leq E} (\widetilde{\mathcal{P}}^i_{D}) \leq N_{\leq E} ({\mathcal{P}_D}(r_{-},r_{+})) \leq \sum_{i=1}^{k} N_{\leq E} (\widetilde{\mathcal{P}}^i_{N}) 
	\end{align}
	which becomes,
	\begin{align}
	\sum_{i=1}^{k} \left(\frac{\gamma \sqrt{E - V^{i}_{+}}}{h \pi} \chi_{\{V^{i}_{+} \leq E\}}\right) + \mathcal{O}(k) \leq 
	N_{\leq E} ({\mathcal{P}_D}(x_{-},x_{+})) \leq \sum_{i=1}^{k} \left(\frac{\gamma \sqrt{E - V^{i}_{-}}}{h \pi} \chi_{\{V^{i}_{-} \leq E\}}\right) + \mathcal{O}(k) .
	\nonumber
	\end{align} 
	If we let the number of partitions go to infinity as $h \to 0$ such that $ k(h) = o (1/h)$, the sums converge as a Riemann sum and the error terms are of order $o(1/h)$. We can then express $ N_{\leq E} ({\mathcal{P}_D}(x_{-},x_{+}))$ as
	\begin{align}
	N_{\leq E} ({\mathcal{P}_D}(x_{-},x_{+})) \sim 
	\frac{1}{h \pi}\int_{x_{-}}^{x_{+}} \sqrt{E - V_{\sigma_1}(x^*)} \chi_{\{V_{\sigma_1} \leq E\}} \td x^*  .
	\label{weyl-proof}
	\end{align}
	This proves Theorem \ref{weyls_law}.
\end{proof}

\begin{proof}[Proof of Theorem \ref{weyl-appl}]
	This follows by computing $ N_{\leq E + \delta} ({\mathcal{P}_D}(x_{-},x_{+}))$  and $N_{\leq E - \delta} ({\mathcal{P}_D}(x_{-},x_{+})) $ from \eqref{weyl-proof}. 
\end{proof}

\section{Eigenvalues for the nonlinear problem}
\label{sec:nonlinear}

We now turn to establishing the existence of eigenvalues for the radial equation, which as discussed earlier is nonlinear in the `energy' $\omega$, namely the ODE $\mathcal{P}_1 $ which reads,
\begin{equation}
	-\frac{\td^2 u}{ \td x^2 } +  V u = 0 , \qquad  u(x_{-}) = u(x_{+}) = 0. \label{nonlinear}
\end{equation}
Here $V$ is the nonlinear potential defined by $ V := n^2 V_{\sigma_1} - n \widetilde{f} \omega - \omega^2  $. The strategy is to prove the existence of eigenvalues of \eqref{nonlinear} through continuity arguments. The potential is a complicated rational function of the rescaled radial variable $x$ explicitly given by 
\begin{align}
	V = \frac{ \left( {x
		}^{2} -{\alpha}^{2} \right)^{-1} }{{16\, \left( {\alpha
			}^{6}+{\alpha}^{2}{x}^{2}+{x}^{4} \right) ^{2}  }} \left\{ \right. & \left. \left( 129 -8\,n\omega \right) {\alpha}^{14} + \left( 128 -
	\left(129- 8\,n\omega \right) {x}^{2} \right) {\alpha}^{12} - \left(	8\,n\omega\,{x}^{2}+128 \right) {\alpha}^{10} \right. \nonumber \\ & \left.  +128\ {x}^{2} {\alpha}^{8}
	+ \left(  \left( 8\,n\omega-144 \right) {x}^{6}+128\,{x}^{4}-128\,{x}^
	{2} \right) {\alpha}^{6}   \right.  \\ \nonumber & \left.    - \left( 144\,{x}^{6}+128\,{x}^{4} \right) {
		\alpha}^{4}+144\,{x}^{6}{\alpha}^{2}+144\,{x}^{8}\right\} - \omega^2 . 
\end{align}
For the nonlinear problem we want to reproduce the setting of the linear problem $\mathcal{P}_0$. We begin by  verifying that there is  still a trapped region. From the defintion of $V$ above, this would amount to checking that there is a region where $V$ has a negative minimum. Here we are interested in determining the existence of eigenvalues close to 0 (as opposed to eigenvalues close to $E$ in the linear case). Lemma \ref{cty_potential} identified  such a region for $\mathcal{P}_0$. Here we state a nonlinear version i.e., Lemma \ref{npcty} (following \cite{benomio:2018ivy}) which identifies the corresponding $\Omega$ for $\mathcal{P}_1$. In the following proposition, we list some properties of $V$ which will be useful in proving Lemma \ref{npcty}. Elements of the proofs in Proposition \ref{propnonlin} and Lemma \ref{npcty} which involve the structure of $V$ will be illustrated with some plots owing to the complicated expression of $V$. To emphasize the dependence of $V$ on $\omega$ and $n$, we denote the nonlinear potential as $V_{(\omega,n)}$ rather than $V$.

\begin{prop}
	[Properties of ${V}_{(\omega,n)}$] 	\label{propnonlin}
	Consider $\omega \in \mathbb{R}$ and $n \in \mathbb{Z}$. 
	\begin{enumerate}
		\item If $\omega=0$, $ {V}_{(0,n)}$ is always positive and does not admit any real roots. 
		\item There exists a pair $(\omega_0,n_0)$ such that ${V}_{(\omega_0,n_0)}$ admits three distinct real roots $ x^{\omega_0}_1, x^{\omega_0}_2 $ and $ x^{\omega_0}_3 $ such that $ {V}_{(\omega_0,n_0)}$ has a local minimum at $ x^{\omega_0}_{min}$ with $x^{\omega_0}_1 < x^{\omega_0}_{min} < x^{\omega_0}_2 < x^{\omega_0}_3 $.
		\item Consider a pair $(\omega_0,n_0)$ for which ${V}_{(\omega_0,n_0)}$ admits three distinct real roots. There exist $\mathcal{E^{-}}$ and $\mathcal{E}^{+}$ such that 
		\begin{enumerate}
			\item 	 $\omega_0 \in (\mathcal{E^{-}},\mathcal{E}^{+}) $ and for any $ \omega \in (\mathcal{E^{-}},\mathcal{E}^{+})$ ${V}_{(\omega,n_0)} $ admits three distinct real roots.
			\item   Let $\omega_1$ and $\omega_2$ be two such values with $\{ x^{\omega_1}_1,x^{\omega_1}_2, r^{\omega_1}_3 \}$ and $\{x^{\omega_2}_1,x^{\omega_2}_2,x^{\omega_2}_3 \}$ being the corresponding roots. If $\omega_1 > \omega_2$, then $(x^{\omega_2}_1,x^{\omega_2}_2) \subsetneq (x^{\omega_1}_1,x^{\omega_1}_2) $. 	  
		\end{enumerate}
	\end{enumerate}
\end{prop}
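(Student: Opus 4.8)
The plan is to work with $V_{(\omega,n)}$ in the explicit rational form displayed above, whose zeros on the physical interval $x>\alpha$ are exactly the zeros of the polynomial $P_{(\omega,n)}(x):=N_{(\omega,n)}(x)-\omega^2 D(x)$ obtained by clearing the denominator $D(x)=16(x^2-\alpha^2)(\alpha^6+\alpha^2x^2+x^4)^2$, which is strictly positive there; $P_{(\omega,n)}$ has degree $10$ (a quintic in $y=x^2$), so $V_{(\omega,n)}$ has only boundedly many zeros and critical points on $(\alpha,\infty)$, and sign–counting of the continuous function $V_{(\omega,n)}$ can be upgraded to an exact count of simple zeros. For part (1), $V_{(0,n)}=n^2 V_{\sigma_1}$, and it suffices to show $V_{\sigma_1}>0$ on $(\alpha,\infty)$: its numerator is a quartic in $y=x^2$ with leading coefficient $144>0$, it equals $-16\alpha^8(\alpha^2+2)(\alpha^2-1)$ at the endpoint $y=\alpha^2$, which is positive since $\alpha^2<1$, and it has no root for $y>\alpha^2$ (by Sturm's theorem, or by inspection of Fig.~\ref{sol_potential}); hence $V_{(0,n)}>0$ admits no real zero.

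For part (2), I would fix a large integer $n_0$ and track $V_{(\omega,n_0)}$ as $\omega$ increases from $0$. Since $\widetilde f<0$ on $(\alpha,\infty)$ (which follows from $r_0^2<j^2$, $p<0$, $q<0$ and the identity $\tfrac{2p-q}{r^2}-\tfrac{q^2}{r^4}=\tfrac{r_0^6}{j^4 r^4}(r^2-r_0^2)>0$), one has $\partial_\omega V_{(\omega,n_0)}=-n_0\widetilde f-2\omega$, which is negative at a point $x$ as soon as $\omega>\tfrac{n_0}{2}|\widetilde f(x)|$. Evaluated at the minimiser $x_{\min}$ of $V_{\sigma_1}$, the map $\omega\mapsto V_{(\omega,n_0)}(x_{\min})$ is a downward-opening parabola, positive at $\omega=0$ and tending to $-\infty$; the same holds at the barrier maximiser $x_{\max}$. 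Writing $\omega^{\min}_\ast$ (resp.\ $\omega^{\max}_\ast$) for the threshold past which these become negative, one checks — using Fig.~\ref{sol_potential} and a plot of $\widetilde f$ to control the variation of $\widetilde f$ across $[x_{\min},x_{\max}]$, together with $V^{\min}_{\sigma_1}<V^{\max}_{\sigma_1}$ — that $\omega^{\min}_\ast<\omega^{\max}_\ast$. For any $\omega_0$ in this nonempty interval, $V_{(\omega_0,n_0)}$ is negative somewhere in the well and positive somewhere on the barrier, while $V_{(\omega_0,n_0)}\to+\infty$ as $x\to\alpha^+$ — because $N_{(\omega,n)}(\alpha^2)=-16\alpha^8(\alpha^2+2)(\alpha^2-1)>0$ for \emph{all} $\omega,n$, the $n\omega$-corrections to the numerator cancelling identically at $y=\alpha^2$ — and $V_{(\omega_0,n_0)}\to-\omega_0^2<0$ as $x\to\infty$. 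Counting sign changes, together with the degree bound on $P_{(\omega_0,n_0)}$, yields exactly three simple zeros $x^{\omega_0}_1<x^{\omega_0}_2<x^{\omega_0}_3$ with the interior minimum lying in $(x^{\omega_0}_1,x^{\omega_0}_2)$; a plot of $V_{(\omega_0,n_0)}$ for an explicit admissible pair makes the configuration concrete.

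For part (3), given such a pair $(\omega_0,n_0)$, item (a) is openness: since $V_{(\omega,n_0)}\to+\infty$ at $x=\alpha^+$ and $\to-\omega^2<0$ at infinity uniformly for $\omega$ near $\omega_0\neq0$, no zero can escape to the boundary, and simple zeros persist under perturbation by the implicit function theorem, so there is a maximal open interval $(\mathcal E^-,\mathcal E^+)\ni\omega_0$ on which $V_{(\omega,n_0)}$ retains three distinct zeros — shrunk, if necessary, so that in addition $\omega>\tfrac{n_0}{2}\max_{[x_1(\omega),x_2(\omega)]}|\widetilde f|$ holds throughout (this is automatic near the endpoints, where $x_1,x_2$ pinch onto $x_{\min}$ and $\omega$ is forced past $\tfrac{n_0}{2}|\widetilde f(x_{\min})|$). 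For item (b), implicit differentiation of $V_{(\omega,n_0)}(x_i(\omega))=0$ gives $x_i'(\omega)=-\partial_\omega V/\partial_x V$ evaluated at $x_i$; since $\partial_x V(x_1)<0$ and $\partial_x V(x_2)>0$ at the two well crossings, while $\partial_\omega V=-n_0\widetilde f-2\omega<0$ on $[x_1,x_2]$ by the choice of interval, one gets $x_1'(\omega)<0<x_2'(\omega)$, i.e.\ $(x^{\omega_2}_1,x^{\omega_2}_2)\subsetneq(x^{\omega_1}_1,x^{\omega_1}_2)$ whenever $\omega_1>\omega_2$.

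The main obstacle is in part (2): pinning down that a \emph{single} value of $\omega$ makes the interior minimum of $V_{(\omega,n_0)}$ negative while keeping the interior maximum positive, i.e.\ the strict inequality $\omega^{\min}_\ast<\omega^{\max}_\ast$. This does not follow formally from $V^{\min}_{\sigma_1}<V^{\max}_{\sigma_1}$ alone, because $\widetilde f$ is not constant on $[x_{\min},x_{\max}]$ and in fact decreases in magnitude there, so the comparison of the two parabolas genuinely requires the quantitative control of $\widetilde f$ and $V_{\sigma_1}$ supplied by the plots. A lesser technical point is controlling the number of critical points of the rational function $V_{(\omega,n)}$ on $(\alpha,\infty)$, so that ``three sign changes'' is upgraded to ``exactly three simple zeros with the stated ordering''.
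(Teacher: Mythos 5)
Your proposal is correct when judged against the paper's own standard of rigor, but it takes a noticeably more analytic route than the paper does. The paper's proof of Proposition~\ref{propnonlin} is essentially graphical: it records the limits $V\to+\infty$ as $x\to\alpha^+$ and $V\to-\omega^2$ as $x\to\infty$, the symmetry $(\omega,n)\to(-\omega,-n)$, then verifies part (2) by plotting $V/n^2$ on explicit windows ($\omega/n\in[1.465,1.485]$ for $n>0$, $\omega/|n|\in[1.35,1.415]$ for $n<0$), takes $\mathcal{E}^{-}=1.47$, $\mathcal{E}^{+}=1.48$ for (3)(a), and for (3)(b) uses plots when $n>0$ and, when $n<0$, the one-line observation that $\widetilde f<0$ makes $V=n^2V_{\sigma_1}-n\widetilde f\omega-\omega^2$ pointwise decreasing in $\omega$. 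You instead supply scaffolding the paper omits: clearing the positive denominator to get a degree-$10$ polynomial, the endpoint evaluation showing the numerator at $y=\alpha^2$ equals $-16\alpha^8(\alpha^2+2)(\alpha^2-1)>0$ with the $n\omega$-terms cancelling (so the blow-up at $x=\alpha^+$ is $(\omega,n)$-independent — this checks out), the identity $\tfrac{2p-q}{r^2}-\tfrac{q^2}{r^4}=\tfrac{r_0^6}{j^4r^4}(r^2-r_0^2)$ giving $\widetilde f<0$ (also correct), a parabola-in-$\omega$/intermediate-value argument producing a window of admissible $\omega$ rather than quoting one, persistence of simple zeros for (3)(a), and implicit differentiation of the roots for (3)(b). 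This buys a cleaner openness and monotonicity argument and isolates exactly which quantitative inputs are irreducibly numerical; and indeed the two places where you defer to plots or Sturm-type computation — positivity of the quartic numerator of $V_{\sigma_1}$ beyond $y=\alpha^2$, and the strict ordering $\omega^{\min}_{\ast}<\omega^{\max}_{\ast}$ (equivalently, a single $\omega$ making the well negative while the barrier stays positive), plus the upgrade from three sign changes to exactly three simple roots — are precisely the facts the paper itself reads off Fig.~\ref{sol_potential} and the nonlinear-potential plots, so you have not introduced a gap relative to the paper. Two small points worth fixing: the paper only ever uses the proposition for $n\in\mathbb{Z}^{-}$, and in that case $\partial_\omega V=-n\widetilde f-2\omega<0$ holds automatically for $\omega>0$, so your threshold condition $\omega>\tfrac{n_0}{2}\max_{[x_1,x_2]}|\widetilde f|$ (an artifact of fixing $n_0>0$) can be dispensed with entirely there — you should state that your (3)(b) argument covers that sign too; and the simplicity of the zeros $x_1,x_2$ (i.e.\ $\partial_x V\neq 0$ there), which your implicit-function-theorem steps in (3)(a)–(b) require, should be noted as part of the same numerical verification rather than taken for granted.
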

\begin{proof}
	We know that $ \lim\limits_{x \to \alpha} {V} = \infty $ and $\lim\limits_{x \to \infty} {V} = -\omega^2 $. Hence the potential admits at least one real root.
	We note that the potential is invariant under $(\omega,n) \to (-\omega,-n) $. Hence we assume that ${\omega} > 0 $ and only discuss the cases $n \in \mathbb{Z}^+ $ and $n \in \mathbb{Z}^-$ where needed.\begin{enumerate}
		\item With $\omega =0$, we have ${V}_{(0,n)} = n^2 V_{\sigma_1} > 0 $ as can be seen from Fig.\ref{sol_potential}.
		\item \begin{enumerate}
			\item Case 1 : $n \in \mathbb{Z}^+$. For $ {\omega} \in [1.465n,1.485n]$, $ {V}_{({\omega},n)} $ admits two roots. This can be seen in the plots below (Figs.\ref{NPnp1} and \ref{NPnp2}).
			\begin{figure}[H]
				\label{rootsposn}
				\subfloat[ $\di \frac{{V}}{n^2}$ for ${\omega}=1.465n$ \label{NPnp1} ]{\includegraphics[scale=2.7]{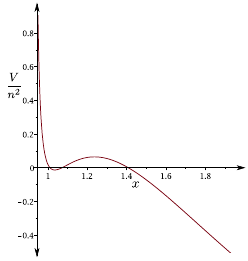}}
				\subfloat[$\di \frac{{V}}{n^2}$ for ${\omega}=1.485n$ \label{NPnp2}]{\includegraphics[scale=2.7]{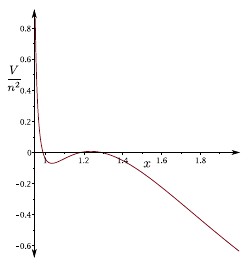}}
				\caption{Nonlinear potential for $n\in\mathbb{Z}^{+}$}
			\end{figure}
			
			\item Case 2 : $n \in \mathbb{Z}^-$. For $ {\omega} \in [1.35|n|,1.415|n|]$, we can see from Figs.\ref{NP1} and \ref{NP2} that ${V}_{({\omega},n)} $ admits two roots. 			
			\begin{figure}[htb!]
				\subfloat[ $\di \frac{{V}}{n^2}$ for $\omega=1.35|n|$ \label{NP1} ]{\includegraphics[scale=2.7]{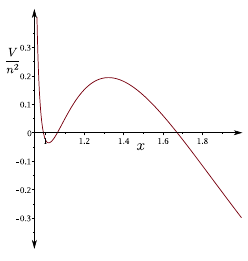}}
				\subfloat[$\di \frac{{V}}{n^2}$ for $\omega=1.415|n|$ \label{NP2}]{\includegraphics[scale=2.7]{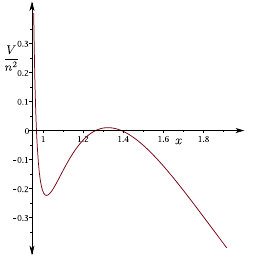}}
				\caption{Nonlinear potential for  $n\in\mathbb{Z}^{-}$}
			\end{figure}			
		\end{enumerate}
		\item As a consequence of (2) above, for $n \in \mathbb{Z}^{+}$, the choice $ \mathcal{E}^{-} = 1.47 $ and $\mathcal{E}^{+} = 1.48 $ satisfies the condition $(a)$. For $(b)$, in the following plot (Fig.\ref{regposn}), as $ \omega $ increases in $(\mathcal{E^{-}},\mathcal{E^{+}}) $, the corresponding interval $(x_{-}^{\omega},x_{+}^{\omega} )$ also increases. For the case $ n \in \mathbb{Z}^{-}$, we observe that $ \widetilde{f} <0$. Hence, from the following expression for the nonlinear potential,
		\begin{align}
			{V}   = n^2 V_{\sigma_1}  - n \widetilde{f} {{\omega}} - {{\omega}^2} 
		\end{align}
		the increase (decrease) of the interval $(x_{-}^{\omega},x_{+}^{\omega} )$ with increase (decrease) in $\omega$ follows. This can also be seen in Fig.\ref{regnegn}
	\end{enumerate}	
	\begin{figure}[H]
		\subfloat[ $n \in \mathbb{Z}^{+} $ \label{regposn} ]{\includegraphics[scale=1]{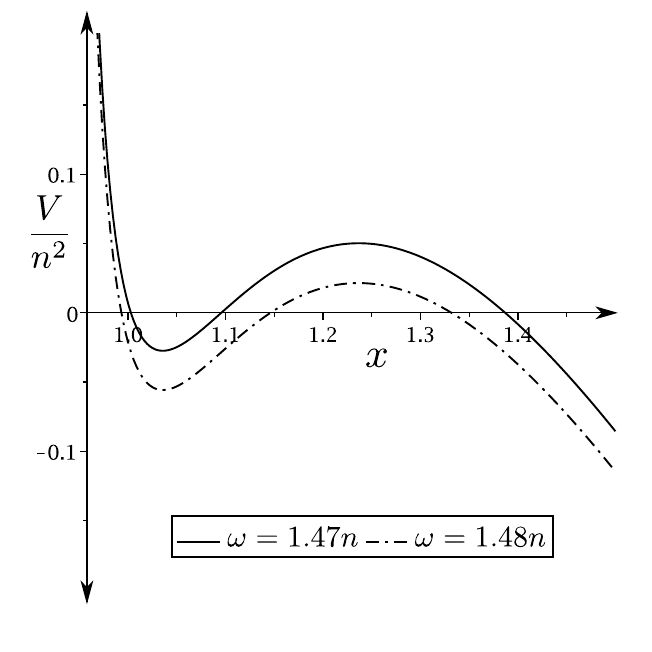}}
		\subfloat[$n \in \mathbb{Z}^{-} $ \label{regnegn}]{\includegraphics[scale=1]{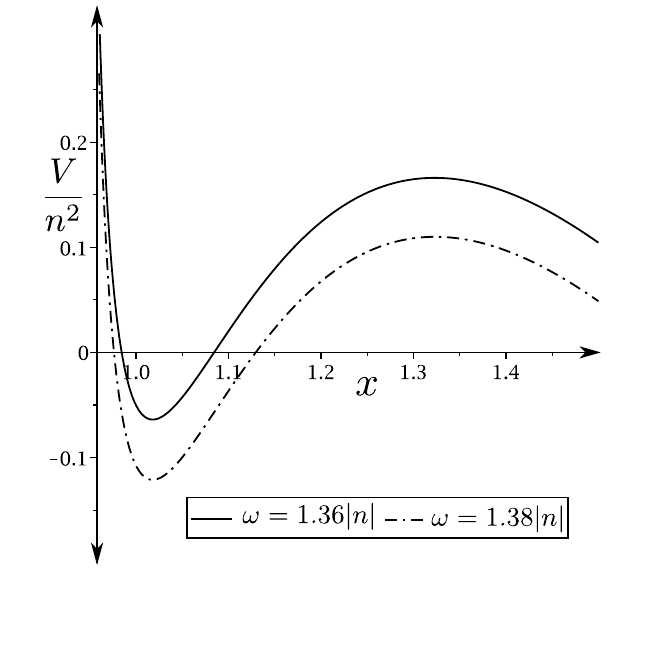}} \caption{ Properties of $(x^\omega_{-},x^\omega_{+})$ }
		\label{regprop}
	\end{figure}
\end{proof}
\begin{lemma}
	\label{npcty}
	Let ${V}_{min}$ be the minimum of the nonlinear potential ${V}_{(\omega,n)}$. Let $x_{min} \in (\alpha,\infty)$ such that ${V}_{(\omega,n)} (x_{min}) = {V}_{min} $. Consider some constant $\mathcal{E} >0$ such that ${V}_{(\mathcal{E}n,n)}$ has a local minimum and there exists $x^{\mathcal{E}}_{-}$ and  $x^{\mathcal{E}}_{+}$ satisfying $ x^{\mathcal{E}}_{-} < x_{min} < x^{\mathcal{E}}_{+}$ for which $ {V}_{(\mathcal{E} n,n)} (x^{\mathcal{E}}_{-}) = {V}_{(\mathcal{E} n,n)} (x^{\mathcal{E}}_{+}) = 0  $ and there are no local maxima of $ {V}_{(\mathcal{E}n,n)}$ in $(x^{\mathcal{E}}_{-},x^{\mathcal{E}}_{+})$. Let $E>0$ be an energy level such that $E < \mathcal{E}$ and ${V}_{(En,n)}$ has a local minimum and there exists constants $x^E_{-}$ and  $x^E_{+}$ with the same properties as $x^{\mathcal{E}}_{-}$ and  $x^{\mathcal{E}}_{+}$ respectively but now with respect to ${V}_{(En,n)} $. Then for sufficiently small constants $\delta, \delta^{'} >0$ there exists a constant $c>0$ such that 
	\begin{align}
		|x - x^{\mathcal{E}}_{\pm}| < \delta^{'} \implies \frac{1}{n^2} {V}_{(\kappa n,n)} > c 
	\end{align}
	for all $ \kappa \in \mathbb{R}$ satisfying $|\kappa^2 - E^2| \leq \delta$. In addition, for the linear problem Lemma \ref{cty_potential} holds for $ \left. \di  \frac{n^2 V_{\sigma_1 } - \omega^2 }{n^2} \right|_{\omega = En} $. 
\end{lemma}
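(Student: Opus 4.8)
The plan is to reduce this nonlinear statement to the soft continuity/compactness argument already used for Lemma~\ref{cty_potential}, after exploiting a scaling that eliminates $n$. First I would substitute $\omega=\kappa n$ into $V_{(\omega,n)}=n^2V_{\sigma_1}-n\widetilde f\,\omega-\omega^2$ and divide by $n^2$, obtaining the $n$-independent rescaled potential
\[
\frac{1}{n^2}\,V_{(\kappa n,n)}(x)\;=\;V_{\sigma_1}(x)-\kappa\,\widetilde f(x)-\kappa^2\;=:\;\widetilde V_\kappa(x),
\]
a function jointly continuous in $(\kappa,x)$ on $(\alpha,\infty)$. By the very definition of $x^{\mathcal E}_{\pm}$ one has $\widetilde V_{\mathcal E}(x^{\mathcal E}_{-})=\widetilde V_{\mathcal E}(x^{\mathcal E}_{+})=0$, and in the notation of Proposition~\ref{propnonlin} these are the first and second of the three roots of $\widetilde V_{\mathcal E}$, i.e. $x^{\mathcal E}_{-}=x^{\mathcal E n}_1<x^{\mathcal E n}_{min}<x^{\mathcal E}_{+}=x^{\mathcal E n}_2<x^{\mathcal E n}_3$.

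Next I would locate $x^{\mathcal E}_{\pm}$ relative to the roots of $\widetilde V_\kappa$ for $\kappa$ near $E$. Since $E<\mathcal E$, for $\delta$ small every $\kappa$ with $|\kappa^2-E^2|\le\delta$ satisfies $0<\kappa<\mathcal E$, and (shrinking $\delta$ if needed and using Proposition~\ref{propnonlin}(3)(a)) $\widetilde V_\kappa$ still has three roots $x^{\kappa n}_1<x^{\kappa n}_2<x^{\kappa n}_3$. Applying Proposition~\ref{propnonlin}(3)(b) with $\omega_1=\mathcal E n>\omega_2=\kappa n$ gives $x^{\mathcal E}_{-}=x^{\mathcal E n}_1<x^{\kappa n}_1$ and $x^{\kappa n}_2<x^{\mathcal E n}_2=x^{\mathcal E}_{+}$. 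Moreover the outermost root moves outward as the frequency is lowered, so $x^{\kappa n}_3>x^{\mathcal E n}_3>x^{\mathcal E}_{+}$; this monotonicity is cleanest to see by rewriting the root condition $\widetilde V_\kappa(x)=0$ (for $\kappa,n>0$) as $\kappa=\tfrac12\big(-\widetilde f(x)+\sqrt{\widetilde f(x)^2+4V_{\sigma_1}(x)}\big)$, whose right-hand side decreases from $+\infty$ at $x=\alpha$ to $0$ at $x=\infty$ with a single interior well and hump. Consequently $x^{\mathcal E}_{-}\in(\alpha,x^{\kappa n}_1)$ and $x^{\mathcal E}_{+}\in(x^{\kappa n}_2,x^{\kappa n}_3)$, and on both of those intervals $\widetilde V_\kappa$ is strictly positive (it blows up to $+\infty$ near $x=\alpha$ and is positive between the middle and outermost roots, the only negative well being $(x^{\kappa n}_1,x^{\kappa n}_2)$). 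Hence $\widetilde V_\kappa(x^{\mathcal E}_{\pm})>0$ for every admissible $\kappa$; since $\kappa\mapsto\widetilde V_\kappa(x^{\mathcal E}_{\pm})$ is continuous and positive on the compact set $\{|\kappa^2-E^2|\le\delta\}$, it is bounded below by some $c_0>0$.

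Finally I would promote this pointwise lower bound to a neighbourhood in $x$ exactly as in the proof of Lemma~\ref{cty_potential}: the map $(\kappa,x)\mapsto\widetilde V_\kappa(x)$ is uniformly continuous on the compact box $\{|\kappa^2-E^2|\le\delta\}\times\overline{B_{\delta'}(x^{\mathcal E}_{\pm})}$, so choosing $\delta'$ small enough forces $\widetilde V_\kappa(x)>c_0/2=:c$ there, which is precisely $\tfrac{1}{n^2}V_{(\kappa n,n)}(x)>c$ whenever $|x-x^{\mathcal E}_{\pm}|<\delta'$ and $|\kappa^2-E^2|\le\delta$. The additional claim for the linear problem is immediate, since $\tfrac{1}{n^2}(n^2V_{\sigma_1}-\omega^2)\big|_{\omega=En}=V_{\sigma_1}-E^2$ differs from $V_{\sigma_1}$ only by a constant, so Lemma~\ref{cty_potential} applies verbatim after a harmless shift of the spectral parameter. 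I expect the only real work to be in the middle paragraph — keeping track of which side of each root $x^{\mathcal E}_{\pm}$ ends up on as $\omega$ is lowered from $\mathcal E n$ to $En$ — which is exactly what Proposition~\ref{propnonlin}(3) is designed to supply; the outer two paragraphs are the same soft continuity argument as in the linear case.
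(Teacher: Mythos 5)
Your proposal is correct and follows essentially the same route as the paper's own proof: rescale by $n^2$, use the root structure and the monotone nesting of the well $(x^{\omega}_1,x^{\omega}_2)$ from Proposition \ref{propnonlin}, and then conclude by continuity/compactness in $(\kappa,x)$; in fact you are more explicit than the paper at the right endpoint $x^{\mathcal{E}}_{+}$, where the paper only treats $x^{\mathcal{E}}_{-}$ in detail and otherwise appeals to its figures. The two places where you assert rather than prove structure — the single well-and-hump shape of the level-set function governing the third root, and the claim that the final statement about $V_{\sigma_1}-E^2$ is a "harmless shift" (which tacitly needs $E^2$ to lie in the admissible window just above $V_{\sigma_1}^{min}$, verified in the paper only by the plots for $E\in[1.405,1.415]$) — are at the same graphical level of rigor as Proposition \ref{propnonlin} itself, so they do not constitute a gap relative to the paper's argument.
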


\begin{figure}[htb!]
	%\centering     %%% not \center
	\includegraphics[scale=1.25]{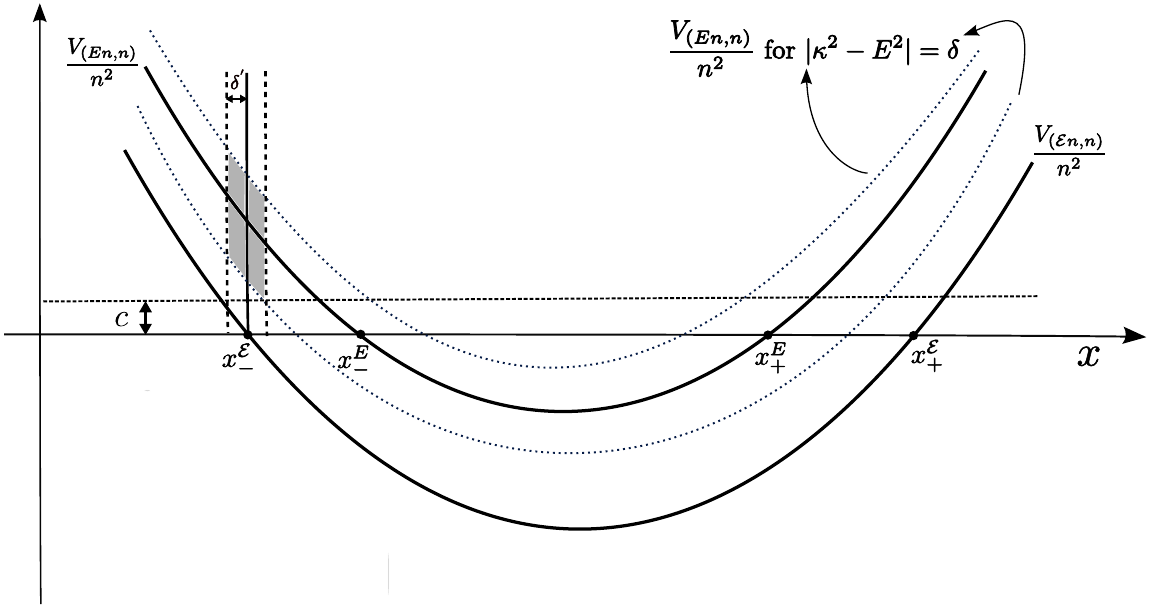}
	\caption{Domain for the nonlinear eigenvalue problem}
	\label{cty_nonlinear_potential}
\end{figure}

\begin{proof}
	We observe the following as consequences of Proposition \ref{propnonlin}.
	\begin{enumerate}
		\item There exists $E $ and $\mathcal{E}$ for which the potential admits three distinct real roots with a local minimum as shown in Fig \ref{cty_nonlinear_potential}.  
		\item $E < \mathcal{E} \implies (x^E_{-}, x^{E}_{+}) \subsetneq (x^\mathcal{E}_{-},x^{\mathcal{E}}_{+})  $.
	\end{enumerate} 
	Hence, $\mathcal{E}$ and $E$ with the desired properties exist.  We can see that $V_{(En,n)}$ has no local maxima in $(x^\mathcal{E}_{-},x^{\mathcal{E}}_{+})$. $V_{(En,n)}(x^E_{-}) = 0$, so for $x \in [ x^\mathcal{E}_{-}, x^E_{-})$, $ \di \frac{V_{(En,n)}(x)}{n^2} >0$. Then it follows that there exists $\delta^{'} >0$ such that for $ | x - x^\mathcal{E}_{-} | < \delta^{'} $, $ \di \frac{V_{(En,n)}(x)}{n^2} > 0$. Since $ \di \frac{V_{(En,n)}(x)}{n^2}$ is also continuous as a function of $E$, there exists some $\delta>0$ such that for $|\kappa^2 - E^2| \leq \delta $,  $ \di \frac{V_{(\kappa n,n)}(x)}{n^2} > c$ for some constant $c>0$. For the final part of the Lemma, let us refer to Fig. \ref{lin_nonlin1} and 	Fig. \ref{lin_nonlin2}.

		\begin{figure}[h]
			%\centering     %%% not \center
			\subfloat[ $E = 1.405$ \label{lin_nonlin1} ]{\includegraphics[scale=1]{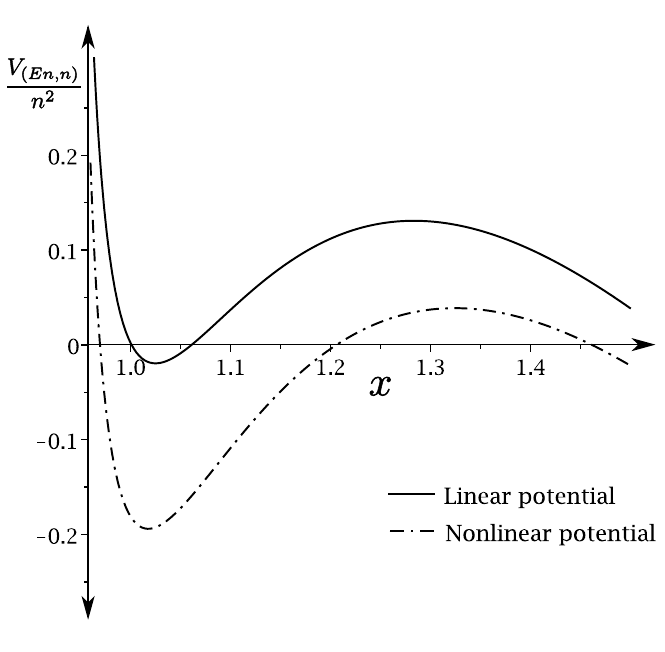}}
			\subfloat[$E = 1.415$ \label{lin_nonlin2}]{\includegraphics[scale=1]{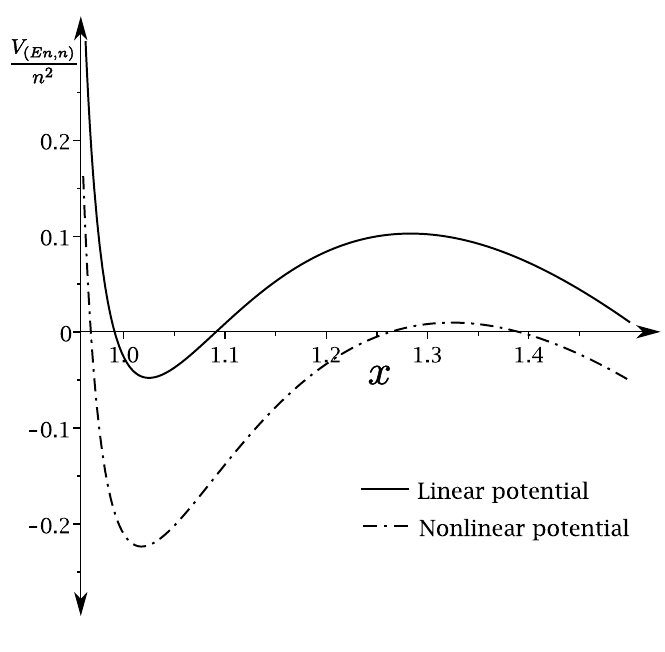}}
			%\subfloat[Case 3 (for $E=2.5$) \label{NP3} ]{\includegraphics[scale=0.35]{nonlinear_potential_3.jpg}}
\caption{Continuity for linear and nonlinear potentials}
		\end{figure}
	
	For the relation to the linear potential, we observe that for $n \in \mathbb{Z}^{-}$ there exists $E \in [1.405,1.415]$ such that Lemma \ref{cty_potential} holds.  	
\end{proof}
Hence from this stage onwards,  we will only consider  $n \in \mathbb{Z}^{-}$ as it is sufficient for our construction of quasimodes.  We use $-|n|$ instead of $n$ in expressions to indicate this sign choice. 
\subsection{Lower bound for $\omega^2$} In this section, we establish a lower bound for the eigenvalues of $\mathcal{P}_1$. Consider the following family of eigenvalue problems 
\begin{align}
	\label{np_family}
	\begin{aligned}
		& \mathcal{Q}(\beta,\omega)  u  = \Lambda(\beta,\omega)  \\ \qquad 
		& u(x_{-})  = u(x_{+}) = 0 \\ 
		\mbox{ where }  \mathcal{Q}(\beta,\omega) & := 
		- \frac{\td^2 u}{ \td x^2 } +  n^2 V_{\sigma_1} + \beta |n| \widetilde{f} \omega - \omega^2
	\end{aligned} 
\end{align}
We prove that if the $j^{\mbox{th}}$ eigenvalue of $\mathcal{Q}(\beta,\omega)$ i.e., $ \Lambda_j (\beta,\omega) $ is zero, then the corresponding $\omega_{\beta,n}$ satisfies certain properties. Let $u_j(\beta,\omega)$ be a normalized eigenfunction in $[x_{-},x_{+}]$ associated to the eigenvalue $\Lambda_j(\beta,\omega)$. Then 
\begin{align}
	\Lambda_j(\beta,\omega) = \int_{x_{-}}^{x_{+}} u_j(\beta,\omega) \, \mathcal{Q}(\beta,\omega) \, u_j(\beta,\omega) \, \td x = 0.
\end{align}
We have the following lemma which gives a lower bound on $\omega_{\beta,n}$.
\begin{lemma}
	\label{bound_omega}
	Let $u_j(\beta,n)$ be a nontrivial eigenfunction of \eqref{np_family}, then the following hold for sufficiently large $n$ and $\beta \in [0,1]$ :
	\begin{enumerate}
		\item  $ \omega_{\beta,n} \neq 0 $
		\item $ \omega_{\beta,n} \neq o(n) $.
	\end{enumerate} 
\end{lemma}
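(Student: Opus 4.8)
The plan is to read off both statements from the positivity of the Dirichlet quadratic form attached to $\mathcal{Q}(\beta,\omega)$ as soon as $\omega$ is small compared with $n$. The structural inputs are all already in place: $[x_-,x_+]$ is a fixed compact subinterval of $(\alpha,\infty)$ (Lemma~\ref{npcty}), $V_{\sigma_1}>0$ throughout $(\alpha,\infty)$ (Proposition~\ref{propnonlin}, item~(1)), so that $m_0:=\min_{[x_-,x_+]}V_{\sigma_1}>0$, and $\widetilde f$ is bounded on $[x_-,x_+]$, say $|\widetilde f|\le M$ there. I would take $u_j=u_j(\beta,\omega)$ real and $L^2$-normalized, so the defining relation $\Lambda_j(\beta,\omega_{\beta,n})=0$ reads
\[
\int_{x_-}^{x_+}\!\Big(|u_j'|^2+\big(n^2V_{\sigma_1}+\beta|n|\widetilde f\,\omega_{\beta,n}-\omega_{\beta,n}^2\big)|u_j|^2\Big)\,\td x=0.
\]

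For item~(1) I would argue by contradiction: if $\omega_{\beta,n}=0$, then the $\beta$- and $\omega^2$-terms vanish and the left-hand side equals $\int_{x_-}^{x_+}\big(|u_j'|^2+n^2V_{\sigma_1}|u_j|^2\big)\,\td x$, which is strictly positive for any nontrivial $u_j$ as long as $n\neq0$, since $V_{\sigma_1}>0$. This contradicts the vanishing of $\Lambda_j$; note this half requires no largeness of $n$.

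For item~(2) I would again argue by contradiction, supposing $\omega_{\beta,n}=o(n)$. Writing $N=|n|$ and $\omega_{\beta,n}=N\varepsilon_N$ with $\varepsilon_N\to0$, the bracketed potential factors as $N^2\big(V_{\sigma_1}+\beta\widetilde f\,\varepsilon_N-\varepsilon_N^2\big)$, and on $[x_-,x_+]$, uniformly in $\beta\in[0,1]$,
\[
V_{\sigma_1}+\beta\widetilde f\,\varepsilon_N-\varepsilon_N^2\ \ge\ m_0-M|\varepsilon_N|-\varepsilon_N^2\ \ge\ \tfrac12 m_0>0
\]
once $|\varepsilon_N|$ is below a fixed threshold, in particular for all large $N$. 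Substituting this bound into the quadratic form gives $\Lambda_j(\beta,\omega_{\beta,n})\ge\tfrac12 m_0 N^2\int_{x_-}^{x_+}|u_j|^2\,\td x=\tfrac12 m_0 N^2>0$, contradicting $\Lambda_j=0$. The same computation in fact shows $|\omega_{\beta,n}|\ge\delta_0|n|$ for a fixed $\delta_0>0$ independent of $\beta$.

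The argument is short because the only analytic content — positivity and a uniform lower bound for $V_{\sigma_1}$ on the fixed compact domain $[x_-,x_+]$, together with boundedness of $\widetilde f$ there — has already been extracted in Proposition~\ref{propnonlin} and Lemma~\ref{npcty}. The one point to handle with a little care is the uniformity in $\beta\in[0,1]$ of the lower bound in item~(2), which is immediate here since $|\beta\widetilde f\,\varepsilon_N|\le M|\varepsilon_N|$ regardless of $\beta$; if instead $[x_-,x_+]$ were allowed to depend on $(\omega,n)$ one would first invoke Lemma~\ref{npcty} to keep it inside a fixed compact subset of $(\alpha,\infty)$ before running the same estimate.
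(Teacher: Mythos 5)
Your proposal is correct and follows essentially the same route as the paper: both parts are proved by contradiction from the vanishing of the Dirichlet quadratic form, using positivity of $V_{\sigma_1}$ on the fixed interval $[x_-,x_+]$ for (1) and, for (2), the observation that when $\omega=o(n)$ the terms $\beta|n|\widetilde f\,\omega-\omega^2$ are $o(n^2)$ (the paper writes this as a small-constant bound $Cn^2$) and so cannot overcome $n^2V_{\sigma_1}\ge m_0n^2$, forcing $\Lambda_j>0$. Your version merely makes the constants $m_0$, $M$ and the uniformity in $\beta$ explicit, which is a cleaner writing of the same estimate.
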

\begin{proof}
	\begin{enumerate}
		\item If $\omega_{\beta,n} =0$ we have
		\begin{align}
			\begin{aligned}	\int_{x_{-}}^{x_{+}} & u_j(\beta,0)  \mathcal{Q}(\beta,\omega) u_j(\beta,\omega) \, \td x \\ &= \int_{x_{-}}^{x_{+}}  \left.  - u_j(\beta,0) \frac{\td^2 u_j(\beta,0)}{ \td x^2 } + (n^2 V_{\sigma_1} u^2_j(\beta,0)  +   \beta |n| f \omega u^2_j(\beta,0) - \omega^2 u^2_j(\beta,0) ) \right|_{\omega =0} \, \td x \\ &=  \int_{x_{-}}^{x_{+}}   - u_j(\beta,0) \frac{\td^2 u_j(\beta,0)}{ \td x^2 } + (n^2 V_{\sigma_1} u^2_j(\beta,0)  ) \, \td x\\ & = \int_{x_{-}}^{x_{+}}   \left| \frac{\td u_j(\beta,0)}{ \td x } \right|^2 \, \td x + \int_{x_{-}}^{x_{+}} n^2 V_{\sigma_1} u_j^2(\beta,0) \td x .  \end{aligned} 
		\end{align}	
		The first integral is positive and $V_{\sigma_1} >0$. Hence $ \Lambda_j(\beta,0) \neq 0$ which concludes the proof of the first part.
		\item Suppose that $\omega_{\beta,n} = o(n)$.  We proceed with the same steps as above and arrive at, 
		\begin{align}
			\int_{x_{-}}^{x_{+}} u_j(\beta,0)  \mathcal{Q}(\beta,\omega) \, u_j(\beta,\omega) \, \td x  & = \int_{x_{-}}^{x_{+}}   \left| \frac{\td u_j(\beta,0)}{ \td x } \right|^2 \td x + \int_{x_{-}}^{x_{+}} (n^2 V_{\sigma_1}   +  \beta |n| \widetilde{f} \omega  - \omega^2) u^2_j(\beta,\omega) \, \td x \nonumber 
		\end{align}	
		With $\omega_{\beta,n} = o(n)$, we have
		\begin{align}
			n^2 V_{\sigma_1}  +   \beta |n| \widetilde{f} \omega  - \omega^2 \geq n^2 V_{\sigma_1} - C \left[  1- \beta \widetilde{f}\right] n^2
		\end{align}	
		The right hand side is positive when $ C$ is sufficiently small making the second term in the integral positive implying that $\omega  \neq o(n)$. In particular we note here that the above result holds for $\beta =0$ which is the case for the linear eigenvalue problem. 	 
	\end{enumerate}	
\end{proof}

\begin{cor}
	As a consequence of Lemma \ref{bound_omega}, we conclude that given the existence of eigenvalues $\omega_{\beta,n}$ for sufficiently large $n^2$,
	\begin{align} \nonumber
		\omega_{\beta,n}^2 \geq \mathcal{O}(n^2)
	\end{align}i.e., there exists a positive constant $C_\beta$ independent of $n^2$ such that \begin{align} \nonumber
		\omega_{\beta,n}^2 \geq C_{\beta} \, n^2. 
	\end{align}
\end{cor}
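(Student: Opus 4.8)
The plan is to turn the estimate in the proof of Lemma~\ref{bound_omega}(2) into a quantitative bound, the point being only to keep track of the constants. Suppose $\omega_{\beta,n}$ is an eigenvalue of \eqref{np_family} for sufficiently large $n$, so that $\Lambda_j(\beta,\omega_{\beta,n})=0$ for some index $j$, and let $u=u_j(\beta,\omega_{\beta,n})$ be the corresponding normalized eigenfunction on $[x_-,x_+]$. Testing $\mathcal{Q}(\beta,\omega_{\beta,n})u=0$ against $u$ and integrating by parts (the boundary terms vanishing by the Dirichlet conditions) gives
\begin{align}
0=\int_{x_-}^{x_+}\left|\frac{\td u}{\td x}\right|^2\td x+\int_{x_-}^{x_+}\Bigl(n^2V_{\sigma_1}+\beta|n|\widetilde f\,\omega_{\beta,n}-\omega_{\beta,n}^2\Bigr)u^2\,\td x .
\end{align}
Since the Dirichlet energy is nonnegative and $\int_{x_-}^{x_+}u^2\,\td x=1$, this forces $\int_{x_-}^{x_+}\bigl(n^2V_{\sigma_1}+\beta|n|\widetilde f\,\omega_{\beta,n}-\omega_{\beta,n}^2\bigr)u^2\,\td x\le 0$.

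Next I would bound the potential term from below using structure already in place. The interval $[x_-,x_+]$ is a compact subset of $(\alpha,\infty)$, and on it $V_{\sigma_1}$ is continuous and strictly positive — the positivity being precisely the assertion $V_{(0,n)}=n^2V_{\sigma_1}>0$ of Proposition~\ref{propnonlin}(1) — so that $v_0:=\min_{[x_-,x_+]}V_{\sigma_1}>0$, while $f_{\max}:=\max_{[x_-,x_+]}|\widetilde f|<\infty$. Using $\beta\in[0,1]$ and the normalization of $u$, the inequality above yields
\begin{align}
0\ \ge\ n^2 v_0-f_{\max}\,|n|\,|\omega_{\beta,n}|-\omega_{\beta,n}^2 ,
\end{align}
i.e. $\omega_{\beta,n}^2+f_{\max}\,|n|\,|\omega_{\beta,n}|\ge v_0\,n^2$. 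Reading the left-hand side as a quadratic in $|\omega_{\beta,n}|$ and solving gives
\begin{align}
|\omega_{\beta,n}|\ \ge\ \frac{|n|}{2}\Bigl(\sqrt{f_{\max}^2+4v_0}-f_{\max}\Bigr),
\end{align}
whose right-hand side is a strictly positive multiple of $|n|$ because $v_0>0$. Squaring and taking $C_\beta$ to be that constant (which may in fact be taken independent of $\beta$ once $[x_-,x_+]$ is fixed) yields $\omega_{\beta,n}^2\ge C_\beta\,n^2$, which is the claim.

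The argument is essentially routine and purely variational: it uses only the min-max identity for $\Lambda_j$ together with the continuity and positivity of $V_{\sigma_1}$ on the compact interval $[x_-,x_+]$, and does not invoke the existence theory for $\omega_{\beta,n}$ beyond the standing hypothesis that such an eigenvalue exists. The only point requiring a moment's care is the uniformity in $\beta\in[0,1]$ of the cross-term bound $\bigl|\beta\,|n|\,\widetilde f\,\omega\bigr|\le f_{\max}\,|n|\,|\omega|$, which is immediate since $\beta$ enters linearly with $|\beta|\le 1$; there is no genuine obstacle.
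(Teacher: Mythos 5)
Your proof is correct, and at its core it is the same computation the paper relies on: testing $\mathcal{Q}(\beta,\omega_{\beta,n})u=0$ against the normalized Dirichlet eigenfunction, discarding the nonnegative Dirichlet energy, and using the strict positivity of $V_{\sigma_1}$ on $[x_-,x_+]$ together with boundedness of $\widetilde f$ there. The difference is that the paper runs this computation only in the contrapositive form of Lemma \ref{bound_omega} (``$\omega_{\beta,n}\neq 0$'' and ``$\omega_{\beta,n}\neq o(n)$'') and then states the corollary as an immediate consequence, whereas you extract the quadratic inequality $\omega^2+f_{\max}|n|\,|\omega|\ge v_0 n^2$ and solve it explicitly. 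This is a genuine (if modest) improvement in rigor: strictly speaking, ``$\omega_{\beta,n}\neq o(n)$'' only rules out the ratio $|\omega_{\beta,n}|/|n|$ tending to zero and does not by itself deliver a single constant $C_\beta$ valid for all sufficiently large $n$, while your argument produces the explicit constant $\tfrac14\bigl(\sqrt{f_{\max}^2+4v_0}-f_{\max}\bigr)^2$, uniform in both $n$ and $\beta\in[0,1]$. The $\beta$-uniformity is a useful bonus, since the paper later needs $B:=\inf_{\beta\in[0,1]}C_\beta>0$ in the proof of Lemma \ref{implicit}, and with your constant this is automatic. The only hypotheses you invoke — continuity and positivity of $V_{\sigma_1}$ on the compact interval and boundedness of $\widetilde f$ away from $x=\alpha$ — are exactly those already used in the paper's proof of Lemma \ref{bound_omega}, so nothing new is being assumed.
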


\subsection{Eigenvalues for $\beta \neq 0$}
\begin{lemma}
	\label{implicit}
	Let $\beta_0 \in [0,1]$, $\omega_{\beta_0,n} >0$ and $n \in \mathbb{Z}^{-}$ be such that the $j^{\mathrm{th}}$ eigenvalue of $ \mathcal{Q}(\beta_0, \omega_{\beta_0,n})$ is zero. Then for sufficiently large $n^2$, there exists a constant $\epsilon >0$ (independent of $\beta_0$) such that there is a differentiable function $\omega_{\beta,n}(\beta)$ such that the $n^{\mathrm{th}}$ eigenvalue of $\mathcal{Q}(b,\omega_{\beta,n})$ is zero for any $\beta \in (\max (0,\beta_0-\epsilon),\beta_0+\epsilon)$.  
\end{lemma}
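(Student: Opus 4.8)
The plan is to rephrase the self-consistency condition as a single scalar equation $\Lambda_j(\beta,\omega)=0$ and apply the implicit function theorem. Here $\Lambda_j(\beta,\omega)$ denotes the $j$-th Dirichlet eigenvalue, on the \emph{fixed} interval $\Omega=[x_-,x_+]$ supplied by Lemma~\ref{npcty}, of
\[
\mathcal{Q}(\beta,\omega)=-\frac{\td^2}{\td x^2}+n^2V_{\sigma_1}+\beta|n|\widetilde{f}\,\omega-\omega^2,
\]
and the hypothesis says $\Lambda_j(\beta_0,\omega_{\beta_0,n})=0$. Two inputs are needed: (i) enough regularity of $\Lambda_j$, and (ii) a non-degeneracy bound $\partial_\omega\Lambda_j\neq 0$ at the base point that is uniform in $\beta_0\in[0,1]$. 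For (i): $\mathcal{Q}(\beta,\omega)$ is the fixed operator $-\td^2/\td x^2$ plus a bounded multiplication operator depending polynomially on $(\beta,\omega)$, and every Dirichlet eigenvalue of a one-dimensional Schr\"odinger operator on an interval is simple. Hence analytic perturbation theory for self-adjoint families shows that $\Lambda_j(\beta,\omega)$ and its $L^2$-normalized eigenfunction $u_j(\beta,\omega)$ are real-analytic, in particular $C^1$, near $(\beta_0,\omega_{\beta_0,n})$; this licenses differentiating $\Lambda_j$.

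For (ii), the sign of $\widetilde{f}$ does the work. First-order perturbation theory (Feynman--Hellmann) gives
\begin{align}
\partial_\omega\Lambda_j(\beta,\omega)=\int_{x_-}^{x_+}\big(\beta|n|\widetilde{f}-2\omega\big)\,u_j(\beta,\omega)^2\,\td x .
\end{align}
Since $\widetilde{f}<0$ on $\Omega$ (as used in the analysis of Section~\ref{sec:nonlinear} leading to Lemma~\ref{npcty}), for $\beta\in[0,1]$ and $\omega>0$ the integrand is strictly negative and $\partial_\omega\Lambda_j(\beta,\omega)\leq-2\omega<0$. Combined with the lower bound $\omega_{\beta_0,n}\geq\sqrt{C}\,n$ from the corollary to Lemma~\ref{bound_omega} --- whose constant $C$ may be taken uniform in $\beta_0$, since it depends only on $\sup_\Omega|\widetilde{f}|$ --- this yields $|\partial_\omega\Lambda_j(\beta_0,\omega_{\beta_0,n})|\geq 2\sqrt{C}\,n>0$. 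The implicit function theorem then produces a differentiable $\omega_{\beta,n}(\beta)$ on a neighbourhood of $\beta_0$ with $\Lambda_j(\beta,\omega_{\beta,n}(\beta))\equiv 0$ and $\omega_{\beta,n}(\beta_0)=\omega_{\beta_0,n}$; as the Dirichlet spectrum is simple and varies continuously, the zero stays the $j$-th eigenvalue, and $\omega_{\beta,n}(\beta)=\Theta(n)>0$ preserves the sign hypothesis along the branch.

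To make the radius $\epsilon$ independent of $\beta_0$ (and of $n$) I would work in the semiclassical scaling $\omega=|n|\varpi$, $h=|n|^{-1}$, so that $n^{-2}\Lambda_j(\beta,|n|\varpi)=\widetilde{\Lambda}_j(\beta,\varpi;h)$ is the $j$-th Dirichlet eigenvalue of $-h^2\partial_x^2+V_{\sigma_1}+\beta\widetilde{f}\varpi-\varpi^2$ on $\Omega$. Then $\partial_\varpi\widetilde{\Lambda}_j\leq-2\varpi\leq-2\sqrt{C}$ and $\partial_\beta\widetilde{\Lambda}_j=\varpi\int_\Omega\widetilde{f}u_j^2\,\td x$ is bounded, both uniformly on the $O(1)$-range of $\varpi$ on which the trapped structure of Lemma~\ref{npcty} persists and uniformly in $h$; and the second derivatives of $\widetilde{\Lambda}_j$ are $O(1)$ uniformly in $h$, because the perturbation $\beta\widetilde{f}\varpi-\varpi^2$ is a smooth bounded multiplier with $h$-independent bounds and the semiclassical eigenvalue estimates underlying Section~\ref{sec:linear} then control its derivatives. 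A quantitative (Newton-iteration) version of the implicit function theorem then gives an $\epsilon$ depending only on these $\Theta(1)$ and $O(1)$ constants, hence independent of $\beta_0$. Finally, since $\omega_{\beta,n}(\beta)=\Theta(n)$, the hypotheses of Lemma~\ref{npcty} persist, so $\Omega$ stays an admissible trapped interval throughout the neighbourhood.

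The main obstacle I anticipate is exactly this uniformity: extracting a $\beta_0$-independent $\epsilon$ requires controlling the second derivatives of the rescaled eigenvalue uniformly in the semiclassical parameter $h$, which must be obtained either from a direct high-frequency/WKB estimate or from the min--max comparison machinery already set up for $\mathcal{P}_0$ in Section~\ref{sec:linear} (the delicate point being that near-degeneracies of $\widetilde{\Lambda}_j$ with spacing $\sim h$ are compensated by the $O(h^{\,\cdot/2})$ off-diagonal matrix elements of a smooth multiplier). The single-step implicit function argument is otherwise routine once the sign fact $\widetilde{f}<0$ on $\Omega$ is in hand; without it, non-degeneracy would rest on the less transparent inequality $4V_{\sigma_1}^{\min}>\big(\sup_\Omega|\widetilde{f}|\big)^2$. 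The analyticity and simplicity of the Dirichlet eigenvalue are classical one-dimensional Sturm--Liouville facts, but should be invoked explicitly.
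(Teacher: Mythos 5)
Your proposal follows essentially the same route as the paper: compute $\partial_\omega\Lambda_j$ and $\partial_\beta\Lambda_j$ via the Feynman--Hellmann identity, use $\widetilde{f}<0$ together with the uniform lower bound $\omega_{\beta,n}\geq C_\beta\,|n|$ from Lemma \ref{bound_omega} to show $\partial_\omega\Lambda_j$ is bounded away from zero uniformly in $\beta_0\in[0,1]$, and invoke the implicit function theorem. Your additional remarks on eigenvalue simplicity/analyticity and on a quantitative (Newton-iteration) implicit function theorem in the semiclassical scaling are sound refinements of points the paper treats only implicitly, but they do not change the argument.
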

\begin{proof}
We start with the expression for the $j^{\mbox{th}}$ eigenvalue $ \Lambda_j(\beta,\omega) $ : 
\begin{align}
	\Lambda_j(\beta,\omega) = \int_{x_{-}}^{x_{+}} u_j(\beta,\omega) \, \mathcal{Q}(\beta,\omega) \, u_j(\beta,\omega) \, \td x .
\end{align}
We assume that the $j^{\mathrm{th}}$ eigenvalue is zero. $ \Lambda_j(\beta_0,\omega_{\beta_0,n}) = 0$ gives an implicit relation between $\beta$ and $\omega_{\beta,n}$. In a neighbourhood of $\beta_0$, the implicit function theorem provides necessary and sufficient conditions for the existence of $\omega_{\beta,n}(\beta)$. We have, 

\begin{align}
	\frac{\partial \Lambda_j}{\partial \omega} (\beta_0, \omega_{\beta_0,n}) &= \int_{x_{-}}^{x_{+}} u_j(\beta,\omega) \, \frac{\partial}{\partial \omega} \left( n^2 V_{\sigma_1}   +   \beta |n| \widetilde{f} \omega  - \omega^2 \right)  \, u_j(\beta,\omega) \, \td x  \nonumber \\ 
	& = \int_{x_{-}}^{x_{+}} u_j(\beta,\omega) \,  \left(      \beta |n| \widetilde{f}   - 2\omega \right)  \, u_j(\beta,\omega) \, \td x .   \nonumber \\ 
	\frac{\partial \Lambda_j}{\partial \beta} (\beta_0, \omega_{\beta_0,n}) &= \int_{x_{-}}^{x_{+}} u_j(\beta,\omega) \, \frac{\partial}{\partial \beta} \left( n^2 V_{\sigma_1}   +   \beta |n| \widetilde{f} \omega  - \omega^2 \right)  \, u_j(\beta,\omega) \, \td x \nonumber \\ &= \int_{x_{-}}^{x_{+}} u_j(\beta,\omega) \,  \left(    |n|
	\widetilde{f} \omega   \right)  \, u_j(\beta,\omega)\, \td x.
\end{align}  Since $\widetilde{f} < 0$, we have,
\begin{align}
	\beta |n| \widetilde{f} - 2 \omega \leq -2 \omega \mbox{ for } n \in \mathbb{Z}^{-} \mbox{ and } \omega \in \mathbb{R}^{+}
\end{align}
This holds for all $ \beta \in [0,1]$ and 
we have from Lemma \ref{bound_omega} that $ \omega \geq n C_{\beta} $. Hence we have a uniform constant $ B := \inf_{ \substack{\beta \in [0,1]}} C_{\beta}$ such that 

\begin{align}
	\beta |n| \widetilde{f} - 2 \omega \leq -2 B n
\end{align}
This means that $ \di \frac{\partial \Lambda_j}{\partial \omega} (\beta_0, \omega_{\beta_0,n}) $ is bounded away from zero. By the implicit function theoreom, this proves the existence of $ \omega_{\beta,n}(\beta) $ in a neighborhood of $\beta_0$. We can compute the derivative of $\omega_{\beta,n}(\beta)$ at $\beta_0$ using, 
\begin{align}
	\di \frac{\td \omega_{\beta,n}}{\td \beta} (\beta_0) = \di -  \frac{\di \frac{\partial \Lambda_j}{\partial \beta} (\beta_0,\omega_{\beta_0,n})}{\di\frac{\partial \Lambda_j}{\partial \omega} (\beta_0,\omega_{\beta_0,n})}, \qquad
	\frac{\partial \Lambda_j}{\partial \omega} (\beta_0,\omega_{\beta_0,n}) = |n| \widetilde{f} \omega_{\beta_0,n}. 
\end{align}
Similar to the argument above, we have that $ \di \frac{\partial \Lambda_j}{\partial \omega} (\beta_0,\omega_{\beta_0,n}) $ is bounded away from zero. We hence arrive at
\begin{align}
	\label{negbeta}
	- |n|\widetilde{C}_\beta   \leq  \frac{\td \omega_{\beta,n}}{\td \beta} (\beta_0) <0
\end{align} for some $\widetilde{C}_\beta >0$ and $\beta \in [0,1]$. 
\end{proof}
\subsection{Existence of eigenvalues for the nonlinear problem} 
We conclude this section by demonstrating the existence of eigenvalues for the nonlinear problem. Note that along the same lines as Remark 8.20 in \cite{benomio:2018ivy}, it is clear from the final part of  Fig.\ref{cty_nonlinear_potential} that the energy level $E\in[1.405,1.415]$ which is an `appropriate value' for the nonlinear problem also works for the linear problem in the sense that Lemma \ref{cty_potential} holds for the chosen value of $E$.
\begin{theorem}
	\label{omega_bounds}
	Consider fixed energy levels $E$ and $ \mathcal{E}$ as in Lemma \ref{npcty}. Let $n \in \mathbb{Z}^{-}$. Given eigenvalues $\omega^2_{\mathrm{lin},n}$ for the linear eigenvalue problem where  $\omega^2_{\mathrm{lin},n}>0$, there exists an eigenvalue $\omega^2_{n}$ and corresponding eigenfunction $u_n$ to the nonlinear eigenvalue problem for  large enough $n$. Furthermore $\omega_n>0$ and the following bound holds for any $\delta >0$,
	\begin{align}
		\mathcal{C} \leq \frac{\omega_n^2}{n^2} \leq E^2 +\delta
	\end{align}
	where the constant $\mathcal{C}$ is independent of $n$.
\end{theorem}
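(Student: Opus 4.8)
The plan is to run a continuation argument in the homotopy parameter $\beta$, starting from the linear problem $\mathcal{P}_0$, where eigenvalues are produced by Weyl's law, and ending at the nonlinear problem $\mathcal{P}_1$; Lemma~\ref{implicit} furnishes the inductive step and the Corollary to Lemma~\ref{bound_omega} keeps the continuation from degenerating. First I would set up the base case at $\beta=0$. For the fixed energy levels $E<\mathcal{E}$ of Lemma~\ref{npcty} and any sufficiently small $\delta>0$, Theorems~\ref{weyls_law} and~\ref{weyl-appl} guarantee that for all large $n$ the Dirichlet problem on $[x_-,x_+]$ has at least one eigenvalue $\kappa_n=\omega_{\mathrm{lin},n}^2/n^2$ in a fixed interval strictly below $E^2$ (the number of such eigenvalues is $\sim n\,(\mathcal{Q}_{E^2,h}-\mathcal{Q}_{E^2-\delta,h})\to\infty$). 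Fixing one such eigenvalue with its index $j=j(n)$ gives $\Lambda_j(0,\omega_{\mathrm{lin},n})=0$ with $\omega_{\mathrm{lin},n}>0$, which is the seed of the continuation.

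Next I would carry out the continuation to $\beta=1$. Lemma~\ref{implicit} provides, for all large $n$, a constant $\epsilon>0$ independent of $\beta_0$ such that whenever $\Lambda_j(\beta_0,\omega_{\beta_0,n})=0$ with $\omega_{\beta_0,n}>0$ the $j$-th eigenvalue continues as a $C^1$ function $\omega_{\beta,n}(\beta)$ on $(\max(0,\beta_0-\epsilon),\beta_0+\epsilon)$, with $\mathrm{d}\omega_{\beta,n}/\mathrm{d}\beta<0$ there by~\eqref{negbeta}. Since $\epsilon$ is uniform, I would fix a partition $0=\beta^{(0)}<\beta^{(1)}<\cdots<\beta^{(N)}=1$ of mesh $<\epsilon$ and argue by induction on $k$: a solution at $\beta^{(k)}$ with $\omega_{\beta^{(k)},n}>0$ is continued by Lemma~\ref{implicit} past $\beta^{(k+1)}$, and the Corollary to Lemma~\ref{bound_omega}, which is a conditional lower bound holding at any $\beta$ where an eigenvalue exists, forces $\omega_{\beta^{(k+1)},n}^2\ge C_{\beta^{(k+1)}}n^2\ge B n^2>0$ with $B:=\inf_{\beta\in[0,1]}C_\beta$, so the hypothesis $\omega>0$ is restored for the next step. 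Local uniqueness in the implicit function theorem glues the successive branches into a single $C^1$ map $\beta\mapsto\omega_{\beta,n}$ on $[0,1]$; the index $j$ is preserved and distinct branches cannot collide because a collision would make $0$ a double eigenvalue of $\mathcal{Q}(\beta,\omega)$, impossible for a one-dimensional Dirichlet problem. Then $\omega_n:=\omega_{1,n}>0$ (positivity from continuity in $\beta$ together with the uniform lower bound), with an associated Dirichlet eigenfunction $u_n$, gives the asserted eigenpair.

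For the two-sided bound, the lower bound is just the Corollary to Lemma~\ref{bound_omega} at $\beta=1$, namely $\omega_n^2/n^2\ge\mathcal{C}$ with $\mathcal{C}:=C_1$ independent of $n$; the upper bound follows from monotonicity, $\omega_n=\omega_{1,n}\le\omega_{0,n}=\omega_{\mathrm{lin},n}$, together with the choice of the seed, $\omega_{\mathrm{lin},n}^2/n^2<E^2\le E^2+\delta$.

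The hard part will be checking that the continuation never breaks down before $\beta=1$: along the way the branch must stay strictly positive (so that Lemma~\ref{implicit} keeps applying), must stay in the range of $\omega/n$ on which Lemma~\ref{npcty} keeps the domain $[x_-,x_+]$ and its trapped structure meaningful (this is where the buffer $E<\mathcal{E}$ is used), and must carry a well-defined eigenvalue index. All three are controlled uniformly precisely because $\epsilon$ and the ``sufficiently large $n$'' thresholds in Lemma~\ref{implicit} do not depend on $\beta_0$: only finitely many applications are needed and the branch is confined \emph{a priori} to the compact range $\omega/n\in[\sqrt{B},E]$, on which all of the estimates invoked are uniform.
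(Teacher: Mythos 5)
Your proposal is correct and follows essentially the same route as the paper: seed the continuation at $\beta=0$ with a Weyl-law eigenvalue, continue in $\beta$ to $\beta=1$ using the uniform $\epsilon$ of Lemma~\ref{implicit} with positivity maintained by the Corollary to Lemma~\ref{bound_omega}, and obtain the bounds from that corollary plus the monotonicity \eqref{negbeta} (the paper phrases the upper bound via the quadratic-form comparison $E_j^2(\beta)\leq E_j^2(0)$ with the seed in $[E^2-\delta,E^2+\delta]$, which is the same monotonicity you invoke). Your explicit gluing of the local branches over a partition of $[0,1]$ just spells out what the paper leaves implicit; the only blemishes are cosmetic (e.g.\ the extra factor of $n$ in front of $\mathcal{Q}_{E^2,h}-\mathcal{Q}_{E^2-\delta,h}$, which already contains $1/h$).
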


\begin{proof}
	We start by looking at the eigenvalue problem for  $\beta=0$. We know from the linear eigenvalue problem that for large $n^2$ there exists a $\omega_{0,n}$ such that $\mathcal{Q}(0,\omega_{0,n})$ admits a zero eigenvalue i.e., $\Lambda_j(0,\omega_{0,n})=0$ for some $j$. By Lemma \ref{bound_omega}, $\omega_{0,n} \neq 0$. 
	
	Let $\omega_{0,n} >0$, then from Lemma \ref{implicit}, for some $\epsilon >0$ there exists a continuous function $\omega_{\beta,n}(\beta)$ such that for any $\beta \in  [0 ,\epsilon)$ the nonlinear eigenvalue problem admits a zero eigenvalue i.e., $ \Lambda_j(\beta,\omega_{\beta,n})=0$ for some $j$. By \eqref{negbeta}, 
	\begin{align}
		\omega^2_n = \omega^2_{1,n}(1) \leq \omega^2_{0,n}(0) \leq C n^2
	\end{align}
	Here $C$ does not depend on $n$. The bound $\omega_{0,n}(0) \leq C n^2 $ comes from conditions on appropriate energy levels $E$ from the assumptions of Lemma \ref{cty_potential}. In conjunction with Lemma \ref{bound_omega}, we have,
	
	\begin{align}
		C_1 \leq \frac{\omega^2_n}{n^2} \leq C_2
	\end{align}
	for constants $C_1$ and $C_2$ independent of $n$. For  $\beta \in [0,1]$, let us consider the problems 
	\begin{align}
		\overline{Q}_{\beta} u = E^2_j(\beta) u
	\end{align}
	where
	\begin{align}
	\begin{aligned}
		\overline{Q}_\beta u &:=  -  \frac{1}{n^2}\frac{\td^2 u}{ \td x^2 } +  \frac{1}{n^2}( n^2 V_{\sigma_1}  +   \beta |n| \widetilde{f} \omega ) \\
		 & E_j(\beta) := \frac{\omega^2_{\beta,n}(\beta)}{n^2}
			\end{aligned}
	\end{align}
	We have from  Weyl's law for the linear problem in conjunction with Lemma \ref{npcty} that, 
	\begin{align}
		E^2_j(0) \in [E^2 - \delta, E^2 + \delta]
	\end{align}
	for any arbitrary small $\delta$ and sufficiently large $n^2$. For $ n \in \mathbb{Z}^{-}$ and $\beta \in [0,1]$ we have the estimate,
	
	\begin{align}
		0 \leq \int_{x_{-}}^{x_{+}} u
		( \overline{Q}_{0} - \overline{Q}_{\beta} ) u \,\td x  = \int_{x_{-}}^{x_{+}} -|n| \widetilde{f} \omega_{\beta,n} \, \td x
	\end{align}
	which means,
	\begin{align}
		\int_{x_{-}}^{x_{+}} u \overline{Q}_\beta u \, \td x \leq \int_{x_{-}}^{x_{+}} u \overline{Q}_0 u \, \td x
	\end{align}
	implying 
	\begin{align}
		E^2_j(\beta) \leq E^2_j(0).
	\end{align}
	In particular this means, 
	\begin{align}
		E^2_j(1) \leq E^2_j(0) \leq E^2 + \delta
	\end{align}
	Combining the bounds we have 
	\begin{align}
		\begin{aligned}
			\mathcal{C} \leq E^2_j(1) \leq E^2 +\delta \\ 
			\mathcal{C} \leq \frac{\omega_n^2}{n^2} \leq E^2 +\delta
		\end{aligned}
	\end{align}	
\end{proof}
\section{Lower bound on the uniform energy decay rate}
\label{sec:lowerbound}
The purpose of this section is to prove an energy estimate for solutions to the eigenvalue problem discussed in the previous sections.  This is the main step towards establishing the desired lower bound on energy decay. We begin with the following basic lemma which can be proved using integration by parts.
\begin{lemma}
	\label{basic_identity}
	Let $ x_{-} < x_{+}$, $h >0 $ be a constant and $W $ and $\phi$ be smooth functions on $[x_{-},x_{+}]$. Then, for all smooth functions $u $  defined on $[x_{-},x_{+}]$,
	\begin{align}
		\nonumber	\int_{x_{-}}^{x_{+}} \left( \left| \frac{\mathrm{d} }{\mathrm{d} x} \left( e^{\phi/ h} u\right) \right|^2 + h^{-2} \left( W - \left(\frac{\td \phi}{\td x} \right)^2  \right) e^{2 \phi/h} |u|^2 \right) \mathrm{d} x = \int_{x_{-}}^{x_{+}} \left( -\frac{ \mathrm{d}^2 u }{\mathrm{d} r^2} + h^{-2} W u \right) u \, e^{2 \phi/h} \mathrm{d} x
	\end{align}
	
	\begin{proof}
		We start by expanding the expression on the left hand side.
		\begin{align}
			\begin{aligned}
				\int_{x_{-}}^{x_{+}} & \left( \left| \frac{\td }{\td x} \left( e^{\phi/ h} u\right) \right|^2  + h^{-2} \left( W - \left(\frac{\td \phi}{\td x} \right)^2  \right) e^{2 \phi/h} |u|^2 \right) \td x \nonumber \\ & =
				\int_{x_{-}}^{x_{+}} \left( \left| u h^{-1} e^{\phi/ h}  \frac{\td \phi }{\td x} +  e^{\phi/ h}  \frac{\td u }{\td x} \right|^2 + h^{-2} W e^{2 \phi/h} |u|^2  - h^{-2} \left(\frac{\td \phi}{\td x} \right)^2  e^{2 \phi/h} |u|^2 \right) \td x  \nonumber \\ & = \int_{x_{-}}^{x_{+}} \left(  u^2 h^{-2} e^{2\phi/ h} \left( \frac{\td \phi }{\td x}\right)^2 +  e^{2\phi/ h}  \left(\frac{\td u }{\td x}\right)^2 + \frac{2 |u| e^{2 \phi/ h}}{h}  \left(\frac{\td \phi}{\td x}\right)  \left(\frac{\td u }{\td x}\right) \right. \nonumber \\ & \qquad \qquad \left. + h^{-2} W e^{2 \phi/h} |u|^2  - h^{-2} \left(\frac{\td \phi}{\td x} \right)^2  e^{2 \phi/h} |u|^2 \right) \td x  \nonumber \\ & = \int_{x_{-}}^{x_{+}} \left(   e^{2\phi/ h}  \left(\frac{\td u }{\td x}\right)^2 + \frac{2 u e^{2 \phi/ h}}{h}  \left(\frac{\td \phi }{\td x}\right)  \left(\frac{\td u }{\td x}\right)  + h^{-2} W e^{2 \phi/h} |u|^2   \right) \td x  \nonumber \\ 
				\mbox{(using }& \mbox{integration by parts on the second term)} \nonumber \\ 
				& = \left.  e^{2 \phi/ h} u \left(\frac{\td u }{\td x}\right) \right|^{x_{+}}_{x_{-}} - \int_{x_{-}}^{x_{+}} e^{2 \phi/ h} \frac{\td }{\td x} \left( u  \left(\frac{\td u }{\td x}\right) \right) +  \int_{x_{-}}^{x_{+}} \left(   e^{2\phi/ h}  \left(\frac{\td u }{\td x}\right)^2   + h^{-2} W e^{2 \phi/h} |u|^2   \right) \td x \nonumber \\ 
				& = \int_{x_{-}}^{x^{+}} \left( -\frac{ \td^2 u }{\td x^2} + h^{-2} W u \right) u e^{2 \phi/h} \td x \nonumber
			\end{aligned}
		\end{align}
	\end{proof}
\end{lemma}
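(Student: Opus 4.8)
The statement is a pure integration-by-parts identity, so the plan is computational rather than conceptual: expand the weighted derivative on the left, watch the $(\td\phi/\td x)^2$ terms cancel against the corresponding piece of the potential weight, and then integrate the surviving first-order term by parts. First I would write $\frac{\td}{\td x}(e^{\phi/h}u) = e^{\phi/h}\big(h^{-1}\tfrac{\td\phi}{\td x}u + \tfrac{\td u}{\td x}\big)$ and square it, obtaining $e^{2\phi/h}\big(h^{-2}(\tfrac{\td\phi}{\td x})^2 u^2 + 2h^{-1}\tfrac{\td\phi}{\td x}u\tfrac{\td u}{\td x} + (\tfrac{\td u}{\td x})^2\big)$. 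Adding the term $h^{-2}\big(W-(\tfrac{\td\phi}{\td x})^2\big)e^{2\phi/h}u^2$ then kills the $(\tfrac{\td\phi}{\td x})^2 u^2$ contribution exactly, leaving the integrand $e^{2\phi/h}\big((\tfrac{\td u}{\td x})^2 + 2h^{-1}\tfrac{\td\phi}{\td x}u\tfrac{\td u}{\td x} + h^{-2}Wu^2\big)$.

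The next step is to notice that $2h^{-1}\tfrac{\td\phi}{\td x}e^{2\phi/h} = \tfrac{\td}{\td x}(e^{2\phi/h})$, so the remaining first-order terms assemble into a total derivative: $e^{2\phi/h}(\tfrac{\td u}{\td x})^2 + 2h^{-1}\tfrac{\td\phi}{\td x}e^{2\phi/h}u\tfrac{\td u}{\td x} = \tfrac{\td}{\td x}\big(e^{2\phi/h}u\tfrac{\td u}{\td x}\big) - e^{2\phi/h}u\tfrac{\td^2 u}{\td x^2}$. Integrating this over $[x_-,x_+]$ and carrying along the $h^{-2}We^{2\phi/h}u^2$ term produces exactly the right-hand side of the identity together with the boundary contribution $\big[e^{2\phi/h}u\,\tfrac{\td u}{\td x}\big]_{x_-}^{x_+}$. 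Equivalently one can integrate the cross term by parts directly against $\tfrac{\td u}{\td x}$; either way the computation is one line once the cancellation is in hand.

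I do not expect a genuine obstacle: the whole argument is one expansion, one exact cancellation, and one integration by parts. The only point needing attention is that the identity as displayed holds only after the boundary term $\big[e^{2\phi/h}u\,\tfrac{\td u}{\td x}\big]_{x_-}^{x_+}$ is discarded, so the lemma is really meant for $u$ satisfying $u(x_-)=u(x_+)=0$ (as do the eigenfunctions built in \S\ref{sec:nonlinear}), or more generally for $u$ compactly supported in $(x_-,x_+)$; this should be recorded among the hypotheses. I would also flag the evident typo on the right-hand side, where $\mathrm{d}r^2$ should read $\mathrm{d}x^2$. With the identity in hand, the purpose in the next step will be to insert $W = h^{-2}(\text{nonlinear potential})$ and choose $\phi$ as an Agmon-type distance so that the weight $W-(\tfrac{\td\phi}{\td x})^2$ stays nonnegative on the relevant interval, which is precisely what turns this identity into the exponentially small localization estimate for the eigenfunctions near the cut-off.
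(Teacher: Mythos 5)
Your proof is correct and follows essentially the same route as the paper's: expand the weighted derivative $\frac{\td}{\td x}\left(e^{\phi/h}u\right)$, cancel the $\left(\frac{\td \phi}{\td x}\right)^2$ contributions, and integrate the cross term by parts. Your flag about the boundary contribution $\left[ e^{2\phi/h}u\,\frac{\td u}{\td x}\right]_{x_-}^{x_+}$ is well taken: the paper's own proof silently discards it in the final step, so the identity as displayed indeed requires $u(x_-)=u(x_+)=0$ (or compact support), which is satisfied by the Dirichlet eigenfunctions to which the lemma is subsequently applied.
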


\subsection{Agmon distance}
Consider the effective potential  $$V^{h,E}_{\mbox{eff}} = h^2 V _{(En,n)} $$
where we recall the previously defined  semi-classical parameter  $h>0$ $$ h^2 = n^{-2} $$ (note that, since we have taken $n \in \mathbb{Z}^-$,  $h = -1/n$) and the energy level $E$ is chosen as in Lemma \ref{npcty}. The Agmon distance between two points $x_1$ and $x_2$ associated to the energy level $E$ and potential $V_{\mbox{eff}}$ is defined as
% d is the distance associated to the Agmon metric... see Holzegel's paper% 
\begin{align}
	d(x_1,x_2) :=  \left| \int_{x_{1}}^{x_{2}}   \sqrt{V^{h,E}_{\mbox{eff}}(x)} \chi_{\left\{ V_{\mbox{eff}}^{h,E} \geq 0 \right\}}\,  \td x \right|. 
\end{align}
Physically the Agmon distance is a measure of distance between two points in the classically forbidden region. Agmon distance is the distance associated with the Agmon metric, $V_{+} \td x^2$ where $V_{+}:=\max(0,V)$. The Agmon distance satisfies the bound \cite{holzegel:2013kna}
\begin{align}
	| \nabla_x d(x,x_2)|^2 \leq \max \left\{V^{h,E}_{\mbox{eff}}(x),0 \right\}.
\end{align}
For a given $E$, using Agmon distance, one can define the distance to the classically allowed region as 
\begin{align}
	d_E(x) :=   \inf_{x_1 \in \left\{  V^{h,E}_{\mbox{eff}} \leq 0 \right\} } d(x_1,x)  .
\end{align}
We recall that $\Omega := [x_{-},x_{+}]$. For $ \epsilon \in (0,1)$, we define 
\begin{align}
	\Omega^{+}_\epsilon(E) := \left\{ x : V^{h,E}_{\mbox{eff}} > \epsilon \right\} \cap \Omega
\end{align}
with its complement in $\Omega$ defined by,
\begin{align}
	\Omega^{-}_\epsilon(E) := \left\{ x : V^{h,E}_{\mbox{eff}} \leq \epsilon \right\} \cap \Omega
\end{align}
We can now state and prove the following exponentially weighted energy estimate.
\begin{figure}[!htb]
	%\centering     %%% not \center
	\includegraphics[scale=1.25]{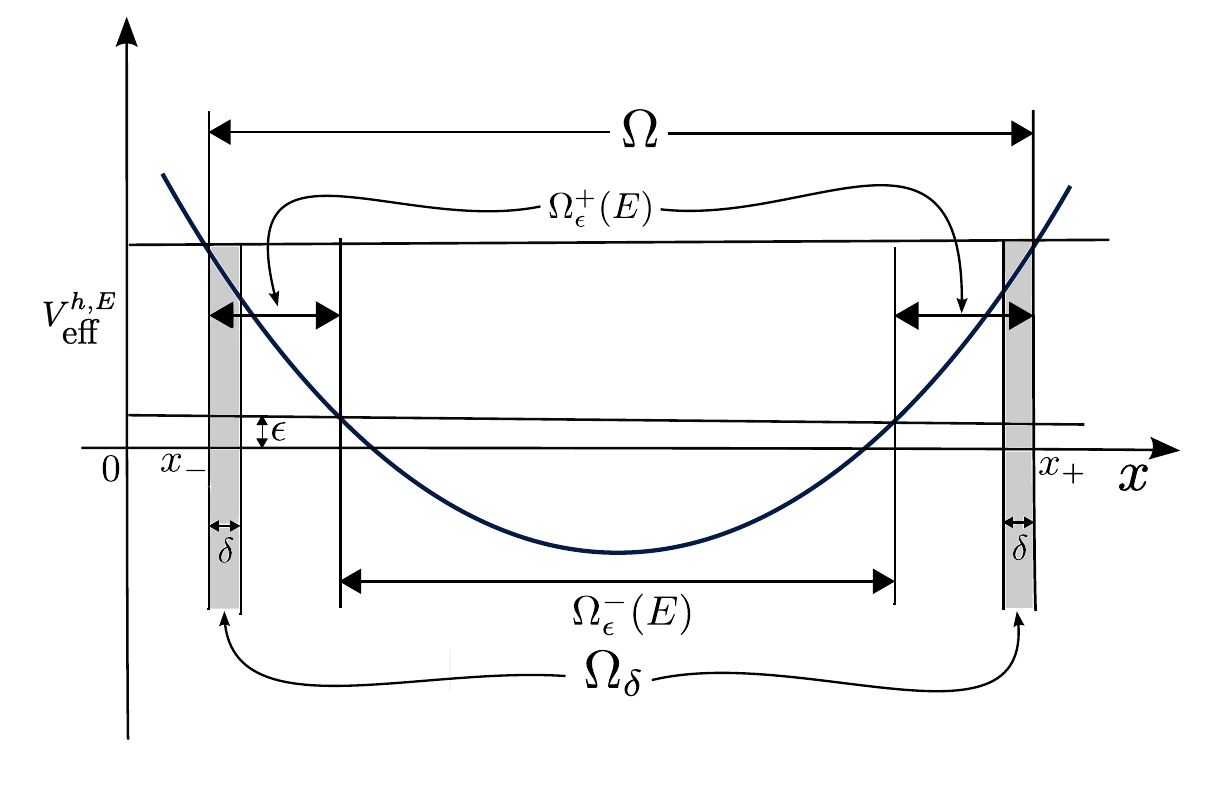}
	\caption{Classical and forbidden regions}
	\label{agmon_est}
\end{figure}

\begin{lemma}[Energy estimate]
	\label{exp_est}
	Let $u$ be a solution to the nonlinear eigenvalue problem \eqref{np_family}. Let $\kappa$ be a eigenvalue satisfying $|\kappa^2 - E^2| \leq \delta$. For $\epsilon \in (0,1)$, define 
	\begin{align}
		\phi_{E,\epsilon} (x) := (1-\epsilon ) d_{E}(x) \quad \mbox{ and } \quad  
		a_E(\epsilon) := \sup_{\Omega^{-}_\epsilon (E)} d_E .
	\end{align} Then for sufficiently small $\epsilon$ and $h$ and sufficiently small $\delta$ (depending on $\epsilon$ and $h$), $u$ satisfies
	
	\begin{align}
		\int_{\Omega} h^2  \left| \frac{\td }{\td x} \left(e^{\phi_{E,\epsilon}/h} u \right) \right|^2 \mathrm{d} x  + \frac{1}{2} \epsilon^2 \int_{\Omega^{+}_\epsilon} e^{2 \phi_{E,\epsilon}/h } |u|^2 \mathrm{d} x  \leq C \left( \kappa^2 + \frac{1}{2} \epsilon \right) e^{2 a_E(\epsilon)/h} \norm{u}^2_{L^2(\Omega)}
	\end{align}
	where the constant $C$ depends only on the parameters of the soliton spacetime and $\Omega$.
\end{lemma}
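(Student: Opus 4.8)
The plan is to run the standard Agmon-type argument of Holzegel--Smulevici \cite{holzegel:2013kna} and Benomio \cite{benomio:2018ivy}, using the exact integration-by-parts identity of Lemma \ref{basic_identity} as the engine. First I would put the eigenfunction equation into semiclassical form. Since $u$ solves \eqref{np_family} with zero eigenvalue, writing $\kappa := \omega/|n|$ (so that $h\omega = \kappa$ with $h = -1/n$) and multiplying the equation by $h^2 = n^{-2}$ gives
\begin{equation}
\nonumber -h^2 \frac{\td^2 u}{\td x^2} + \widetilde{V}\, u = 0, \qquad \widetilde{V} := V_{\sigma_1} + \widetilde{f}\,\kappa - \kappa^2 = h^2 V_{(\omega,n)},
\end{equation}
equivalently $-u'' + h^{-2}\widetilde{V}u = 0$. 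Because $\widetilde{f}$, $\kappa$, $E$ are all bounded above and below on the fixed compact interval $\Omega$ and $|\kappa^2 - E^2|\le\delta$, one has $\widetilde{V} - V^{h,E}_{\mbox{eff}} = \widetilde f(\kappa-E) - (\kappa^2-E^2)$, hence $\|\widetilde{V} - V^{h,E}_{\mbox{eff}}\|_{L^\infty(\Omega)}\le C\delta$. I would also record the eikonal bound $|\partial_x d_E|^2 \le \max(V^{h,E}_{\mbox{eff}},0)$ on $\Omega$, so that $\phi_{E,\epsilon} = (1-\epsilon)d_E$ satisfies $(\partial_x\phi_{E,\epsilon})^2 \le (1-\epsilon)^2\max(V^{h,E}_{\mbox{eff}},0)$ pointwise.

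Next I would apply Lemma \ref{basic_identity} with $W = \widetilde{V}$ and $\phi = \phi_{E,\epsilon}$. (Strictly, $\phi_{E,\epsilon}$ is only Lipschitz --- smooth away from the simple turning points of $V^{h,E}_{\mbox{eff}}$, near which $d_E\sim|x-x_{\mathrm{tp}}|^{3/2}$ is $C^1$ --- so the identity is obtained by a routine mollification argument; only $\phi'$ enters the identity, so no distributional $\phi''$ appears, and the boundary terms vanish since $u(x_{-}) = u(x_{+}) = 0$.) Since $-u'' + h^{-2}\widetilde{V}u = 0$, the right-hand side of the identity is identically zero, leaving
\begin{equation}
\nonumber \int_{\Omega} \left| \frac{\td}{\td x}\left( e^{\phi_{E,\epsilon}/h} u \right) \right|^2 \td x \; = \; h^{-2}\int_{\Omega} \left( (\partial_x\phi_{E,\epsilon})^2 - \widetilde{V} \right) e^{2\phi_{E,\epsilon}/h} |u|^2 \,\td x .
\end{equation}

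I would then split the integral on the right over $\Omega = \Omega^{+}_\epsilon(E)\cup\Omega^{-}_\epsilon(E)$. On $\Omega^{+}_\epsilon(E)$, where $V^{h,E}_{\mbox{eff}}>\epsilon$, the eikonal bound and the $O(\delta)$ comparison give
\begin{equation}
\nonumber (\partial_x\phi_{E,\epsilon})^2 - \widetilde{V} \le (1-\epsilon)^2 V^{h,E}_{\mbox{eff}} - V^{h,E}_{\mbox{eff}} + C\delta = -(2\epsilon-\epsilon^2)V^{h,E}_{\mbox{eff}} + C\delta \le -(2\epsilon-\epsilon^2)\epsilon + C\delta ,
\end{equation}
which for $\epsilon$ small and then $\delta = \delta(\epsilon,h)$ small is $\le -\tfrac12\epsilon^2$; the corresponding contribution is nonpositive and is moved to the left-hand side, producing the term $\tfrac12\epsilon^2 h^{-2}\int_{\Omega^{+}_\epsilon}e^{2\phi_{E,\epsilon}/h}|u|^2$. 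On $\Omega^{-}_\epsilon(E)$ one has $(\partial_x\phi_{E,\epsilon})^2\le(1-\epsilon)^2\epsilon\le\epsilon$ and, using $V_{\sigma_1}\ge 0$ together with the two-sided bound on $\kappa$, $-\widetilde{V}\le\kappa^2 + |\widetilde f|\kappa \le C\kappa^2$, so the integrand is $\lesssim \kappa^2+\epsilon$; moreover $\phi_{E,\epsilon}\le d_E\le a_E(\epsilon)$ there, so $e^{2\phi_{E,\epsilon}/h}\le e^{2a_E(\epsilon)/h}$, and this part is bounded by $C(\kappa^2+\epsilon)h^{-2}e^{2a_E(\epsilon)/h}\|u\|_{L^2(\Omega)}^2$. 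Collecting the two pieces and multiplying through by $h^2$ yields the asserted inequality (after absorbing numerical factors into $C$).

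The main obstacle is the bookkeeping that makes the $\Omega^{+}_\epsilon$ term retain its favourable sign: one must fix the smallness in the right order --- first $\epsilon$ small enough that $-(2\epsilon-\epsilon^2)\epsilon\le-\epsilon^2$, then $\delta$ small (depending on $\epsilon$, and on $h$ through the requirement, via Theorem \ref{omega_bounds} and Weyl's law Theorem \ref{weyls_law} together with Lemma \ref{bound_omega}, that eigenvalues $\kappa$ with $|\kappa^2-E^2|\le\delta$ exist at all) so that $C\delta\le\tfrac12\epsilon^2$ and the comparison $\widetilde V = V^{h,E}_{\mbox{eff}}+O(\delta)$ is tight enough on the classically forbidden set identified in Lemma \ref{npcty}. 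The only other non-cosmetic point is justifying Lemma \ref{basic_identity} for the Lipschitz weight $\phi_{E,\epsilon}$; everything else is routine.
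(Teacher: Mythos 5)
Your proposal is correct and follows essentially the same argument as the paper: apply the weighted identity of Lemma \ref{basic_identity} with $W$ the ($\kappa$-dependent) semiclassical potential and $\phi=\phi_{E,\epsilon}$, note the right-hand side vanishes for an eigenfunction, split $\Omega$ into $\Omega^{\pm}_\epsilon(E)$, use the Agmon eikonal bound plus the $O(\delta)$ closeness of the $\kappa$- and $E$-potentials to keep the $\Omega^{+}_\epsilon$ term with coefficient $\tfrac12\epsilon^2$, and bound the $\Omega^{-}_\epsilon$ term via $\phi_{E,\epsilon}\le a_E(\epsilon)$ and $-V^{h,\kappa}_{\mathrm{eff}}\lesssim\kappa^2$. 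Your added remarks (the explicit $C\delta$ comparison in place of the paper's continuity-in-$\kappa$ step, and the mollification justification for the merely Lipschitz weight) are refinements of, not departures from, the paper's proof.
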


\begin{proof}
	We apply Lemma \ref{basic_identity} to $u$ with the following identifications,
	\begin{align}
		W = V^{h,\kappa}_{\mbox{eff}}\,,  \quad \phi = \phi_{E,\epsilon}
	\end{align}
	Since $u$ is a solution to the eigenvalue problem, the right hand side vanishes which gives, 
	\begin{align}	
		\int_\Omega \left( \left| \frac{\td }{\td x} \left( e^{\phi_{E,\epsilon}/ h} u\right) \right|^2 + h^{-2} \left(V^{h,\kappa}_{\mbox{eff}} - \left(\frac{\td \phi_{E,\epsilon}}{\td x} \right)^2  \right) e^{2 \phi_{E,\epsilon} /h} |u|^2 \right) \td x = 0. \end{align} This can be rewritten as,
	\begin{align}
		\int_\Omega  h^2 \left| \frac{\td }{\td x} \left( e^{\phi_{E,\epsilon}/ h} u\right) \right|^2 \td x + \int_{\Omega^+_\epsilon(E)} \left(V^{h,\kappa}_{\mbox{eff}} - \left(\frac{\td \phi_{E,\epsilon}}{\td x} \right)^2  \right) e^{2 \phi_{E,\epsilon} /h} |u|^2  \td x \nonumber \\ = \int_{\Omega^{-}_\epsilon(E)} \left( - V^{h,\kappa}_{\mbox{eff}} + \left(\frac{\td \phi_{E,\epsilon}}{\td x} \right)^2  \right) e^{2 \phi_{E,\epsilon} /h} |u|^2  \td x.
	\end{align}	 
	We make the following observations :  
	\begin{enumerate}
		\item By definition, we have $ \left. \phi_{E,\epsilon} \right|_{\Omega^{-}_\epsilon(E)} \leq  a_E(\epsilon)$.
		\item $\norm{u}^2_{L^2(\Omega^{-}_\epsilon (E))} \leq \norm{u}^2_{L^2(\Omega)} $.
		\item We have in ${\Omega^{-}_\epsilon (E)}$, 
		\begin{align}
			\left(\frac{\td \phi_{E,\epsilon}}{\td x}\right)^2  & = (1-\epsilon)^2 |\nabla_x d_{E}|^2 \nonumber \\ & \leq (1-\epsilon)^2 \epsilon \quad (\mbox{ by definition }) \nonumber \\ 
			&\leq (1-\epsilon) \epsilon \quad (\mbox{ since } \epsilon <1) 
		\end{align}
		We also note that $ - V^{h,\kappa}_{\mbox{eff}}  \leq \kappa^2 $ which finally gives, 
		\begin{align}
			- V^{h,\kappa}_{\mbox{eff}} + \left(\frac{\td \phi_{E,\epsilon}}{\td r}\right)^2 \leq \kappa^2 + \epsilon(1-\epsilon)
		\end{align}
		We can thus estimate the integral on the right hand side as, 
		\begin{align}
			\begin{aligned}
				\int_{\Omega^{-}_\epsilon(E)} \left( - V^{h,\kappa}_{\mbox{eff}} + \left(\frac{\td \phi_{E,\epsilon}}{\td x} \right)^2  \right) e^{2 \phi_{E,\epsilon} /h} |u|^2  \td x  &
				\leq (\kappa^2 + \epsilon(1-\epsilon)) \int_{\Omega^{-}_\epsilon(E)} e^{2 \phi_{E,\epsilon} /h} |u|^2  \td x \\ & \leq (\kappa^2 + \epsilon(1-\epsilon))e^{2 a_E(\epsilon)/h} \norm{u}^2_{L^2(\Omega)} \\ & \leq \left( \kappa^2 + \frac{1}{2} \epsilon \right) e^{2 a_E(\epsilon)/h} \norm{u}^2_{L^2(\Omega)} \quad \mbox{ for } \epsilon < \frac{1}{2}
			\end{aligned}
		\end{align}

		\item Consider the region $ \Omega^{+}_\epsilon (E)$. The potential $V^{h,\kappa}_{\mbox{eff}}$ is continuous in $\kappa$. Hence given a $\delta^{'} >0$ one can find a $\delta$ such that $|\kappa^2 - E^2| \leq \delta \implies |V^{h,\kappa}_{\mbox{eff}} - V^{h,E}_{\mbox{eff}}| < \delta^{'} $. Hence we have,
		\begin{align}
			V^{h,E}_{\mbox{eff}} - \delta^{'}	< V^{h,\kappa}_{\mbox{eff}} < V^{h,E}_{\mbox{eff}} + \delta^{'} 
		\end{align}
		With this we can estimate the second integrand on the left hand side as,		
		\begin{align}
			\begin{aligned}
				V^{h,\kappa}_{\mbox{eff}} - \left(\frac{\td \phi_{E,\epsilon}}{\td x} \right)^2 &\geq V^{h,E}_{\mbox{eff}} - \delta^{'} - (1-\epsilon)^2 V^{h,E}_{\mbox{eff}} \\ & \geq V^{h,E}_{\mbox{eff}} - \delta^{'} - (1-\epsilon) V^{h,E}_{\mbox{eff}} \quad (\mbox{ since } \epsilon < 1 ) \\ &\geq \epsilon V^{h,E}_{\mbox{eff}}  - \delta^{'}  \\ &\geq  \epsilon^2 - \delta^{'} \quad ( \mbox{ from the definition of } \Omega^{+}_\epsilon (E) \,) \\ 
				& \geq \frac{\epsilon^2}{2} \quad \mbox{ for any } \delta^{'} < \frac{\epsilon^2}{2}
			\end{aligned}
		\end{align}	  
	\end{enumerate}
	Hence for $\di \epsilon \leq \frac{1}{2}$ there exists a sufficiently small $\delta^{'}$ such that estimate in the theorem holds.  	
\end{proof}

Quasimodes, as explained above,  are functions that solve the wave equation everywhere except in the cut-off region. Hence, we require estimates for $u$ in the cut-off region to approximate this deviation and determine how the resulting error depends on the frequency parameter $n$. We first define the cut-off region $ \Omega_\delta$ as
\begin{align}
	\Omega_\delta := \{ x \in \Omega : \mbox{dist}(x,\partial \Omega) \leq \delta \} . 
\end{align}
The following theorem estimates $u$ on $\Omega_\delta$.

\begin{theorem}
	\label{exp_estimate}
	From Theorem \ref{omega_bounds}, we have that for $\mathcal{E},E$ and sufficiently small $\delta^{'}$, there exist eigenvalues $\kappa_n := \di \frac{\omega^2_{n}}{n^2}$ and corresponding eigenfunctions $u_n$ to the nonlinear eigenvalue problem for large enough $|n|$ such that $\omega_n >0 $ and
	\begin{align}
		\mathcal{C} \leq \kappa^2_n \leq E^2 +\delta^{'}
	\end{align}
	where the constant $\mathcal{C}$ is independent of $n$. Then for any sufficiently small $\delta$,  there holds
	\begin{align}
		\int_{\Omega_{\delta}}  \left( \left|\frac{\partial u}{\partial x} \right|^2 + |u|^2 \right) \mathrm{d} x  \leq C e^{-C |n|} \norm{u}_{L^2(\Omega)}
	\end{align}
	for a constant $C$ independent of $n$.
\end{theorem}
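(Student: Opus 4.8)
\noindent The plan is to read the estimate off from the exponentially weighted energy inequality of Lemma~\ref{exp_est}, applied to the eigenfunction $u=u_n$ with eigenvalue $\kappa=\kappa_n$ furnished by Theorem~\ref{omega_bounds}. Fixing a small $\epsilon\in(0,1/2)$ and a reference energy $E$ compatible with $\kappa_n$ (so that $|\kappa_n^2-E^2|$ is below the threshold required in Lemma~\ref{exp_est}), that lemma gives
\begin{align}
\int_{\Omega} h^2\left|\frac{\td}{\td x}\!\left(e^{\phi_{E,\epsilon}/h}u\right)\right|^2\td x+\tfrac12\epsilon^2\int_{\Omega^{+}_\epsilon(E)}e^{2\phi_{E,\epsilon}/h}|u|^2\,\td x\le C\!\left(\kappa_n^2+\tfrac12\epsilon\right)e^{2a_E(\epsilon)/h}\norm{u}^2_{L^2(\Omega)},
\end{align}
where $h=1/|n|$ and the prefactor $\kappa_n^2+\tfrac12\epsilon$ is bounded independently of $n$ by Theorem~\ref{omega_bounds}. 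Everything then reduces to comparing the weight $e^{2\phi_{E,\epsilon}/h}$ on the cut-off region against the weight $e^{2a_E(\epsilon)/h}$ on the right-hand side.

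The crucial input is geometric: $\Omega_\delta$ lies deep inside the classically forbidden region. By Proposition~\ref{propnonlin} and Lemma~\ref{npcty} the classically allowed set $\{V^{h,E}_{\mathrm{eff}}\le 0\}$ is the trapping well, which is compactly contained in the interior of $\Omega$ and is separated from $\partial\Omega$ by a collar on which $V^{h,E}_{\mathrm{eff}}$ is bounded below by a positive constant. Hence, for $\delta$ small, $\Omega_\delta\subset\Omega^{+}_\epsilon(E)$ and the Agmon distance to the allowed region satisfies $d_\delta:=\inf_{\Omega_\delta}d_E>0$, with $d_\delta$ depending only on $\delta$, $E$ and the soliton parameters and \emph{not} on $n$. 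On the other hand $a_E(\epsilon)=\sup_{\Omega^{-}_\epsilon(E)}d_E\to 0$ as $\epsilon\to 0$, since $\Omega^{-}_\epsilon(E)$ shrinks to $\{V^{h,E}_{\mathrm{eff}}\le 0\}$, where $d_E\equiv 0$. I would therefore fix $\epsilon$ once and for all so small that $c_0:=(1-\epsilon)d_\delta-a_E(\epsilon)>0$ (and $\epsilon<1/2$).

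With this in hand the $L^2$ bound is immediate: on $\Omega_\delta$ one has $\phi_{E,\epsilon}=(1-\epsilon)d_E\ge(1-\epsilon)d_\delta$, so discarding the first (nonnegative) term on the left and restricting the second integral to $\Omega_\delta$ gives $\int_{\Omega_\delta}|u|^2\,\td x\le C' e^{-2c_0|n|}\norm{u}^2_{L^2(\Omega)}$ with $C'$ independent of $n$. For the derivative term I would use $\partial_x u=e^{-\phi_{E,\epsilon}/h}\partial_x(e^{\phi_{E,\epsilon}/h}u)-h^{-1}(\partial_x\phi_{E,\epsilon})u$ together with $(a+b)^2\le 2a^2+2b^2$: the first piece is controlled by the first term on the left of the displayed estimate (times $e^{-2(1-\epsilon)d_\delta/h}$ on $\Omega_\delta$, at the cost of a factor $h^{-2}=|n|^2$), and the second piece by the $L^2$ bound just obtained together with $(\partial_x\phi_{E,\epsilon})^2\le(1-\epsilon)^2|\nabla_x d_E|^2\le V^{h,E}_{\mathrm{eff}}$, which is $O(1)$ in $n$ on $\Omega$ (again losing a factor $|n|^2$). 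Both polynomial losses are absorbed into the exponential at the price of shrinking $c_0$ to any $C<c_0$, and combining yields $\int_{\Omega_\delta}(|\partial_x u|^2+|u|^2)\,\td x\le C e^{-C|n|}\norm{u}^2_{L^2(\Omega)}$, which is the assertion (for $L^2$-normalized eigenfunctions this coincides with the stated right-hand side).

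The main obstacle is the geometric step: one has to pin down, from the explicit rational form of the nonlinear potential in Proposition~\ref{propnonlin} and Lemma~\ref{npcty}, that the trapping well for the relevant eigenvalue is compactly contained in $\Omega$ so that $d_\delta>0$, and then keep the hierarchy of small parameters consistent — first $\epsilon$ (small enough that $c_0>0$ and $\epsilon<1/2$), then the closeness threshold of Lemma~\ref{exp_est} (which depends on $\epsilon$ and $h$), then $\delta'$, and finally the cut-off width $\delta$ — so that the hypotheses of Lemmas~\ref{exp_est}, \ref{omega_bounds} and~\ref{npcty} all hold simultaneously and the final rate $C$ is strictly positive and independent of $n$.
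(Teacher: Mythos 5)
Your proposal is correct and follows essentially the same route as the paper: apply the weighted Agmon estimate of Lemma~\ref{exp_est} to $u_n$ with $\kappa_n$ from Theorem~\ref{omega_bounds}, use $\Omega_\delta\subset\Omega^{+}_\epsilon(E)$ together with a uniform positive lower bound on $\phi_{E,\epsilon}$ over $\Omega_\delta$ versus $a_E(\epsilon)\to0$ as $\epsilon\to0$, and absorb the $h^{-2}$ losses into the exponential. Your handling of the derivative term via the decomposition $\partial_x u=e^{-\phi_{E,\epsilon}/h}\partial_x\!\left(e^{\phi_{E,\epsilon}/h}u\right)-h^{-1}(\partial_x\phi_{E,\epsilon})u$ is only a cosmetic variant of the paper's expansion-plus-Young's-inequality step, and your explicit tracking of the rate $c_0=(1-\epsilon)d_\delta-a_E(\epsilon)$ matches the paper's choice $a_E(\epsilon)\le c/2$.
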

\begin{proof}
	From Lemma \ref{exp_est}, we have,
	\begin{align}
		\int_{\Omega} h^2  \left| \frac{\td }{\td x} \left(e^{\phi_{E,\epsilon}/h} u \right) \right|^2 \td x  + \frac{1}{2} \epsilon^2 \int_{\Omega^{+}_\epsilon} e^{2 \phi_{E,\epsilon}/h } |u|^2 \td x  \leq C \left( \kappa^2 + \frac{1}{2} \epsilon \right) e^{2 a_E(\epsilon)/h} \norm{u}^2_{L^2(\Omega)}
	\end{align}
	
	Both terms in the left hand side are positive, so the inequality applies to each, that is
	
	\begin{align}
		\int_{\Omega} h^2  \left| \frac{\td }{\td x} \left(e^{\phi_{E,\epsilon}/h} u \right) \right|^2 \td x   \leq C \left( \kappa^2 + \frac{1}{2} \epsilon \right) e^{2 a_E(\epsilon)/h} \norm{u}^2_{L^2(\Omega)} ,\label{eq1}\\
		\frac{1}{2} \epsilon^2 \int_{\Omega^{+}_\epsilon} e^{2 \phi_{E,\epsilon}/h } |u|^2 \td x  \leq C \left( \kappa^2 + \frac{1}{2} \epsilon \right) e^{2 a_E(\epsilon)/h} \norm{u}^2_{L^2(\Omega)} .
		\label{eq2}
	\end{align} Since $\Omega_\delta \subset \Omega^+_\epsilon(E)$,   \eqref{eq2} becomes
	\begin{align}
		\int_{\Omega_\delta} e^{2 \phi_{E,\epsilon}/h } |u|^2 \td x  \leq \frac{C}{\epsilon^2} \left( \kappa^2 + \frac{1}{2} \epsilon \right) e^{2 a_E(\epsilon)/h} \norm{u}^2_{L^2(\Omega)} 
	\end{align}
	With $\di \frac{\epsilon}{2} < \frac{1}{2}$ and by the definition of $\phi_{E,\epsilon}$, we see that there is a uniform constant $c$ such that, $\phi_{E,\epsilon} \geq c$ for any $x \in \Omega_{\delta}$ and $|\kappa^2 - E^2| \leq \delta^{'}$.
	By definition, we have
	\begin{align}
		a_E(\epsilon) &:= \sup_{\Omega^{-}_\epsilon (E)} \quad \inf_{x_1 \in \left\{  V^{h,E}_{\mbox{eff}} \leq 0 \right\} } d(x_1,x)  .
	\end{align} For $x \in \Omega^{-}_\epsilon(E)$, 
	\begin{align}
		\inf_{x_1 \in \left\{  V^{h,E}_{\mbox{eff}} \leq 0 \right\} } d(x_1,x)  \leq \sqrt{\epsilon} \Delta
	\end{align}
	where $\Delta = \underset{x \in \left\{  V^{h,E}_{\mbox{eff}} \leq 0 \right\}}{\max} {d(x^{\pm}_E,x)}$. Hence $ a_{E}(\epsilon) \to 0 $ as $\epsilon \to 0$ and there exists $\epsilon$ small enough such that $a_E(\epsilon) \leq c/2$. We note that $a_E(\epsilon) \to 0$ as $\epsilon \to 0$ and $\epsilon \to 0$ independently of $h \to 0$. Putting these together, we have
	\begin{align}
		e^{ 2c/h } \int_{\Omega_\delta}  |u|^2 \td x  & \leq \frac{C}{\epsilon^2} \left( \kappa^2 + \frac{1}{2} \epsilon \right) e^{2 a_E(\epsilon)/h} \norm{u}^2_{L^2(\Omega)}, \\ 
		\int_{\Omega_\delta}  |u|^2 \td x  &  \leq e^{- c/h } \frac{C}{\epsilon^2} \left( \kappa^2 + \frac{1}{2} \epsilon \right) \norm{u}^2_{L^2(\Omega)} .
	\end{align}
	There exists a constant $C$ such that, 
	\begin{align}
		\int_{\Omega_\delta} |u|^2 \leq C h^{-2} e^{-C/h} \norm{u}^2_{L^2(\Omega)} 
		\label{eq3}
	\end{align}
	Since $\epsilon \to 0$ uniformly in $h$, we can absorb $h^{-2}$ in $C$ giving  
	\begin{align}
		\int_{\Omega_\delta} |u|^2 \leq C  e^{-C/h} \norm{u}^2_{L^2(\Omega)} .
		\label{eq4}
	\end{align}
	Now, similarly, the left hand side of \eqref{eq1} becomes
	\begin{align}
	\begin{aligned} \nonumber
		\int_{\Omega_\delta} h^2  \left| \frac{\td }{\td x} \left(e^{\phi_{E,\epsilon}/h} u \right) \right|^2 \td x  &=  \int_{\Omega_\delta} h^2  \left( e^{\phi_{E,\epsilon}/h} \frac{u}{h} \frac{\td \phi_{E,\epsilon}}{\td x}  + e^{\phi_{E,\epsilon}/h} \frac{\td u}{\td x}  \right)^2 \td x \\ 
		&  = \int_{\Omega_\delta} h^2 e^{2\phi_{E,\epsilon}/h}  \left(  \frac{u}{h} \frac{\td \phi_{E,\epsilon}}{\td x}  + \frac{\td u}{\td x}  \right)^2 \td x \\  & = \int_{\Omega_\delta} h^2 e^{2\phi_{E,\epsilon}/h}  \left\{  \frac{u^2}{h^2} \left(\frac{\td \phi_{E,\epsilon}}{\td x}\right)^2  + \left(\frac{\td u}{\td x}\right)^2  + 2 \frac{u}{h} \left(\frac{\td \phi_{E,\epsilon}}{\td x}\right) \left(\frac{\td u}{\td x}\right)  \right\} \td x %\\  
		%& \geq \int_{\Omega_\delta} h^2 e^{2\phi_{E,\epsilon}/h}  \left\{ \left(\frac{\td u}{\td r}\right)^2  + 2 \frac{u}{h} \left(\frac{\td \phi_{E,\epsilon}}{\td r}\right) \left(\frac{\td u}{\td r}\right)  \right\} \td r \quad (\mbox{neglecting the first term})
		\end{aligned}
	\end{align}
	Discarding the first term which is positive, we have
	\begin{align}
		\int_{\Omega_\delta} h^2 e^{2\phi_{E,\epsilon}/h}  \left\{ \left(\frac{\td u}{\td x}\right)^2  + 2 \frac{u}{h} \left(\frac{\td \phi_{E,\epsilon}}{\td x}\right) \left(\frac{\td u}{\td x}\right)  \right\} \td x \leq C \left( \kappa^2 + \frac{1}{2} \epsilon \right) e^{2 a_E(\epsilon)/h} \norm{u}^2_{L^2(\Omega)} 
	\end{align}
	Applying Young's inequality to the second term we get, 
	\begin{align}
		\begin{aligned}
			\int_{\Omega_\delta} h^2 e^{2\phi_{E,\epsilon}/h}  &\left\{ \left(\frac{\td u}{\td x}\right)^2 + 2 \frac{u}{h} \left(\frac{\td \phi_{E,\epsilon}}{\td x}\right) \left(\frac{\td u}{\td x}\right)  \right\} \td x   \\ & \geq \int_{\Omega_\delta} h^2 e^{2\phi_{E,\epsilon}/h}  \left\{ \left(\frac{\td u}{\td x}\right)^2 - \frac{2u^2}{h^2} \left(\frac{\td \phi_{E,\epsilon}}{\td x}\right)^2 - \frac{1}{2} \left(\frac{\td u}{\td x}\right)^2  \right\} \td x  \\ & = \int_{\Omega_\delta} h^2 e^{2\phi_{E,\epsilon}/h}  \left\{ \frac{1}{2} \left(\frac{\td u}{\td x}\right)^2 - \frac{2u^2}{h^2} \left(\frac{\td \phi_{E,\epsilon}}{\td x}\right)^2  \right\} \td x
		\end{aligned}
	\end{align}
	Hence we have,
	\begin{align}
		\begin{aligned}
			& \int_{\Omega_\delta} h^2 e^{2\phi_{E,\epsilon}/h}  \left\{ \frac{1}{2} \left(\frac{\td u}{\td x}\right)^2 - \frac{2u^2}{h^2} \left(\frac{\td \phi_{E,\epsilon}}{\td x}\right)^2  \right\} \td x  \leq C \left( \kappa^2 + \frac{1}{2} \epsilon \right) e^{2 a_E(\epsilon)/h} \norm{u}^2_{L^2(\Omega)} \\ & \int_{\Omega_\delta} h^2 e^{2\phi_{E,\epsilon}/h} \frac{1}{2} \left(  \left(\frac{\td u}{\td x}\right)^2  \right) \td x  \leq C \left( \kappa^2 + \frac{1}{2} \epsilon \right) e^{2 a_E(\epsilon)/h} \norm{u}^2_{L^2(\Omega)} + \int_{\Omega_\delta} 2u^2 \left(\frac{\td \phi_{E,\epsilon}}{\td x}\right)^2
		\end{aligned}
	\end{align}
	The second term on the right hand side can be absorbed by redefining the constant $C$. Hence, we have
	\begin{align}
		\int_{\Omega_\delta} h^2 e^{2\phi_{E,\epsilon}/h}   \left(\frac{\td u}{\td x}\right)^2  \td x & \leq C \left( \kappa^2 + \frac{1}{2} \epsilon \right) e^{2 a_E(\epsilon)/h} \norm{u}^2_{L^2(\Omega)} 
	\end{align}
	Similar to  \eqref{eq2}, we use bounds on  $a_E(\epsilon)$ and $\phi_{E,\epsilon}$ to get the following,   
	\begin{align}
		\int_{\Omega_\delta} \left(\frac{\td u}{\td x}\right)^2  \td x & \leq C h^{-2} e^{-C/h} \norm{u}^2_{L^2(\Omega)} 
	\end{align}
	Abosrbing the $h^{-2}$ in $C$, we get,
	\begin{align}
		\int_{\Omega_\delta} \left(\frac{\td u}{\td x}\right)^2  \td x & \leq C e^{-C/h} \norm{u}^2_{L^2(\Omega)}
		\label{eq5} 
	\end{align}
	Combining \eqref{eq4} and \eqref{eq5} above proves the theorem.
\end{proof}

\subsection{Quasimodes and an upper bound on the error}

Quasimodes are defined as functions $\Psi_n(t,r,\theta,\phi,\widetilde{\psi})  : {\mathcal{D}} \to \mathbb{C} $ defined by,
\begin{align}
	\Psi_n(t,r,\theta,\phi,\widetilde{\psi}) := \chi(r) e^{- i \omega_n t} e^{i n \widetilde{\psi} }  R(r) Y(\theta,\phi)
	\label{sep_fn}
\end{align}
where $\chi : \mathcal{D} \to \mathbb{R}$ is a smooth cut-off function defined by the radial function
\begin{align}
	\chi(r) = \begin{cases}
		1 \quad\quad \text{if $r \in\Omega\setminus\Omega_{\delta}$}   \\
		0 \quad\quad \text{if $r \notin\Omega$}
	\end{cases} 
\end{align}
We recall the relation between $u(r)$ and $R(r)$ and the coordinates $r,w$ and $x$ here for clarity.
\begin{align}
	R(r) = \frac{u}{r \sqrt{\hat{b}(r)} },\, \quad w = \int_{r_0}^{w} \frac{\hat{b}(s)}{s W(s)} \td s \,\, \mbox{ and }\,\, w = jx 
\end{align}
These quasimodes are clearly approximate solutions, defined to be extensions of the solutions to the wave equation on  $\Omega$ to the whole spacetime $\mathcal{D}$. They fail to solve the wave equation because of the smooth extension in the cut off region outside o which they are trivial solutions (because they vanish).  The error,  i.e. $\Box_g\Psi_n$,  is supported on  $\Omega_\delta$. The following lemma estimates the error,  which is exponentially small as $|n| \to \infty$. 

\begin{remark}
In the following sections, we will take $\Omega_\delta$ and $\Omega$ to refer to the spacetime regions : $\Omega_\delta = \Omega_{\delta} \times [0, \infty) \times (0, \pi) \times
[0, 2 \pi) \times [0, 2 \pi) \text{ and } \Omega = \Omega \times [0, \infty) \times (0,\pi) \times 
[0, 2 \pi) \times [0, 2 \pi)$ respectively. Here $[0,\infty)$ is the time domain.
\end{remark} 

\begin{lemma}
	\label{quasimode_error}
	Consider quasimodes which satisfy 
	\begin{align}
		\Box_g \Psi_n = \mathrm{err}(\Psi_n)
	\end{align}
	where $\mathrm{err}(\Psi_n)$ is the error. Then for sufficiently large $|n|, n < 0$, we have,
	\begin{align}
		\norm{\Box_g \Psi_n}_{H^k(\Sigma_t)} \leq C_k e^{-C_k |n|} \norm{\Psi_n}_{L^2(\Sigma_0)}
	\end{align}		
\end{lemma}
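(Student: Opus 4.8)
The plan is to localize the error $\Box_g\Psi_n$ to the cut-off region $\Omega_\delta$ using the Leibniz rule for $\Box_g$, and then to estimate it by combining the exponential bound of Theorem~\ref{exp_estimate} with a routine elliptic bootstrap, taking care that the powers of $|n|$ that accumulate along the way remain polynomial and are therefore absorbed by the exponential gain.

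First I would note that, by construction, the separated function $\Phi_n := e^{-i\omega_n t}e^{in\widetilde{\psi}}R_n(r)Y(\theta,\phi)$, with $R_n = u_n/(r\sqrt{\hat{b}(r)})$ and $u_n$ the eigenfunction supplied by Theorem~\ref{omega_bounds}, is an \emph{exact} solution of $\Box_g\Phi_n = 0$ on all of $\Omega$: the harmonic $Y$ solves the charged spherical harmonic equation exactly, and $u_n$ (hence $R_n$) solves the radial equation \eqref{rad_eq} exactly on $[x_-,x_+]\supset\Omega_\delta$. Since $\chi=\chi(r)$, the product rule for $\Box_g$ gives
\[
\Box_g\Psi_n = \chi\,\Box_g\Phi_n + 2 W(r)\chi'(r)\,\partial_r\Phi_n + \Big(\tfrac{1}{r^3}\partial_r\big(r^3 W(r)\chi'(r)\big)\Big)\Phi_n .
\]
The first term vanishes identically ($\Box_g\Phi_n=0$ on $\Omega$ and $\Phi_n$ is irrelevant off $\Omega$), and the last two are supported in $\{\chi'\neq 0\}\cup\{\chi''\neq 0\}\subset\Omega_\delta$; explicitly $\mathrm{err}(\Psi_n)=\Box_g\Psi_n$ is a fixed linear combination of $e^{-i\omega_n t}e^{in\widetilde{\psi}}Y\,\chi' R_n'$ and $e^{-i\omega_n t}e^{in\widetilde{\psi}}Y\,\big(W\chi''+(\tfrac{3W}{r}+W')\chi'\big)R_n$, supported in $\Omega_\delta$.

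It then remains to bound the intrinsic norm $\norm{\mathrm{err}(\Psi_n)}_{H^k(\Sigma_t)}$. The region $\Omega_\delta$ stays a fixed distance from $r=r_0$, so on it the changes of variables $r\leftrightarrow w\leftrightarrow x$ and the weight $1/(r\sqrt{\hat{b}})$ relating $R_n$ to $u_n$ have all derivatives bounded by constants independent of $n$; thus it suffices to control $u_n$ and its $x$-derivatives on $\Omega_\delta$. From the eigenvalue ODE $u_n'' = (n^2 V_{\sigma_1}-n\widetilde{f}\,\omega_n-\omega_n^2)u_n$ (equivalently with the full effective potential), whose coefficients and all their derivatives are bounded on $\Omega_\delta$ by $n$-independent constants times $n^2$ — here $|\omega_n|\sim|n|$ by Theorem~\ref{omega_bounds} — repeated differentiation yields $\norm{u_n}_{H^{j}(\Omega_\delta)}\le C_j |n|^{\,j}\big(\norm{u_n}_{L^2(\Omega_\delta)}+\norm{u_n'}_{L^2(\Omega_\delta)}\big)$ for every $j$, and Theorem~\ref{exp_estimate} bounds the bracket by $C e^{-C|n|}\norm{u_n}_{L^2(\Omega)}$ (after shrinking $C$). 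For the angular factor the chosen harmonic has $\ell\sim|n|$ (since $\mu=8n^2+6n$), so $\norm{Y}_{H^{j}(S^2)}\le C_j|n|^{\,j}$ after normalization, and since $|e^{-i\omega_n t}|=1$ the quantity $\norm{\mathrm{err}(\Psi_n)}_{H^k(\Sigma_t)}$ is independent of $t$ (any transverse derivatives would only contribute further factors $|\omega_n|\sim|n|$). Expanding $\partial^\alpha\mathrm{err}(\Psi_n)$ by Leibniz, using that the $L^2(\Sigma_t)$ integral factors into a radial integral over $\Omega_\delta$ (with nondegenerate Jacobian) times a normalized angular integral over $S^2$, bounding lower-order factors pointwise by one-dimensional Sobolev embedding, and absorbing the finitely many polynomial powers of $|n|$ via $|n|^{N}e^{-C|n|}\le C_k e^{-C_k|n|}$ with $0<C_k<C$, one obtains $\norm{\Box_g\Psi_n}_{H^k(\Sigma_t)}\le C_k e^{-C_k|n|}\norm{u_n}_{L^2(\Omega)}$.

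Finally I would replace $\norm{u_n}_{L^2(\Omega)}$ by $\norm{\Psi_n}_{L^2(\Sigma_0)}$. Since $\chi\equiv 1$ on $\Omega\setminus\Omega_\delta$ and, by Theorem~\ref{exp_estimate}, $\norm{u_n}_{L^2(\Omega_\delta)}^2\le C e^{-C|n|}\norm{u_n}_{L^2(\Omega)}^2$, for $|n|$ large we get $\norm{u_n}_{L^2(\Omega)}\le 2\norm{u_n}_{L^2(\Omega\setminus\Omega_\delta)}$; then using $R_n=u_n/(r\sqrt{\hat{b}})$, the nondegenerate change of variables to $x$, the volume form $\td\mathrm{Vol}_h$, and normalization of $Y$, one has $\norm{\Psi_n}_{L^2(\Sigma_0)}\ge c\,\norm{u_n}_{L^2(\Omega\setminus\Omega_\delta)}$ with $c$ independent of $n$, which completes the estimate. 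I expect the only genuinely delicate step to be the bootstrap from the $H^1$-type statement of Theorem~\ref{exp_estimate} on $\Omega_\delta$ up to the $H^k$ bound while verifying that the accumulated $n$-dependence stays polynomial; the localization of the error and the $L^2$ comparison are bookkeeping.
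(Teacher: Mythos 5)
Your proposal is correct and follows essentially the same route as the paper: expand $\Box_g(\chi\mathcal{G})$ by the product rule so the error is supported in $\Omega_\delta$, control $u_n$ there via the exponential (Agmon-type) estimate of Theorem \ref{exp_estimate}, bootstrap higher derivatives through the eigenvalue ODE while absorbing the resulting polynomial factors of $|n|$ into the exponential, and finally pass to $\norm{\Psi_n}_{L^2(\Sigma_0)}$ using the spatial localization of the quasimode. Your write-up is in fact somewhat more explicit than the paper's (notably in the last comparison step and in tracking the $|n|$-powers), but no genuinely different idea is involved.
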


\begin{proof}
	For functions, $\mathcal{G}$ and $\chi$, we have,
	\begin{align}
		\Box_g( \chi \mathcal{G} ) = \chi(\Box_g \mathcal{G}) + 2 g^{\mu \nu} (\partial_\mu \chi) (\partial_\nu \mathcal{G}) + \mathcal{G} (\Box_g \chi)  
		\label{wavefnproduct}
	\end{align}
	If we let $ \mathcal{G} = e^{- i \omega_n t} e^{i n \widetilde{\psi}} R(r) Y(\theta,\phi) $, then the first term vanishes everywhere since $\mathcal{G}$ solves the wave equation in $\Omega$. Using the fact that $\chi$ is smooth and therefore controlled in $L^\infty$, we have from \eqref{wavefnproduct}, 
	\begin{align}
		\begin{aligned}
			\norm{\Box_g \Psi_n}_{L^2(\Sigma_t \cap \Omega_\delta )} & \lesssim \norm{u_n}_{H^1(\Sigma_t \cap \Omega_\delta )} \\ & \lesssim \norm{u_n}_{H^1(\Sigma_0 \cap \Omega_\delta )} 
		\end{aligned}
	\end{align} 
	Note that the $L^2$-norm of all the other eigenfunctions in \eqref{sep_fn} can be bounded. Using this with Theorem \ref{exp_estimate}, we get
	\begin{align}
		\norm{\Box_g \Psi_n}_{L^2(\Sigma_t \cap \Omega_\delta )} & \leq C e^{-C|n|}\norm{\Psi_n}_{L^2(\Sigma_0 \cap \Omega)} 
	\end{align}
	which can be written as
	\begin{align}
		\norm{\Box_g \Psi_n}_{L^2(\Sigma_t \cap \Omega_\delta )} & \leq C e^{-C|n|}\norm{\Psi_n}_{L^2(\Sigma_0 )} 
	\end{align} 
	owing to the spatial localization of quasimodes. 
	To get bounds on the higher derivatives, let us make the following observations
	\begin{enumerate}
		\item We need only be concerned with $r-$derivatives of $\Box_g (\Psi_n)$ as other eigenfunctions are bounded in $L^\infty$ (as they are analytic).
		\item $\partial_r(\Box_g \mathcal{G})$ vanishes and hence the first term is vanishes. 
		\item The second and third term contain higher derivatives  of $u$. This can be bounded using the eigenvalue equation.
	\end{enumerate}
	We hence deduce that
	\begin{align}
		\norm{\Box_g \Psi_n}_{H^k(\Sigma_t \cap \Omega_\delta )} & \leq C_k e^{-C_k |n|}\norm{\Psi_n}_{L^2(\Sigma_0 )} .
	\end{align} 
\end{proof}
\subsection{Duhamel's principle} Here we adapt the standard construction of an inhomogeneous solution to the wave equation from a homogeneous one.  Suppose  $P(t,\bx;s)$ is the solution to the following initial value problem for $t>s$.
\begin{align}
\begin{aligned}
\Box_g P(t,\bx;s)(f_1,f_2) &= 0, \\ 
P(t,\bx;s) (f_1,f_1)|_{\Sigma_s} = f_1 \,,& \,\, \partial_t P(t,\bx,;s) (f_1,f_1)|_{\Sigma_s} = f_2  
\end{aligned}
\end{align}  In other words, $P(t,\bx; s)(u_0,u_1)$ is the solution of the homogeneous wave equation with initial data $(u_0, u_1)$ prescribed on the spatial hypersurface $t=s$.  Note that it is sufficient that $u_0, u_1 \in H^1_{loc}(\Sigma)$ for a solution to exist and be unique.  Now consider the function
\begin{align}
	\Psi(t,\bx) = P(t,\bx;0)(\Phi_0,\Phi_1) + \frac{1}{2} \int_{0}^{t} P(t,\bx;s)(0,(g^{00})^{-1}F(s,\bx)) \, \td s
\end{align}
%where \begin{align}
%\Box_g  P(t,x;0) &=0 \nonumber \\
%P(0,x,0 ) = u_0 \,,\,& \partial_t P(0,x;0) = u_1 
%\end{align}
\begin{claim}
	$\Psi(t,\bx)$ solves the following initial value problem (inhomogeneous wave equation).	
	\begin{align}
	\begin{aligned}
	\Box_g \Psi (t,\bx)  &= F(t,\bx) \\ \Psi(\bx,0) = \Phi_0(\bx) , \quad \, & \, \partial_t \Psi (0,\bx) = \Phi_1(\bx)
	\end{aligned}
	\end{align}	
\end{claim}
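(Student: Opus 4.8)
The plan is to run the classical Duhamel argument. Write $\Psi = \Psi_{\mathrm{hom}} + \Psi_{\mathrm{inh}}$ with
\[
\Psi_{\mathrm{hom}}(t,\bx) := P(t,\bx;0)(\Phi_0,\Phi_1), \qquad
\Psi_{\mathrm{inh}}(t,\bx) := \frac{1}{2}\int_{0}^{t} P(t,\bx;s)\bigl(0,(g^{00})^{-1}F(s,\bx)\bigr)\,\td s .
\]
By the defining property of $P$, the homogeneous piece satisfies $\Box_g\Psi_{\mathrm{hom}}=0$ and has Cauchy data $(\Phi_0,\Phi_1)$ on $\Sigma_0$, so it suffices to prove that $\Psi_{\mathrm{inh}}$ solves $\Box_g\Psi_{\mathrm{inh}}=F$ with vanishing data on $\Sigma_0$; summing the two contributions then yields the claim. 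For fixed $s$ I abbreviate $P_s(t,\bx):=P(t,\bx;s)\bigl(0,(g^{00})^{-1}F(s,\bx)\bigr)$, so that $\Box_g P_s=0$ for $t>s$, together with the boundary relations $P_s|_{\Sigma_s}=0$ and $\partial_t P_s|_{\Sigma_s}=(g^{00})^{-1}F(s,\cdot)$.

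First I would verify the initial data of $\Psi_{\mathrm{inh}}$. The integral is empty at $t=0$, hence $\Psi_{\mathrm{inh}}(0,\bx)=0$. Differentiating once in $t$ under the integral sign (Leibniz rule) produces a boundary contribution $\tfrac{1}{2}P_s(t,\bx)|_{s=t}$, which vanishes since $P_s|_{\Sigma_s}=0$; therefore $\partial_t\Psi_{\mathrm{inh}}(t,\bx)=\tfrac{1}{2}\int_0^t\partial_t P_s(t,\bx)\,\td s$, and in particular $\partial_t\Psi_{\mathrm{inh}}(0,\bx)=0$. Thus $\Psi$ carries the prescribed data $(\Phi_0,\Phi_1)$ on $\Sigma_0$.

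Next I would compute $\Box_g\Psi_{\mathrm{inh}}$ by writing the wave operator in the coordinates $(t,\bx)$ in the form $\Box_g = g^{tt}\,\partial_t^2 + \mathcal{L}$, where $\mathcal{L}$ collects the remaining terms (the second-order spatial part, the mixed $\partial_t\partial_{x^i}$ terms, and the first-order terms), each involving at most one $t$-derivative; this splitting is immediate from the explicit expression for $\Box_g$ in \S\ref{sepvar}. Because $\mathcal{L}$ is at most first order in $\partial_t$, commuting it through the integral produces only the boundary term $P_s|_{s=t}=0$, so $\mathcal{L}\Psi_{\mathrm{inh}}=\tfrac{1}{2}\int_0^t\mathcal{L}P_s\,\td s$. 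For the principal time part, differentiating the expression for $\partial_t\Psi_{\mathrm{inh}}$ obtained above once more gives the boundary term $\tfrac12\,\partial_t P_s(t,\bx)|_{s=t}=\tfrac12(g^{00})^{-1}F(t,\bx)$ together with $\tfrac12\int_0^t\partial_t^2 P_s\,\td s$; it is precisely here that the relation $\partial_t P_s|_{\Sigma_s}=(g^{00})^{-1}F$ enters, and the constant in the definition of $\Psi$ is chosen so that, after multiplication by $g^{tt}=g^{00}$, this contribution reproduces the source $F(t,\bx)$. Collecting the pieces, $\Box_g\Psi_{\mathrm{inh}} = F + \tfrac12\int_0^t\Box_g P_s\,\td s = F$, since $\Box_g P_s\equiv 0$ on the relevant range of $t$. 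Combined with $\Box_g\Psi_{\mathrm{hom}}=0$ and the initial-data check, this establishes the claim.

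The one step that requires care — and the only real obstacle — is the justification of differentiation under the integral sign. This needs the solution map $(u_0,u_1)\mapsto P(\cdot\,;s)(u_0,u_1)$, and its first two $t$-derivatives, to depend continuously on the Cauchy data (with bounds uniform in $s$ over compact time intervals) in the appropriate local energy norms; this follows from well-posedness for $\Box_g$ and the uniform energy estimates recorded in \S\ref{sec:boundedness}, together with the fact that the data $\bigl(0,(g^{00})^{-1}F(s,\cdot)\bigr)$ vary smoothly with $s$. In the intended application $F=\Box_g\Psi_n$ is smooth and of compact spatial support, so all regularity hypotheses are met and every manipulation above is legitimate; the algebraic identities relying on the Leibniz rule and $\Box_g P_s=0$ are then routine.
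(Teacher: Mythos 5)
Your strategy coincides with the paper's: split off the homogeneous piece, apply the Leibniz rule to the Duhamel integral, and use $P_s|_{\Sigma_s}=0$, $\partial_t P_s|_{\Sigma_s}=(g^{00})^{-1}F(s,\cdot)$ and $\Box_g P_s=0$; your explicit check of the Cauchy data and your remark on justifying differentiation under the integral sign are fine and in fact more careful than what the paper records. There is, however, a constant inconsistency at the decisive step. Your (correct) computation produces exactly one boundary term from the principal time part, namely $\tfrac12\,\partial_t P_s(t,\bx)\big|_{s=t}=\tfrac12 (g^{00})^{-1}F(t,\bx)$, and multiplying by $g^{tt}=g^{00}$ this yields $\tfrac12 F(t,\bx)$, not $F(t,\bx)$; the sentence claiming that ``the constant in the definition of $\Psi$ is chosen so that this contribution reproduces the source'' contradicts your own Leibniz computation. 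With the factor $\tfrac12$ present in the definition of $\Psi$ one obtains $\Box_g\Psi=\tfrac12 F$, so for the claim to hold as stated the $\tfrac12$ must be dropped (equivalently, the inhomogeneous data must be $2(g^{00})^{-1}F$).

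It is worth knowing that the paper's own computation reaches $F$ only because its expansion lists the boundary term $\tfrac{g^{00}}{2}\partial_t P(t,\bx;t)$ twice: it implicitly replaces the total derivative $\tfrac{\td}{\td t}\bigl[P(t,\bx;t)\bigr]$, which vanishes identically because $P(t,\bx;t)\equiv 0$, by the partial derivative $(\partial_t P)(t,\bx;t)$. The single-boundary-term version you wrote is the correct one, and it exposes that the $\tfrac12$ in the displayed definition of $\Psi$ is spurious. The discrepancy is only an overall constant and is immaterial for the subsequent decay argument (the exponentially small error is merely rescaled), but as written your proof asserts an identity that its own computation does not deliver; fix the constant and the argument is complete, granted the standard well-posedness and continuity-in-$s$ facts you invoke to differentiate under the integral.
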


\begin{proof} 
	\begin{align}
		\begin{aligned}
			\Box_g \Psi = & \, g^{a0} (\partial_a \partial_0 \Psi) + g^{ai}(\partial_a \partial_i \Psi) - g^{ab} \Gamma^0_{ab} \partial_0 \Phi - g^{ab} \Gamma^i_{ab} \partial_i \Psi \nonumber \\ 
			= & \, \frac{g^{00}}{2} \partial_t P(t,\bx;t) + \frac{g^{0i}}{2} \partial_i P(t,\bx;t) + \frac{g^{00}}{2} \partial_t P(t,\bx;t)  + \frac{g^{00}}{2} \int_{0}^{t} \partial^2_t P \td s + \frac{g^{0i}}{2} \int_{0}^{t} \partial_i \partial_0 P \td s  \nonumber \\ &+ \frac{g^{0i}}{2} \partial_i P(t,\bx; t) + \frac{g^{0i}}{2} \int_{0}^{t} \partial_0 \partial_i P \td s + \frac{g^{ij}}{2} \int_{0}^{t} \partial_i \partial_j P \td s - \frac{1}{2} \Gamma^0_{ab} g^{ab} P(t,\bx;t) - \frac{1}{2} g^{ab} \Gamma^c_{ab} \int_{0}^{t} \partial_c P \td s  \\  = & \, g^{00} \partial_t P(t,\bx;t) + g^{0i} \partial_i P(t,\bx;t) - 
			\frac{1}{2} g^{ab} \Gamma^0_{ab} P(t,\bx;t)  + \frac{1}{2} \int_{0}^{t} \Box_g P\left(0,(g^{00})^{-1} F(s,\bx)\right) \td s \\ = & \, F(t,\bx) , 
		\end{aligned}
	\end{align}
	where we used that
	$ \Box_g P = 0 \,, \, \, P(t,\bx;t) =0 \,, \, \mbox{ and } \,\, \partial_t P(t,\bx;t) = (g^{00})^{-1} F(t,\bx) . 
	$  \end{proof}
\subsection{Bound on the uniform decay rate}  We have constructed quasimodes, namely, approximate solutions to the wave equation  $\Box_g \Psi_n = \textrm{err}_n (\Psi_n)$ with compactly supported  initial data $(\Psi_n(0,\bx), \partial_t \Psi_n(0,\bx)$. We have also seen that the error can be made exponentially small as $ |n| \to \infty$. Now consider a solution of the homogeneous wave equation with the same initial data
\begin{equation}
	\Box \Phi_n=0 , \qquad \Phi_n(0,\bx) = \Psi_n(0,\bx), \qquad \partial_t \Phi_n(0,\bx) = \partial_t \Psi_n(0,\bx). 
\end{equation}  Using Duhamel's principle we have
\begin{equation}
	\Psi_n(t,\bx) = \Phi_n(t,\bx) + \frac{1}{2} \int_0^t P(t,\bx; s)(0, (g^{00})^{-1} \textrm{err}_n(\Psi_n)) \; \td s
\end{equation} where $P(t,\bx;s)$ is a solution to the homogeneous wave equation described above.  In terms of the `local' energy integral $E_{t,\Omega}[\Phi]$  measured over $\Omega$ (recall this is quadratic in derivatives of $\Phi$)
\begin{equation}
	E_{t,\Omega}[\Psi_n - \Phi_n] = E_{t,\Omega}\left[\int_0^t P(t,\bx; s) \; \td s \right]
\end{equation} We use the fact that $P(t,\bx;t)=0$ and 
\begin{equation}
	\int_0^t \partial_\alpha P(t,\bx; s)(0, (g^{00})^{-1} \textrm{err}_n(\Psi_n)) \; \td s \leq  t \sup_{s \in [0,t]} |  \partial_\alpha P(s,\bx; s)(0, (g^{00})^{-1} \textrm{err}_n(\Psi_n)) |
\end{equation}
to get, 
\begin{equation}
	\left( E_{t,\Omega} [\Psi_n - \Phi_n] \right )^{1/2} \leq \frac{t}{2} \sup_{s \in [0,t]} \left(E_{t,\Omega}[P] \right)^{1/2}  \leq C t \left(E_{0,\Omega}[P] \right)^{1/2} 
\end{equation} where we used the uniform boundedness of the energy to express the estimate in terms of the energy at $t=0$.  Evaluating the energy of $P(t,\bx;s)$ at $t=0$, we see that
\begin{equation}
	\left(E_{0,\Omega}[P] \right)^{1/2}  \sim \norm{ (g^{00})^{-1} \textrm{err}_n (\Psi_n(0)) }_{L^2(\Omega)} \leq C e^{-C|n|} \norm{ \Psi_n(0)}_{L^2(\Omega)}
\end{equation} where we used the above estimate.  Applying the Poincare inequality we arrive at
\begin{equation}
	\left( E_{t,\Omega} [\Psi_n - \Phi_n] \right )^{1/2} \leq C t e^{-C|n|} \left(E_{0,\Omega}[\Psi_n]\right)^{1/2}. 
\end{equation}  Using the reverse triangle inequality we find
\begin{equation}
	\left| \left(E_{t,\Omega}[\Psi_n]\right)^{1/2} - \left(E_{t,\Omega}[\Phi_n]\right) ^{1/2}\right |  \leq \left(E_{t,\Omega}[\Psi_n - \Phi_n]\right)^{1/2}.
\end{equation} Therefore for all $t \leq \tfrac{1}{2C} e^{C|n|}$ there holds
\begin{equation}
	\frac{1}{2}\left(E_{0,\Omega}[\Psi_n]\right)^{1/2} \leq \left(E_{t,\Omega}[\Phi_n]\right)^{1/2}.
\end{equation} Of course, since by construction $\Psi_n$ vanishes outside of $\Omega$, we can write this as
\begin{equation}
	\left(E_{t,\Omega}[\Phi_n]\right)^{1/2} \geq \frac{1}{2}\left(E_{0}[\Psi_n]\right)^{1/2}.
\end{equation} We now bound the energy of the homogeneous solution $\Phi_n$ from below by a higher order energy.  Note that 
\begin{equation}
	E_2[\Psi_n](0) = E[\Psi_n](0) + \sum_{\alpha=0}^4 E[\partial_\alpha \Psi_n]
\end{equation} Taking derivatives with respect to $t, \widetilde{\psi}$ will simply pull down $C n, n$ respectively (since $\omega = C n$ for some $C$). Derivatives with respect to $\theta, \phi$ will simply yield linear combinations of the charged spherical harmonics $Y$.  Thus all the energies associated to these values of $\alpha$ will atmost be of order  $E[\Psi_n]$.  However,  $e_1 (\partial_r \Psi_n) \sim Y(\theta, \phi) e^{-\omega t + i n \widetilde\psi} u''$. Using the equation satisfied by $u$ we can rewrite $u'' = V(r) u$ . On the other hand $e_1(\Psi) \sim e^{-\omega t + i n \widetilde\psi} u'$.  Using the Poincare inequality we know $\norm{u}_{L^2(\Omega)} \leq C \norm{u'}_{L^2(\Omega)}$.  These considerations imply
\begin{equation}
	E_2[\Psi_n](0) \leq ( C_1 + C_2 n^2 )E[\Psi_n](0)
\end{equation} and in particular for $|n|$ sufficiently large
\begin{equation}
	\left(E[\Psi_n]\right)^{1/2} \geq \frac{C}{|n|} \left(E_2 [ \Psi_n](0)\right)^{1/2}
\end{equation} Similar inequalities will apply with $(n, E_2[\Psi_n](0))$ 
replaced with $(n^{k-1}, E_k[\Psi_n](0))$ for $k>2$ (essentially, additional derivatives will pull down factors of $n$.  Now because by construction $\Phi_n$ has the same initial data as  $\Psi_n$, we have for sufficiently large $|n|$ and  $0 < t \leq e^{C|n|}/ 2C$, $k > 2$,
\begin{equation}
	\left(E_{\Omega}[\Phi_n](t)\right)^{1/2} \geq \frac{C}{|n|^{k-1}} (E_k[\Phi_n](0))^{1/2}. 
\end{equation} The above result prevents the possibility of a local uniform logarithmic decay statement of the form
\begin{equation}
	\limsup_{t\to \infty} \, \delta(t) E_\Omega[\Phi](t) \leq C E[\Phi](0)
\end{equation} where $\delta(t)$ encodes the rate of decay, for all solutions $\Phi$ to the wave equation with suitably regular initial data. Here `uniform' implies that such a decay must hold for any smooth solution.   Setting $\tau_n = e^{C|n|}/2C$,  we obtain 
\begin{equation}
	(\log (2 + \tau_n))^2 \frac{E_\Omega[\Phi_n](\tau_n)} {E_2 [ \Phi_n](0)} \geq C
\end{equation} This produces a sequence $\{(\tau_n, \Phi_n)\}$, $|n| \geq N$ for $N$ sufficiently large, of solutions to the homogeneous wave equation.  We conclude that for some positive constant $C$, 
\begin{equation}
	\limsup_{\tau \to \infty} \sup_{\Phi \neq 0} (\log (2 + \tau))^2 \frac{E_\Omega[\Phi](\tau)} {E_2 [ \Phi](0)} \geq C 
\end{equation} where the supremum is taken over all $\Phi$ that lie in the completion of the set of smooth solutions to the wave equation with compactly supported initial data on the hypersurface $\Sigma_0$, with respect to the norm defined by $E_2$.   An analogous statement holds for $k  \in \mathbb{N}$, there are positive constants $C_k$ such that
\begin{equation}
	\limsup_{\tau \to \infty} \sup_{\Phi \neq 0} (\log (2 + \tau))^{2k} \frac{E_\Omega[\Phi](\tau)} {E_{k+1} [ \Phi](0)} \geq C_k. 
\end{equation} This completes the proof of Theorem \ref{key_theorem}. 
As emphasized by Keir \cite{Keir:2016azt}, it should be noted that the above results do not prevent smooth solutions from decaying faster than logarithmically. It simply states that there will always exist solutions that decay slower. 

\def\bibfont{\scriptsize}
\bibliography{irnew}

\end{document}